\documentclass[11pt,a4paper]{article}

\usepackage[margin=1in]{geometry}
\usepackage[T1]{fontenc}
\usepackage{graphicx}
\usepackage[square,numbers,sort&compress]{natbib}
\usepackage{amsmath,amssymb,amsfonts,amsthm}
\usepackage{algorithm,algorithmic}
\usepackage{placeins}
\usepackage{bm}
\usepackage{dsfont}
\usepackage{bbold}
\usepackage{xcolor}
\usepackage{url}
\usepackage{microtype}
\usepackage[hidelinks,bookmarksnumbered]{hyperref}

\hypersetup{
  pdftitle={Accelerated Push-Pull with Local Updates for Decentralized Stochastic Optimization},
  pdfauthor={Sheng Dong, Weiguo Xia, Huijie Tao},
  pdfsubject={Decentralized stochastic nonconvex optimization over directed networks},
  pdfkeywords={Decentralized optimization, stochastic optimization, directed network, momentum method, local updates}
}

\theoremstyle{plain}
\newtheorem{theorem}{Theorem}
\newtheorem{lemma}{Lemma}
\newtheorem{corollary}{Corollary}
\newtheorem{assumption}{Assumption}
\theoremstyle{remark}
\newtheorem{remark}{Remark}

\renewenvironment{proof}[1][Proof]
  {\par\noindent\textit{#1.}\ \ignorespaces}
  {\par\medskip}

\begin{document}

\title{Accelerated Push--Pull with Local Updates for Decentralized Stochastic Optimization\thanks{This work was supported by the National Natural Science Foundation of China (92567303). The material in this paper was partially presented at the 23rd World Congress of the International Federation of Automatic Control (IFAC 2026), August 23--28, 2026, Busan, Republic of Korea~\citep{Dong2026}.}}
\author{
  Sheng Dong \quad Weiguo Xia\thanks{Corresponding author.} \quad Huijie Tao\\[0.6em]
  \parbox{0.95\textwidth}{\centering\small
    Key Laboratory of Intelligent Control and Optimization for Industrial Equipment\\
    of Ministry of Education and School of Control Science and Engineering,\\
    Dalian University of Technology, Dalian, China\\[0.4em]
    \texttt{dongsheng@mail.dlut.edu.cn},  \texttt{wgxiaseu@dlut.edu.cn},\\\texttt{taohuijie7yx@mail.dlut.edu.cn}
  }
}
\date{}

\maketitle

\begin{abstract}
We consider decentralized stochastic nonconvex optimization over directed networks. Push--Pull provides an efficient decentralized optimization framework for such networks, but in the stochastic setting, gradient variance can slow its convergence and complicate the analysis of linear speedup. Motivated by this observation, we incorporate recursive momentum into stochastic Push--Pull for variance reduction, yielding the accelerated stochastic Push--Pull (ASPP) algorithm. Under mean-squared smoothness, we prove that ASPP improves the convergence rate of stochastic Push--Pull and achieves asymptotic linear speedup with respect to the number of agents. To further reduce communication, we develop LASPP, a local-update variant of ASPP that allows each agent to perform multiple local updates between communication rounds. We show that LASPP achieves an additional asymptotic linear speedup with respect to the number of local steps. With a suitable number of local steps, LASPP reduces communication complexity while maintaining the same stochastic gradient complexity as ASPP. Numerical experiments on real datasets demonstrate the effectiveness of the proposed algorithms.
\end{abstract}

\noindent\textbf{Keywords:} Decentralized optimization; stochastic optimization; directed network; momentum method; local updates

\section{Introduction}

The increasing computational demands of machine learning and the distributed storage of training data motivate the development of distributed optimization methods. We consider $n$ agents, indexed by $\mathcal N:=\{1,\ldots,n\}$, that jointly solve the following stochastic optimization problem
\begin{equation}\label{eq1}
	\min_{x\in\mathbb R^p} f(x):=\frac1n\sum_{i=1}^{n}f_i(x).
\end{equation}
Here, $f_i(x):=\mathbb E_{\xi_i\sim\mathcal D_i}\bigl[F_i(x;\xi_i)\bigr]$ is the local objective function of agent~$i$, and $\xi_i$ is a sample drawn from its local data distribution $\mathcal D_i$. We allow nonconvex local objectives and heterogeneous data distributions across agents. This formulation covers distributed empirical risk minimization, where $x$ represents the model parameters and $F_i(x;\xi_i)$ is the sample loss. If agent~$i$ holds a dataset $\mathcal S_i=\{\xi_{i,s}\}_{s=1}^{N_i}$, taking $\mathcal D_i$ to be the uniform distribution over $\mathcal S_i$ yields $f_i(x)=\frac{1}{N_i}\sum_{s=1}^{N_i}F_i(x;\xi_{i,s})$. 

We focus on decentralized methods, in which each agent computes stochastic gradients from its local data and exchanges information with its neighbors over a communication network. These methods avoid the communication bottleneck and single point of failure associated with a central server \citep{lian2017can,assran2019stochastic}. To accommodate asymmetric information exchange among agents, we consider directed communication networks. Suitably designed directed topologies can also combine sparse connectivity with fast information mixing \citep{song2022communication,you2024b,liang2025understanding}.

Communication overhead can, however, limit the benefits of parallel computation in decentralized training. Even with sparse connectivity, frequent exchanges of high-dimensional model parameters may consume substantial training time \citep{wang2025promise}. For methods that communicate at every iteration, faster convergence can reduce the number of communication rounds needed to reach a prescribed accuracy \citep{huang2025accelerated}. Moreover, when communication is costly relative to local computation, performing multiple local updates between communication rounds offers another natural way to reduce communication overhead \citep{fedavg,mishchenko2022proxskip}. These considerations motivate us to investigate how to accelerate decentralized optimization while reducing communication through local updates over directed networks.

\subsection{Related Work}

A central difficulty in decentralized optimization over directed graphs lies in constructing doubly stochastic mixing matrices. To address this difficulty, Push-Sum-based methods employ column-stochastic mixing \citep{nedic2014distributed}, with Push-DIGing further incorporating gradient tracking into this framework \citep{nedic2017achieving}. Another line of work relies only on row-stochastic mixing \citep{mai2019distributed,xi2018linear}. Both types of methods use auxiliary iterations to compensate for the imbalance induced by directed mixing. The Push--Pull method \citep{pu2020push}, also known as the AB algorithm \citep{xin2018linear}, instead uses a row-stochastic matrix to mix decision variables and a column-stochastic matrix for gradient tracking. This combination avoids auxiliary Perron eigenvector estimation. In deterministic settings, Push--Pull has been shown to converge linearly for smooth strongly convex objectives with a sufficiently small step size. It has also exhibited faster empirical convergence than methods relying solely on row- or column-stochastic mixing \citep{pu2020push,xin2018linear,Nedich03092025}.

Recent studies have investigated stochastic Push--Pull (SPP) and demonstrated its empirical advantages over other decentralized stochastic methods \citep{you2025stochastic,liang2025linear}. Despite this empirical success, the theoretical analysis of SPP is challenging. In particular, a key issue is establishing linear speedup with respect to the number of agents, a property that characterizes the convergence benefit of parallel computation. Since SPP uses row- and column-stochastic mixing simultaneously, a direct application of standard single-step descent analysis can yield an overly large variance contribution in the convergence bound, making linear speedup difficult to establish \citep{liang2025linear}. Recent work addresses this difficulty through a multi-step descent analysis and establishes linear speedup for SPP over strongly connected directed graphs \citep{liang2025linear}. Linear speedup has also been established for Spanning Tree Push--Pull, but its analysis relies on specially constructed spanning-tree mixing matrices rather than general row- and column-stochastic matrices \citep{you2026distributed}. Moreover, although SPP uses gradient tracking to mitigate the effect of data heterogeneity, its update directions remain affected by sampling noise. High sampling variance can therefore slow convergence.

The effects of stochastic gradient variance on both the analysis and convergence of SPP motivate the use of variance reduction. Recursive momentum provides such an approach for stochastic nonconvex optimization \citep{cutkosky2019momentum,tran2022hybrid}. In decentralized optimization, GT-HSGD combines recursive momentum with gradient tracking to obtain improved stochastic gradient complexity under mean-squared smoothness \citep{xin2021hybrid}. It also achieves asymptotic linear speedup with respect to the number of agents, but these guarantees rely on doubly stochastic mixing. More recently, Push-ASGD combines recursive momentum with Push-Sum-based gradient tracking to accelerate stochastic nonconvex optimization over directed networks without requiring doubly stochastic mixing \citep{chen2024accelerated}. However, its analysis does not establish linear speedup with respect to the number of agents.

The decentralized stochastic methods discussed above require communication at every iteration. When communication is costly relative to local computation, performing multiple local updates between communication rounds is a common strategy for reducing communication overhead \citep{alghunaim2024local,ren2026communication,huang2025distributed}. This strategy has been widely adopted in federated learning~\citep{fedavg,cheng2024momentum,zhang2026non}. However, data heterogeneity across agents can cause local iterates to drift apart, potentially offsetting the benefits of less frequent communication \citep{SCAFFOLD}. Decentralized methods such as K-GT \citep{liu2025decentralized} and ST-GT \citep{huang2025beyond} address local drift through modified gradient-tracking updates, but their analyses rely on doubly stochastic mixing matrices. For directed networks, DFedSGPSM \citep{li2025asymmetrically} uses the Push-Sum protocol to support local updates without requiring doubly stochastic mixing matrices. However, its analysis does not establish linear speedup with respect to the number of local update steps. Such a guarantee would show that additional local computation can reduce the number of communication rounds needed to reach a prescribed accuracy.

\subsection{Contributions and Organization}

In this paper, we develop accelerated and communication-efficient decentralized algorithms to find $\epsilon$-accurate first-order stationary points of \eqref{eq1}. The proposed algorithms build on Push--Pull and operate over directed networks without requiring doubly stochastic mixing matrices. Our main contributions are summarized as follows.

\begin{itemize}
	\item We propose Accelerated Stochastic Push--Pull (ASPP), which incorporates recursive momentum into SPP for variance reduction to improve convergence and establish linear speedup in a unified manner. Under mean-squared smoothness, we establish network-independent asymptotic communication and per-agent stochastic gradient complexities of $\mathcal O(n^{-1}\epsilon^{-3})$. ASPP thus improves upon the $\mathcal O(n^{-1}\epsilon^{-4})$ complexity guarantee for SPP \citep{liang2025linear} and achieves asymptotic linear speedup with respect to the number of agents. To our knowledge, ASPP is the first method to achieve these complexity guarantees for decentralized stochastic nonconvex optimization without requiring doubly stochastic mixing.
	
	\item To further reduce communication overhead, we develop Local Accelerated Stochastic Push--Pull (LASPP), which allows each agent to perform $K$ local updates between communication rounds. We establish an asymptotic communication complexity of $\mathcal O(n^{-1}K^{-1}\epsilon^{-3})$, which demonstrates linear speedup with respect to both the number of local steps and the number of agents. With a suitable choice of $K$, LASPP improves the asymptotic communication complexity from $\mathcal O(\epsilon^{-3})$ for ASPP to $\mathcal O(\epsilon^{-2})$ without increasing the per-agent stochastic gradient complexity. To our knowledge, these complexity guarantees have not previously been established for decentralized stochastic nonconvex optimization with local updates over directed networks.
\end{itemize}

These results suggest that variance reduction can benefit decentralized stochastic optimization beyond accelerating convergence. By controlling network-dependent variance contributions, recursive momentum can also help reveal convergence properties that are otherwise difficult to establish over directed networks. With suitably designed local updates, these benefits can be retained with less frequent communication.

This paper extends our preliminary conference version \citep{Dong2026} in four main respects. First, we refine the convergence analysis of ASPP and establish network-independent asymptotic complexity guarantees. Second, we propose LASPP, a local-update variant of ASPP designed to reduce communication overhead, and establish its convergence guarantees. Third, we present additional numerical experiments to demonstrate the effectiveness of the proposed algorithms and support our theoretical findings. Finally, we provide detailed proofs that were omitted from the conference version.

The remainder of this paper is organized as follows. Section~2 presents the preliminaries and assumptions. Sections~3 and~4 present ASPP and LASPP, respectively, together with their convergence analyses. Section~5 reports numerical experiments, and Section~6 concludes the paper.

\textit{Notation}: Let $\mathbb R := (-\infty, +\infty)$, $\mathbb N := \{ 0 , 1, 2, \ldots \}$, and $\mathbb N_+ :=\{1,2,\ldots\}$. $\mathbb R^n$ denotes the $n$-dimensional Euclidean space. For $v \in \mathbb R^p$, $\lVert v \rVert$ and $\lVert v \rVert_1$ denote its Euclidean norm and $l_1$-norm, respectively. For $v_1, v_2 \in \mathbb R^n$, $\langle v_1, v_2 \rangle$ denotes their inner product. For $M \in \mathbb R^{n \times p}$, $\lVert M \rVert_F$ and $\lVert M \rVert_2$ denote its Frobenius norm and spectral norm, respectively. Let $\mathds{1}$ denote an all-ones vector and $\mathbb{0}$ denote an all-zero vector or matrix, both of appropriate dimensions. Let the symbol $\lesssim$ denote inequality up to absolute constants.

\section{Preliminaries and Assumptions}

This section specifies the communication model and the assumptions on the local objectives and stochastic gradients for problem~\eqref{eq1}.

The agents communicate over a directed graph $\mathcal G=(\mathcal N,\mathcal E)$, where $\mathcal E\subseteq\mathcal N\times\mathcal N$ is the set of directed edges. An edge $(j,i)\in\mathcal E$ indicates that agent~$j$ can directly send information to agent~$i$. For a nonnegative matrix $M=[m_{ij}]\in\mathbb R^{n\times n}$, let $\mathcal G_M=(\mathcal N,\mathcal E_M)$ denote its induced graph, where $(j,i)\in\mathcal E_M$ if and only if $m_{ij}>0$. The in-neighbor and out-neighbor sets of agent~$i$ in $\mathcal G_M$ are denoted by $\mathcal N_{M,i}^{\mathrm{in}}:=\{j\in\mathcal N\mid(j,i)\in\mathcal E_M\}$ and $\mathcal N_{M,i}^{\mathrm{out}}:=\{j\in\mathcal N\mid(i,j)\in\mathcal E_M\}$, respectively.

For a query point $x\in\mathbb R^p$, let $g_i(x,\xi_i)$ denote the stochastic gradient available to agent~$i$. We make the following assumptions on the local objectives and stochastic gradients~\citep{cutkosky2019momentum,tran2022hybrid,xin2021hybrid,chen2024accelerated}.

\begin{assumption} \label{assumption1}
	For all $i \in \mathcal N$, the local stochastic gradient is unbiased and has bounded variance, i.e., there exists $\sigma \ge 0$ such that for all $x \in \mathbb R^p$,
	$$
	\mathbb E\bigl[g_i(x, \xi_i) \bigr] = \nabla f_i(x) \:\: \text{and} \:\:
	\mathbb E\bigl[ \lVert g_i(x, \xi_i) - \nabla f_i(x) \rVert^2 \bigr] \le \sigma^2.
	$$
    Moreover, all samples drawn by the agents are mutually independent.
\end{assumption}
\begin{assumption} \label{assumption2}
	For all $i \in \mathcal N$, the local loss function $f_i$ is bounded below, i.e., $f_i(x) \ge \inf_{x \in \mathbb R^p} f_i(x) > - \infty$.
\end{assumption}
Given Assumption~\ref{assumption2}, we let $\Delta_f$ be a constant such that $f_i(x^{(0)}) - \inf_{x \in \mathbb R^p} f_i(x) \le \Delta_f$ for all $i \in \mathcal N$, where $x^{(0)}$ is the starting point.
\begin{assumption} \label{assumption3}
	For all $i \in \mathcal N$, the local stochastic gradient satisfies the mean-squared smoothness condition, i.e., 
	there exists $L > 0$ such that for all $x, y \in \mathbb R^p$,
	$$
	\mathbb E \bigl[\lVert g_i (x, \xi_i) - g_i (y, \xi_i) \rVert^2\bigr] \le L^2 \lVert x - y \rVert^2.
	$$
\end{assumption}
Assumptions~\ref{assumption1} and~\ref{assumption3} imply that each $f_i$ is $L$-smooth, i.e., for all $x,y\in\mathbb R^p$, \[ \lVert\nabla f_i(x)-\nabla f_i(y)\rVert\le L\lVert x-y\rVert. \]

\section{Accelerated Stochastic Push--Pull}

In this section, we introduce the Accelerated Stochastic Push--Pull algorithm for decentralized stochastic nonconvex optimization over directed graphs. We first motivate its algorithmic design and present the resulting algorithm, and then establish its convergence guarantees.

\subsection{Algorithm Design}

To motivate our algorithmic design, we first recall the stochastic Push--Pull (SPP) algorithm \citep{liang2025linear}, whose updates are given by
\begin{subequations}\label{eq2}
	\begin{align}
		y_i^{(t+1)}\! &= \!\sum_{j=1}^{n} b_{ij} y_j^{(t)} \!+ g_i(x_i^{(t)}\!, \xi_i^{(t)}) - g_i(x_i^{(t-1)}\!, \xi_i^{(t-1)}), \label{eq2a} \\
		x_i^{(t+1)} \! &= \! \sum_{j=1}^{n} a_{ij} x_j^{(t)} - \alpha y_i^{(t+1)}. \label{eq2b}
	\end{align}
\end{subequations}
Here, $x_i^{(t)}\in\mathbb{R}^p$ and $y_i^{(t)}\in\mathbb{R}^p$ denote the local decision variable and gradient-tracking variable maintained by agent~$i$ at round~$t$, respectively, and $\alpha>0$ is the step size. The mixing matrices $A=[a_{ij}]$ and $B=[b_{ij}]$ are row-stochastic and column-stochastic, respectively. At each communication round~$t$, agent~$i$ pushes $b_{ji}y_i^{(t)}$ to each out-neighbor $j\in\mathcal{N}_{B,i}^{\mathrm{out}}$, thereby implementing column-stochastic mixing in the gradient-tracking update. It also pulls $x_j^{(t)}$ from each in-neighbor $j\in\mathcal{N}_{A,i}^{\mathrm{in}}$ and combines these variables using the weights $a_{ij}$, thereby implementing row-stochastic mixing in the decision-variable update.

The updates in \eqref{eq2} highlight two main advantages of the Push--Pull architecture. First, through gradient tracking, Push--Pull enables each agent to estimate global gradient information, thereby mitigating the effect of data heterogeneity. Second, by jointly using row- and column-stochastic mixing, Push--Pull avoids the auxiliary iterations for Perron eigenvector estimation required by methods relying solely on row- or column-stochastic mixing. Such iterations may slow convergence, particularly when the communication graph is poorly connected \citep{xin2020general}.

Despite these advantages, SPP still faces two main challenges. First, although SPP uses gradient tracking to mitigate the effect of data heterogeneity, its update directions remain affected by sampling noise in local stochastic gradients. This sampling noise can have high variance and consequently slow convergence. Second, since SPP combines row- and column-stochastic mixing, an additional tracking error term appears in the descent bound. Bounding this term in a conventional single-step descent analysis can yield an overly large stochastic gradient variance term, making linear speedup difficult to establish~\citep{liang2025linear}.

A key observation is that overcoming both challenges requires controlling stochastic gradient variance. A natural idea is therefore to employ variance reduction to address them simultaneously. Following this idea, we incorporate a recursive momentum estimator \citep{cutkosky2019momentum,tran2022hybrid,xin2021hybrid} into SPP, yielding the Accelerated Stochastic Push--Pull (ASPP) algorithm. In particular, at iteration $t\geq1$, each agent~$i$ constructs the following recursive momentum estimator
\begin{equation}\label{eq3} 
	m_i^{(t)} = g_i(x_i^{(t)}, \xi_i^{(t)}) + \beta \bigl(m_i^{(t-1)} - g_i(x_i^{(t-1)}, \xi_i^{(t)})\bigr),
\end{equation}
where $\beta\in[0,1)$ is the momentum parameter and both stochastic gradients are computed from the same sample $\xi_i^{(t)}$. Using this estimator in place of the stochastic gradient, agent~$i$ performs the following Push--Pull updates
$$
\begin{aligned}
	y_i^{(t+1)}\! &= \!\sum_{j=1}^{n} b_{ij} y_j^{(t)} + m_i^{(t)} - m_i^{(t-1)}, \\
	x_i^{(t+1)} \! &= \! \sum_{j=1}^{n} a_{ij} x_j^{(t)} - \alpha y_i^{(t+1)}. 
\end{aligned}
$$
The complete procedure over $T$ communication rounds is presented in Algorithm~\ref{alg1}. Here, we initialize all agents at a common point $x^{(0)}\in\mathbb{R}^p$, and each agent computes $m_i^{(0)}$ using an initial mini-batch of size $b_0$.

By using recursive momentum estimates in its tracking update, ASPP mitigates the effect of sampling noise on its update directions, thereby enabling a faster convergence rate. Recursive momentum also allows us to control the variance contribution of the tracking error in a single-step descent analysis. In the next subsection, we show how this variance control leads to a faster convergence rate than SPP and linear speedup with respect to the number of agents.

	\begin{algorithm}
		\caption{Accelerated Stochastic Push--Pull}
		\begin{algorithmic}[1]\label{alg1}
		\normalsize
			\setlength{\itemsep}{0.2em}
			\STATE \textbf{Initialization:} $x_i^{(0)} = x^{(0)}$, $y_i^{(0)} = \mathbb{0}$, $m_i^{(-1)} = \mathbb{0}$
			\STATE \hspace{\algorithmicindent}$m_i^{(0)} = \frac{1}{b_0} \sum_{r=1}^{b_0} g_i(x_i^{(0)}, \xi_{i,r}^{(0)})$
			\STATE \hspace{\algorithmicindent}$y_i^{(1)} = \sum_{j=1}^{n} b_{ij} y_j^{(0)} + m_i^{(0)} - m_i^{(-1)}$
			\STATE \hspace{\algorithmicindent}$x_i^{(1)} = \sum_{j=1}^{n} a_{ij} x_j^{(0)} - \alpha y_i^{(1)}$
			\FOR{ $t = 1, 2, \ldots, T-1$ }
			\setlength{\itemsep}{0.2em}
			\STATE
			$
			m_i^{(t)} \! = g_i(x_i^{(t)}, \xi_i^{(t)}) + \beta \bigl(m_i^{(t-1)} - g_i(x_i^{(t-1)}, \xi_i^{(t)})\bigr)
			$
			\STATE $y_i^{(t+1)} = \sum_{j=1}^{n} b_{ij} y_j^{(t)} + m_i^{(t)} - m_i^{(t-1)}$
			\STATE $x_i^{(t+1)} = \sum_{j=1}^{n} a_{ij} x_j^{(t)} - \alpha y_i^{(t+1)}$
			\ENDFOR
		\end{algorithmic}
	\end{algorithm}

We close this subsection with the following assumption on the mixing matrices $A$ and $B$, which is standard in the analysis of Push--Pull methods \citep{xin2018linear,Nedich03092025,liang2025linear}.
\begin{assumption} \label{assumption4}
	The mixing matrices $A$ and $B$ are row-stochastic and column-stochastic, respectively, and both have positive diagonal entries. Moreover, the directed graphs $\mathcal{G}_A$ and $\mathcal{G}_B$ induced by $A$ and $B$, respectively, are strongly connected.
\end{assumption}
To satisfy Assumption~\ref{assumption4}, we can take $\mathcal G_A=\mathcal G_B=\mathcal G$ when the communication graph $\mathcal G$ is strongly connected and has a self-loop at each node. The weights in $A$ and $B$ can then be constructed locally using in-degree and out-degree information, respectively. The following lemma will be used in the subsequent analysis; see Proposition~1 in~\citep{liang2025linear}.
\begin{lemma} \label{lemma1}
	Under Assumption~\ref{assumption4}, there exist entrywise positive Perron vectors $\pi_A$ and $\pi_B$ such that $\pi_A^{\top} A = \pi_A^{\top}$, $B \pi_B = \pi_B$, $\lVert\pi_A \rVert_1 = \lVert\pi_B \rVert_1 = 1$, and $\pi_A^{\top} \pi_B \in (0,1]$. Moreover, there exist constants $p_A, p_B > 0$ and $\lambda_A, \lambda_B \in [0, 1)$ such that, for all $t \ge 0$,
	\begin{equation} \label{lemmaeq1}
		\lVert A^t - \mathds{1} \pi_A^\top \rVert_2 \le p_A \lambda_A^t, \quad \lVert B^t - \pi_B \mathds{1}^\top \rVert_2 \le p_B \lambda_B^t.
	\end{equation}
\end{lemma}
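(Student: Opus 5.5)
The plan is Perron--Frobenius theory for primitive matrices, followed by a spectral-radius argument for the geometric decay.

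\emph{Existence and positivity of the Perron vectors.} Under Assumption~\ref{assumption4}, $A$ is nonnegative and irreducible, since $\mathcal G_A$ is strongly connected. Its diagonal is positive, so $A$ is also aperiodic, hence primitive; the same holds for $B$. Because $A$ is row-stochastic, $A\mathds 1=\mathds 1$, so $1$ is an eigenvalue. Every row sum equals $1$, so $\rho(A)\le\lVert A\rVert_\infty=1$, and therefore $\rho(A)=1$. By the Perron--Frobenius theorem for irreducible matrices, $\rho(A)$ is a simple eigenvalue with an entrywise positive left eigenvector. Normalizing it in $\ell_1$ gives $\pi_A$ with $\pi_A^\top A=\pi_A^\top$. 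Applying the same argument to $B^\top$, which is row-stochastic, gives $\pi_B>0$ with $B\pi_B=\pi_B$ and $\lVert\pi_B\rVert_1=1$.

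\emph{The range of $\pi_A^\top\pi_B$.} Both vectors are entrywise positive, so $\pi_A^\top\pi_B>0$. Each entry of $\pi_B$ is at most $1$, so $\pi_A^\top\pi_B\le\lVert\pi_A\rVert_1\max_i(\pi_B)_i\le 1$.

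\emph{Spectral gap.} Primitivity implies that every other eigenvalue of $A$ satisfies $|\lambda|<1$. Set $R_A:=\mathds 1\pi_A^\top$. Since $\pi_A^\top\mathds 1=1$, the matrix $R_A$ is a rank-one projection, and it commutes with $A$:
\[
AR_A=\mathds 1\pi_A^\top=R_A A .
\]
From these identities, induction gives $A^t-R_A=(A-R_A)^t$ for $t\ge1$. The eigenvalues of $A-R_A$ are those of $A$ with the Perron eigenvalue $1$ replaced by $0$. This is seen by restricting to the invariant decomposition $\mathbb R^n=\mathrm{span}(\mathds 1)\oplus\ker\pi_A^\top$. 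Hence $\rho(A-R_A)<1$.

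\emph{Geometric decay.} Fix any $\lambda_A\in(\rho(A-R_A),1)$; the choice $\lambda_A=0$ is allowed only in the degenerate case $n=1$. By Gelfand's formula, or equivalently by taking an induced norm with $\lVert A-R_A\rVert_*\le\lambda_A$, the sequence $\lVert (A-R_A)^t\rVert_2/\lambda_A^t$ is bounded. Define
\[
p_A:=\sup_{t\ge0}\frac{\lVert (A-R_A)^t\rVert_2}{\lambda_A^{t}},
\]
which is finite. The case $t=0$ is included, since $\lVert I-R_A\rVert_2$ is finite, so $p_A\ge\lVert I-\mathds 1\pi_A^\top\rVert_2>0$ whenever $n>1$. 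The argument for $B$ is identical, using $R_B=\pi_B\mathds 1^\top$.

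\emph{Main point to check.} The one step needing care is that $A^t-R_A=(A-R_A)^t$ has spectral radius strictly less than $1$. This rests on the projection identities above together with the simplicity and dominance of the Perron eigenvalue, which require primitivity and not just irreducibility. That is exactly where the positive diagonal entries in Assumption~\ref{assumption4} are used.
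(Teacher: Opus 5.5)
Your argument is correct in substance, but it does not follow the paper's route, because the paper gives no proof of Lemma~\ref{lemma1}: it imports the statement as Proposition~1 of \citep{liang2025linear}. Your proposal instead derives it directly from Perron--Frobenius theory for primitive matrices. Irreducibility comes from strong connectivity and aperiodicity from the positive diagonal. You then use the commuting rank-one projection $R_A=\mathds 1\pi_A^\top$ with $A^t-R_A=(A-R_A)^t$ for $t\ge1$, the invariant splitting $\mathrm{span}(\mathds 1)\oplus\ker\pi_A^\top$ to get $\rho(A-R_A)<1$, and Gelfand's formula for the geometric bound. The citation is shorter. Your version makes explicit which parts of Assumption~\ref{assumption4} are needed and shows that any $\lambda_A$ strictly between the second-largest eigenvalue modulus of $A$ and $1$ works, with a non-explicit $p_A$. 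That is all the later analysis uses, since only the finiteness of $s_A$ and $s_B$ enters.

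There is one small slip to fix. Your definition $p_A:=\sup_{t\ge0}\lVert(A-R_A)^t\rVert_2/\lambda_A^t$ has $t=0$ term $\lVert I\rVert_2=1$. The $t=0$ case of \eqref{lemmaeq1} instead needs $\lVert I-\mathds 1\pi_A^\top\rVert_2$, because the identity $A^t-R_A=(A-R_A)^t$ fails at $t=0$. Since $I-\mathds 1\pi_A^\top$ is an oblique projection, its spectral norm exceeds $1$ unless $\pi_A$ is uniform. So the literal definition does not guarantee the $t=0$ inequality. Define instead $p_A:=\sup_{t\ge0}\lVert A^t-\mathds 1\pi_A^\top\rVert_2/\lambda_A^t$, enlarging it to a positive value if $n=1$.

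Your aside that $\lambda_A=0$ is admissible only for $n=1$ is also inaccurate. It works whenever $A-\mathds 1\pi_A^\top$ is nilpotent, for example when $A=\mathds 1\pi_A^\top$. This plays no role in the proof.
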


\subsection{Convergence Analysis}

In this subsection, we establish the convergence guarantees for ASPP. After introducing the necessary preliminaries, we present the analysis leading to accelerated convergence and linear speedup with respect to the number of agents.

\subsubsection{Preliminaries}

We first introduce some notation used throughout the analysis. Let
$$\mathbf{x}^{(t)} := [x^{(t)}_1, x^{(t)}_2, \ldots, x^{(t)}_n]^\top \in \mathbb R^{n \times p},$$
$$\mathbf{y}^{(t)} := [y^{(t)}_1, y^{(t)}_2, \ldots, y^{(t)}_n]^\top \in \mathbb R^{n \times p},$$
$$\mathbf{m}^{(t)} := [m^{(t)}_1, m^{(t)}_2, \ldots, m^{(t)}_n]^\top \in \mathbb R^{n \times p},$$
$${\nabla f}(\mathbf x^{(t)}) := [{\nabla f_1}(x_1^{(t)}), {\nabla f_2}(x_2^{(t)}), \ldots, {\nabla f_n}(x_n^{(t)})]^\top \in \mathbb R^{n \times p}.$$
We also define
$$\hat x^{(t)} := \pi_A^\top {\mathbf x}^{(t)}, \:\:\: {\bar {\mathbf x}}^{(t)} := \mathds{1} {\hat x ^{(t)}}, \:\:\: {\bar {\mathbf y}}^{(t)} := \pi_B \mathds{1}^\top {\mathbf y}^{(t)},$$
$$\bar m^{(t)} := n^{-1} \mathds{1}^\top {\mathbf m}^{(t)}, \:\:\: {\overline {\nabla f}} (\mathbf x^{(t)}) := n^{-1} \mathds{1}^\top {\nabla f}(\mathbf x^{(t)}).$$
Here, $\hat{x}^{(t)}$ denotes the consensus value of the local iterates under the row-stochastic mixing matrix $A$, and thus serves as the desired output of the algorithm at round~$t$.

To quantify the effect of the mixing matrices, we introduce the following parameters~\citep{liang25c,liang2025linear}
$$
s_A := \sum_{\tau=0}^{\infty} \lVert A^\tau - \mathds{1} \pi_A^\top\rVert_2, \quad s_B := \sum_{\tau=0}^{\infty} \lVert B^\tau - \pi_B \mathds{1}^\top\rVert_2, \quad c := n \pi_A^\top \pi_B.
$$
Both $s_A$ and $s_B$ are finite by the exponential decay established in Lemma~\ref{lemma1}. We assume $n\ge2$, which implies $s_A,s_B\ge1$. Proposition~2 in~\citep{liang2025linear} relates $s_A$, $s_B$, and $c$ to other metrics used to characterize mixing over directed graphs. For convenience, we further define $h\ge0$ by
\[h^2:=ns_A^2+ns_B^2+s_A^2s_B^2.\]

For $t\ge0$, the tracking and decision variable updates of ASPP can be written in matrix form as
\begin{align} \mathbf y^{(t+1)}&=B\mathbf y^{(t)}+\mathbf m^{(t)}-\mathbf m^{(t-1)},\label{B.0.1}\\ \mathbf x^{(t+1)}&=A\mathbf x^{(t)}-\alpha\mathbf y^{(t+1)}.\label{B.0.2} \end{align}
Since $B$ is column-stochastic and $\mathbf y^{(0)}=\mathbf m^{(-1)}=\mathbb{0}$, induction on the tracking recursion gives
\begin{equation}\label{B.0.4} \mathds{1}^\top\mathbf y^{(t+1)}=\mathds{1}^\top\mathbf m^{(t)}. \end{equation}
Combining this identity with $\pi_A^\top A=\pi_A^\top$ and the definitions above yields
\begin{equation}\label{B.0.3} \begin{aligned}
		\hat x^{(t+1)}
		&= \hat x^{(t)} -\alpha\pi_A^\top\mathbf y^{(t+1)}\\
		&= \hat x^{(t)} -c\alpha\bar m^{(t)} -\alpha\pi_A^\top\bigl(\mathbf y^{(t+1)}-\bar{\mathbf y}^{(t+1)}\bigr).
	\end{aligned} \end{equation}

Since $f$ is nonconvex, we consider convergence to first-order stationary points. In the stochastic setting, we say that an algorithm finds an $\epsilon$-accurate first-order stationary point after $T$ communication rounds if
\[\frac1T\sum_{t=0}^{T-1}\mathbb E \bigl[\lVert\nabla f(\hat x^{(t)})\rVert^2\bigr]\le\epsilon^2.\]

\subsubsection{Convergence Guarantees}

In this subsection, we present the convergence analysis of ASPP. To derive a descent inequality for the global objective $f$, we examine the recursion for $\hat x^{(t)}$ in \eqref{B.0.3}. The recursion contains an additional tracking error term arising from the simultaneous use of row- and column-stochastic mixing. For SPP, directly bounding this term in a conventional single-step descent analysis can yield an excessively large variance contribution, obscuring linear speedup. This difficulty has been addressed through a multi-step descent analysis \citep{liang2025linear}. Here, we exploit the recursive momentum structure of ASPP to control the variance terms in a single-step descent analysis, thereby establishing linear speedup while also obtaining a faster convergence rate than SPP. 

Using the $L$-smoothness of $f$ and $f_i$, we first obtain the following descent inequality.

\begin{lemma} \label{lemma2}
	Suppose Assumptions~\ref{assumption1}--\ref{assumption4} hold. If $0 <\alpha \le \frac{1}{4cL}$, then the following inequality holds
	\begin{equation} \label{lemmaeq2}
		\begin{aligned}
			&\sum_{t=0}^{T-1} \mathbb E \bigl[ \lVert\nabla f({\hat x}^{(t)}) \rVert^2 \bigr] \le \: \frac{4 \Delta_f}{c \alpha} + \frac{4L^2}{n} \sum_{t=0}^{T-1} \mathbb E \bigl[ \lVert {\mathbf{x}}^{(t+1)} - \bar {\mathbf{x}}^{(t+1)} \rVert_F^2\bigr]\\
			&\quad + 4 \sum_{t=0}^{T-1} \mathbb E \bigl[ \lVert {\bar m}^{(t)} - {\overline {\nabla f}} (\mathbf x^{(t)}) \rVert^2\bigr]\\
			&\quad + \frac{5}{c^2} \sum_{t=0}^{T-1} \mathbb E \bigl[ \lVert {\mathbf{y}}^{(t+1)} - \bar {\mathbf{y}}^{(t+1)} \rVert_F^2\bigr] - \sum_{t=0}^{T-1} \mathbb E \bigl[ \lVert {\bar m}^{(t)} \rVert^2\bigr].
		\end{aligned}
	\end{equation}
\end{lemma}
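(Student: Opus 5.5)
The plan is a standard single-step descent argument on the $A$-weighted average $\hat x^{(t)}$, based on the recursion \eqref{B.0.3}. Write $\mathbf e^{(t+1)}:=\mathbf y^{(t+1)}-\bar{\mathbf y}^{(t+1)}$ and $d^{(t)}:=c\alpha\bar m^{(t)}+\alpha\pi_A^\top\mathbf e^{(t+1)}$, so that $\hat x^{(t+1)}=\hat x^{(t)}-d^{(t)}$. Since each $f_i$ is $L$-smooth, $f$ is $L$-smooth, and therefore
\[
f(\hat x^{(t+1)})\le f(\hat x^{(t)})-\langle\nabla f(\hat x^{(t)}),d^{(t)}\rangle+\tfrac L2\lVert d^{(t)}\rVert^2 .
\]
Two elementary facts will be used repeatedly. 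First, $\lVert\pi_A^\top\mathbf e\rVert\le\lVert\pi_A\rVert_2\lVert\mathbf e\rVert_F\le\lVert\mathbf e\rVert_F$, because $\lVert\pi_A\rVert_2\le\lVert\pi_A\rVert_1=1$ (Lemma~\ref{lemma1}). Second, $Lc\alpha\le 1/4$ by the step-size condition.

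I would bound the three pieces as follows.

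\emph{Momentum inner product.} Use the polarization identity
\[
-\langle\nabla f,c\alpha\bar m\rangle=-\tfrac{c\alpha}{2}\bigl(\lVert\nabla f\rVert^2+\lVert\bar m\rVert^2-\lVert\nabla f-\bar m\rVert^2\bigr).
\]

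\emph{Tracking-error inner product.} Apply Young's inequality in the form
\[
-\langle\nabla f,\alpha\pi_A^\top\mathbf e\rangle\le\tfrac{c\alpha}{4}\lVert\nabla f\rVert^2+\tfrac{\alpha}{c}\lVert\mathbf e\rVert_F^2 .
\]

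\emph{Quadratic term.} Use $\tfrac L2\lVert d\rVert^2\le Lc^2\alpha^2\lVert\bar m\rVert^2+L\alpha^2\lVert\mathbf e\rVert_F^2$.

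Collecting terms, the coefficient of $\lVert\nabla f(\hat x^{(t)})\rVert^2$ is $-c\alpha/4$. The coefficient of $\lVert\bar m^{(t)}\rVert^2$ is $-\tfrac{c\alpha}{2}+Lc^2\alpha^2\le-\tfrac{c\alpha}{4}$. The coefficient of $\lVert\mathbf e^{(t+1)}\rVert_F^2$ is $\tfrac{\alpha}{c}+L\alpha^2\le\tfrac{\alpha}{c}+\tfrac{\alpha}{4c^2}\cdot\tfrac1{\alpha}\cdot\alpha$, which after multiplication by $4/(c\alpha)$ gives at most $4/c^2+4L\alpha/c\le 5/c^2$.

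Next, I would split the residual $\tfrac{c\alpha}{2}\lVert\nabla f(\hat x^{(t)})-\bar m^{(t)}\rVert^2$ through $\overline{\nabla f}(\mathbf x^{(t)})$:
\[
\lVert\nabla f(\hat x^{(t)})-\bar m^{(t)}\rVert^2\le 2\lVert\bar m^{(t)}-\overline{\nabla f}(\mathbf x^{(t)})\rVert^2+2\lVert\overline{\nabla f}(\mathbf x^{(t)})-\nabla f(\hat x^{(t)})\rVert^2 .
\]
For the second term, note $\nabla f(\hat x)=\tfrac1n\sum_i\nabla f_i(\hat x)$. Jensen's inequality and $L$-smoothness of each $f_i$ then give the bound $\tfrac{L^2}{n}\lVert\mathbf x^{(t)}-\bar{\mathbf x}^{(t)}\rVert_F^2$. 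After multiplying by $4/(c\alpha)$, these two pieces carry coefficients $4$ and $4L^2/n$, matching the statement.

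Finally, I would take expectations, sum over $t=0,\dots,T-1$, telescope, and multiply through by $4/(c\alpha)$. The telescoped term is $f(\hat x^{(0)})-f(\hat x^{(T)})$. Since $\hat x^{(0)}=x^{(0)}$ and $\inf f\ge\frac1n\sum_i\inf f_i$, it is at most $\frac1n\sum_i\bigl(f_i(x^{(0)})-\inf f_i\bigr)\le\Delta_f$, which yields the $4\Delta_f/(c\alpha)$ term.

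The only index mismatch is that the analysis produces consensus errors at time $t$, whereas the statement has them at $t+1$. Because all agents start at the common point $x^{(0)}$, we have $\mathbf x^{(0)}=\bar{\mathbf x}^{(0)}$. Hence $\sum_{t=0}^{T-1}\lVert\mathbf x^{(t)}-\bar{\mathbf x}^{(t)}\rVert_F^2\le\sum_{t=0}^{T-1}\lVert\mathbf x^{(t+1)}-\bar{\mathbf x}^{(t+1)}\rVert_F^2$.

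I expect no real obstacle here. The delicate point is only the constant bookkeeping: the Young split must be chosen so that the $\lVert\nabla f\rVert^2$ coefficient stays at $-c\alpha/4$ while the tracking-error coefficient comes out at exactly $5/c^2$ rather than something larger. This is where $\lVert\pi_A\rVert_2\le1$ and $Lc\alpha\le1/4$ are both needed.
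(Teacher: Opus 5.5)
Your proposal is correct and follows the paper's proof essentially step for step: the descent lemma on $\hat x^{(t)}$ via \eqref{B.0.3}, the polarization identity for the momentum term, the same Young split with weights $c/4$ and $1/c$, the split through $\overline{\nabla f}(\mathbf x^{(t)})$ with the $L^2/n$ consensus bound, and telescoping using the common initialization. The one blemish is the garbled intermediate bound on $L\alpha^2$, which should read $L\alpha^2\le\alpha/(4c)$; your final estimate $4/c^2+4L\alpha/c\le 5/c^2$ is nevertheless correct.
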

\begin{proof}
See Appendix~\ref{appendixB.1}. \qed
\end{proof}

Lemma~\ref{lemma2} shows that we need to bound the consensus and average momentum estimation errors, as well as the tracking error highlighted above. We first bound the consensus error using the mixing properties of $A$.

\begin{lemma} \label{lemma3}
	Suppose Assumptions~\ref{assumption1}--\ref{assumption4} hold. Then the following inequality holds
	\begin{equation} \label{lemmaeq3}
		\begin{aligned}
			&\sum_{t=0}^{T-1} \mathbb E \bigl[ \lVert {\mathbf{x}}^{(t+1)} - \bar {\mathbf{x}}^{(t+1)} \rVert_F^2\bigr] \le \: 2 s_A^2 \alpha^2 \sum_{t=0}^{T-1} \mathbb E \bigl[ \lVert {\mathbf{y}}^{(t+1)} - \bar {\mathbf{y}}^{(t+1)} \rVert_F^2\bigr]\\
			&\quad + 2 n^2 s_A^2 \alpha^2 \sum_{t=0}^{T-1} \mathbb E \bigl[ \lVert\bar m^{(t)}\rVert^2\bigr].
		\end{aligned}
	\end{equation}
\end{lemma}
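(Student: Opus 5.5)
Subtracting $\mathds{1}\pi_A^\top$ times \eqref{B.0.2} from \eqref{B.0.2} itself gives the consensus-error recursion
\[
\mathbf x^{(t+1)}-\bar{\mathbf x}^{(t+1)}=(A-\mathds{1}\pi_A^\top)(\mathbf x^{(t)}-\bar{\mathbf x}^{(t)})-\alpha(I-\mathds{1}\pi_A^\top)\mathbf y^{(t+1)}.
\]
Here we use $(A-\mathds{1}\pi_A^\top)\mathds{1}\pi_A^\top=\mathbb 0$. Since $\mathbf x^{(0)}=\mathds{1}x^{(0){\top}}$ is already at consensus, the first term vanishes at $t=0$. Unrolling the recursion and using $(A-\mathds{1}\pi_A^\top)^k(I-\mathds{1}\pi_A^\top)=A^k-\mathds{1}\pi_A^\top$, I obtain
\[
\mathbf x^{(t+1)}-\bar{\mathbf x}^{(t+1)}=-\alpha\sum_{k=0}^{t}(A^{t-k}-\mathds{1}\pi_A^\top)\,\mathbf y^{(k+1)}.
\]

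Next I split $\mathbf y^{(k+1)}=(\mathbf y^{(k+1)}-\bar{\mathbf y}^{(k+1)})+\bar{\mathbf y}^{(k+1)}$. By \eqref{B.0.4},
\[
\bar{\mathbf y}^{(k+1)}=\pi_B\mathds{1}^\top\mathbf m^{(k)}=n\pi_B\bar m^{(k)}.
\]
Using $\lVert U+V\rVert_F^2\le2\lVert U\rVert_F^2+2\lVert V\rVert_F^2$, the error splits into two convolution sums. For each sum I apply Jensen/Cauchy--Schwarz with weights $w_{t-k}:=\lVert A^{t-k}-\mathds{1}\pi_A^\top\rVert_2$:
\[
\Bigl\lVert\sum_k M_{t-k}Z_k\Bigr\rVert_F^2\le\Bigl(\sum_k w_{t-k}\Bigr)\sum_k w_{t-k}\lVert Z_k\rVert_F^2\le s_A\sum_k w_{t-k}\lVert Z_k\rVert_F^2.
\]
Summing over $t=0,\dots,T-1$ and exchanging the order of summation, each $\lVert Z_k\rVert_F^2$ picks up at most $\sum_{\tau\ge0}w_\tau=s_A$. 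This gives the factor $s_A^2$ in front of $\sum_t\lVert Z_t\rVert_F^2$.

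For the first sum, $Z_k=\mathbf y^{(k+1)}-\bar{\mathbf y}^{(k+1)}$. This directly yields the term
\[
2s_A^2\alpha^2\sum_t\mathbb E\bigl[\lVert\mathbf y^{(t+1)}-\bar{\mathbf y}^{(t+1)}\rVert_F^2\bigr].
\]
For the second sum, $Z_k=n\pi_B\bar m^{(k)}$. Then
\[
\lVert Z_k\rVert_F^2=n^2\lVert\pi_B\rVert^2\lVert\bar m^{(k)}\rVert^2\le n^2\lVert\bar m^{(k)}\rVert^2,
\]
because $\lVert\pi_B\rVert_2\le\lVert\pi_B\rVert_1=1$. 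This gives the term $2n^2s_A^2\alpha^2\sum_t\mathbb E\bigl[\lVert\bar m^{(t)}\rVert^2\bigr]$. Taking expectations completes the argument; no assumptions beyond the mixing properties are really needed.

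The main care point is the bookkeeping in the double-sum exchange, so that exactly $s_A^2$ (and no $T$-dependent factor) appears. One also has to verify the index alignment: $\mathbf y^{(k+1)}$ pairs with $\bar m^{(k)}$, and $k$ ranges over $0,\dots,T-1$, matching the sums in \eqref{lemmaeq3}.
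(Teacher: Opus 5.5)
Your proof is correct and follows essentially the same route as the paper. You use the same closed form $\mathbf x^{(t+1)}-\bar{\mathbf x}^{(t+1)}=-\alpha\sum_{l=0}^{t}(A^{t-l}-\mathds{1}\pi_A^\top)\mathbf y^{(l+1)}$, the same convolution bound as Lemma~\ref{lemma11} (which you reprove inline via weighted Cauchy--Schwarz), and the same split $\mathbf y^{(t+1)}=(\mathbf y^{(t+1)}-\bar{\mathbf y}^{(t+1)})+n\pi_B\bar m^{(t)}$ with $\lVert\pi_B\rVert\le1$. The differences are cosmetic: the paper obtains the closed form by unrolling $\mathbf x^{(t+1)}$ directly, and it splits $\lVert\mathbf y^{(t+1)}\rVert_F^2$ after applying Lemma~\ref{lemma11} rather than before; the constants come out identical.
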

\begin{proof}
See Appendix~\ref{appendixB.2}. \qed
\end{proof}

Combining Lemmas~\ref{lemma2} and~\ref{lemma3} shows that the tracking error enters the descent bound both directly and through the consensus error. The contribution through the consensus error is scaled by $\alpha^2$, whereas the direct contribution is not. We therefore exploit recursive momentum to control the variance term in the tracking error bound. 

In particular, unrolling the tracking recursion and using the mixing properties of $B$ gives 
\[ \sum_{t=0}^{T-1}\mathbb E\!\bigl[\lVert\mathbf y^{(t+1)}-\bar{\mathbf y}^{(t+1)}\rVert_F^2\bigr]\le s_B^2\sum_{t=0}^{T-1}\mathbb E\!\bigl[\lVert\mathbf m^{(t)}-\mathbf m^{(t-1)}\rVert_F^2\bigr]. \]
For SPP, the corresponding tracking error bound involves differences between stochastic gradients computed in consecutive rounds. A direct estimate introduces a variance term of order $nTs_B^2\sigma^2$. In contrast, the recursive momentum in ASPP satisfies, for $t\ge1$, 
\[ \begin{aligned} m_i^{(t)}-m_i^{(t-1)}={}&\bigl(g_i(x_i^{(t)},\xi_i^{(t)})-g_i(x_i^{(t-1)},\xi_i^{(t)})\bigr)\\ &-(1-\beta)\bigl(m_i^{(t-1)}-g_i(x_i^{(t-1)},\xi_i^{(t)})\bigr). \end{aligned} \]
Here, the first term is controlled by the distance between consecutive iterates under mean-squared smoothness, while the second term is scaled by $1-\beta$. Using the recursive momentum update to bound the second term, we obtain a variance term of order $nTs_B^2(1-\beta)^2\sigma^2$, thereby reducing the variance contribution to the tracking error bound. The following lemma states the resulting tracking error bound.

\begin{lemma} \label{lemma4}
	Suppose Assumptions~\ref{assumption1}--\ref{assumption4} hold. If $\beta\in[0,1)$, then the following inequality holds
	\begin{equation} \label{lemmaeq4}
			\begin{aligned}
			&\sum_{t=0}^{T-1}\mathbb E\bigl[\lVert\mathbf y^{(t+1)}-\bar{\mathbf y}^{(t+1)}\rVert_F^2\bigr] \le{}9s_B^2L^2\sum_{t=0}^{T-1}\mathbb E\bigl[\lVert\mathbf x^{(t+1)}-\mathbf x^{(t)}\rVert_F^2\bigr]\\
			&\quad +9nT(1-\beta)^2s_B^2\sigma^2 +\frac{5ns_B^2\sigma^2}{b_0} +4ns_B^2\Delta_fL.
		\end{aligned}
	\end{equation}
\end{lemma}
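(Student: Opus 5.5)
We plan to prove Lemma~\ref{lemma4} in two stages. First, we reduce the tracking error to increments of the momentum estimator. Then we bound those increments using the recursive momentum structure and mean-squared smoothness.

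\textbf{Stage 1: reduction to momentum increments.} Let $\Pi_B:=I-\pi_B\mathds{1}^\top$. Since $\Pi_B B=B-\pi_B\mathds{1}^\top=B\Pi_B$ and $\mathbf y^{(0)}=\mathbb{0}$, unrolling \eqref{B.0.1} gives
\[
\mathbf y^{(t+1)}-\bar{\mathbf y}^{(t+1)}=\sum_{k=0}^{t}\bigl(B^{t-k}-\pi_B\mathds{1}^\top\bigr)\bigl(\mathbf m^{(k)}-\mathbf m^{(k-1)}\bigr).
\]
Write $w_\tau:=\lVert B^\tau-\pi_B\mathds{1}^\top\rVert_2$. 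By the weighted Cauchy--Schwarz inequality $\bigl\lVert\sum_k w_{t-k}v_k\bigr\rVert^2\le\bigl(\sum_k w_{t-k}\bigr)\sum_k w_{t-k}\lVert v_k\rVert^2$, summation over $t$, and Lemma~\ref{lemma1}, we obtain
\[
\sum_{t=0}^{T-1}\mathbb E\bigl[\lVert\mathbf y^{(t+1)}-\bar{\mathbf y}^{(t+1)}\rVert_F^2\bigr]\le s_B^2\sum_{t=0}^{T-1}\mathbb E\bigl[\lVert\mathbf m^{(t)}-\mathbf m^{(t-1)}\rVert_F^2\bigr].
\]

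\textbf{Stage 2: the $t=0$ term.} Since $\mathbf m^{(-1)}=\mathbb{0}$, we have $\lVert\mathbf m^{(0)}\rVert_F^2\le 2\lVert\mathbf m^{(0)}-\nabla f(\mathbf x^{(0)})\rVert_F^2+2\lVert\nabla f(\mathbf x^{(0)})\rVert_F^2$. Because $m_i^{(0)}$ averages $b_0$ independent samples, the first part has expectation at most $n\sigma^2/b_0$. For the second part, $L$-smoothness and the lower bound in Assumption~\ref{assumption2} give $\lVert\nabla f_i(x^{(0)})\rVert^2\le 2L\bigl(f_i(x^{(0)})-\inf f_i\bigr)\le 2L\Delta_f$. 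Together these yield $2n\sigma^2/b_0+4nL\Delta_f$, which accounts for the $4ns_B^2\Delta_fL$ term and part of the $b_0$ term.

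\textbf{Stage 3: the terms with $t\ge1$.} Use the decomposition displayed before the lemma, and write $\mathbf e^{(t)}:=\mathbf m^{(t)}-\nabla f(\mathbf x^{(t)})$. Rewrite the second piece as
\[
m_i^{(t-1)}-g_i(x_i^{(t-1)},\xi_i^{(t)})=e_i^{(t-1)}+\bigl(\nabla f_i(x_i^{(t-1)})-g_i(x_i^{(t-1)},\xi_i^{(t)})\bigr).
\]
Applying $\lVert a+b+c\rVert^2\le3(\cdots)$ gives
\[
\mathbb E\bigl[\lVert\mathbf m^{(t)}-\mathbf m^{(t-1)}\rVert_F^2\bigr]\le3L^2\,\mathbb E\bigl[\lVert\mathbf x^{(t)}-\mathbf x^{(t-1)}\rVert_F^2\bigr]+3(1-\beta)^2\,\mathbb E\bigl[\lVert\mathbf e^{(t-1)}\rVert_F^2\bigr]+3(1-\beta)^2n\sigma^2.
\]
Next we bound the estimation error by the standard STORM recursion. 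Conditioning on the past makes $\xi_i^{(t)}$ fresh, so the cross term vanishes and
\[
\mathbb E\bigl[\lVert\mathbf e^{(t)}\rVert_F^2\bigr]\le\beta^2\,\mathbb E\bigl[\lVert\mathbf e^{(t-1)}\rVert_F^2\bigr]+2\beta^2L^2\,\mathbb E\bigl[\lVert\mathbf x^{(t)}-\mathbf x^{(t-1)}\rVert_F^2\bigr]+2(1-\beta)^2n\sigma^2.
\]
Unrolling from $\mathbb E\bigl[\lVert\mathbf e^{(0)}\rVert_F^2\bigr]\le n\sigma^2/b_0$ and summing, with $\sum_k\beta^{2k}\le1/(1-\beta^2)\le1/(1-\beta)$, we get
\[
(1-\beta)^2\sum_t\mathbb E\bigl[\lVert\mathbf e^{(t)}\rVert_F^2\bigr]\le(1-\beta)\frac{n\sigma^2}{b_0}+2(1-\beta)L^2\sum_t\mathbb E\bigl[\lVert\mathbf x^{(t+1)}-\mathbf x^{(t)}\rVert_F^2\bigr]+2(1-\beta)^3nT\sigma^2.
\]
Since $1-\beta\le1$, this collapses into constants of the required form.

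\textbf{Collecting terms and main obstacle.} Adding the contributions, the coefficients come out as $3+6=9$ for $L^2\sum_t\mathbb E\lVert\mathbf x^{(t+1)}-\mathbf x^{(t)}\rVert_F^2$, $3+6=9$ for $nT(1-\beta)^2\sigma^2$, and $2+3=5$ for $n\sigma^2/b_0$. The main obstacle is the bookkeeping in Stage 3. One must keep the $(1-\beta)^2$ factor from being cancelled by the $1/(1-\beta^2)$ produced when summing the error recursion; the net result is one leftover factor of $(1-\beta)$, which is simply bounded by $1$. One must also align the index shift $t-1\to t$ in the iterate differences so that every term matches $\sum_{t=0}^{T-1}\lVert\mathbf x^{(t+1)}-\mathbf x^{(t)}\rVert_F^2$.
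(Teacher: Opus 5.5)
Your proposal is correct and follows essentially the same route as the paper. The paper also reduces the tracking error to $s_B^2\sum_t\mathbb E\bigl[\lVert\mathbf m^{(t)}-\mathbf m^{(t-1)}\rVert_F^2\bigr]$ via the convolution bound (Lemma~\ref{lemma11}), splits off the $t=0$ term using Lemma~\ref{lemma12} and $\lVert\nabla f(\mathbf x^{(0)})\rVert_F^2\le 2n\Delta_fL$, and bounds the $t\ge1$ increments exactly as you do (Lemma~\ref{lemma14}, built on the error recursion of Lemma~\ref{lemma13}), arriving at the same constants $9$, $9$, $5$, and $4$.
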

\begin{proof}
See Appendix~\ref{appendixB.3}. \qed
\end{proof}

We next bound the average momentum estimation error. Using the recursive momentum update, we obtain a bound with a variance term of order $T(1-\beta)\sigma^2/n$. The following lemma states the result.

\begin{lemma} \label{lemma5}
	Suppose Assumptions~\ref{assumption1}--\ref{assumption4} hold. If $\beta\in[0,1)$, then the following inequality holds
	\begin{equation} \label{lemmaeq5}
		\begin{aligned}
			&\sum_{t=0}^{T-1}\mathbb E\bigl[\lVert\bar m^{(t)}-\overline{\nabla f}(\mathbf x^{(t)})\rVert^2\bigr] \le{}\frac{2 L^2}{n^2(1 - \beta)} \sum_{t=0}^{T-1} \mathbb E \bigl[\lVert {\mathbf x}^{(t+1)} - {\mathbf x}^{(t)} \rVert_F^2\bigr]\\
			&\quad + \frac{2T(1-\beta) \sigma^2}{n} + \frac{\sigma^2}{n b_0 (1 - \beta)}.
		\end{aligned}
	\end{equation}
\end{lemma}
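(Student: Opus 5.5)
We follow the standard STORM/hybrid-estimator analysis and apply it to the network average. Define the error $e^{(t)} := \bar m^{(t)} - \overline{\nabla f}(\mathbf x^{(t)})$. Averaging the recursion \eqref{eq3} over agents gives, for $t\ge1$,
\[
e^{(t)} = \beta e^{(t-1)} + \beta\,\zeta^{(t)} + (1-\beta)\,\nu^{(t)},
\]
where
\[
\nu^{(t)} := \frac1n\sum_{i=1}^n\bigl(g_i(x_i^{(t)},\xi_i^{(t)})-\nabla f_i(x_i^{(t)})\bigr),
\]
\[
\zeta^{(t)} := \frac1n\sum_{i=1}^n\Bigl(g_i(x_i^{(t)},\xi_i^{(t)})-g_i(x_i^{(t-1)},\xi_i^{(t)})-\nabla f_i(x_i^{(t)})+\nabla f_i(x_i^{(t-1)})\Bigr).
\]
Let $\mathcal F_t$ be the $\sigma$-algebra generated by everything up to the point where $\mathbf x^{(t)}$ is formed. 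Since $\mathbf x^{(t)}$ is determined by samples up to round $t-1$, both $\zeta^{(t)}$ and $\nu^{(t)}$ have zero conditional mean given $\mathcal F_t$, while $e^{(t-1)}$ is $\mathcal F_t$-measurable. The cross term with $e^{(t-1)}$ therefore vanishes, and
\[
\mathbb E\lVert e^{(t)}\rVert^2 \le \beta^2\mathbb E\lVert e^{(t-1)}\rVert^2 + 2\beta^2\mathbb E\lVert\zeta^{(t)}\rVert^2 + 2(1-\beta)^2\mathbb E\lVert\nu^{(t)}\rVert^2 .
\]

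Next we bound the two noise terms using independence across agents (Assumption~\ref{assumption1}). The summands of $\zeta^{(t)}$ are conditionally independent and mean zero, so the variance of the average is $n^{-2}$ times the sum of the variances. Using Assumption~\ref{assumption3} (variance is at most the second moment),
\[
\mathbb E\lVert\zeta^{(t)}\rVert^2\le \frac{L^2}{n^2}\,\mathbb E\lVert\mathbf x^{(t)}-\mathbf x^{(t-1)}\rVert_F^2 .
\]
The same independence argument gives $\mathbb E\lVert\nu^{(t)}\rVert^2\le \sigma^2/n$. For the initial term, the mini-batch of size $b_0$ gives $\mathbb E\lVert e^{(0)}\rVert^2\le \sigma^2/(nb_0)$.

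Now sum the recursion. Using $\beta^2\le\beta$, we have $\mathbb E\lVert e^{(t)}\rVert^2\le \beta\,\mathbb E\lVert e^{(t-1)}\rVert^2 + D_t$, where $D_t$ collects the noise terms. Summing over $t=1,\dots,T-1$, adding $\mathbb E\lVert e^{(0)}\rVert^2$, and rearranging gives
\[
(1-\beta)\sum_{t=0}^{T-1}\mathbb E\lVert e^{(t)}\rVert^2 \le \mathbb E\lVert e^{(0)}\rVert^2 + \sum_{t=1}^{T-1}D_t .
\]
Dividing by $1-\beta$:
\begin{itemize}
\item the initial term yields $\sigma^2/(nb_0(1-\beta))$;
\item the smoothness terms yield $\frac{2L^2}{n^2(1-\beta)}\sum_{t=1}^{T-1}\mathbb E\lVert\mathbf x^{(t)}-\mathbf x^{(t-1)}\rVert_F^2$;
\item the variance terms yield $2T(1-\beta)\sigma^2/n$.
\end{itemize}
Reindexing $t\mapsto t+1$ turns the smoothness sum into $\sum_{t=0}^{T-2}$, which is bounded by the full sum $\sum_{t=0}^{T-1}\mathbb E\lVert\mathbf x^{(t+1)}-\mathbf x^{(t)}\rVert_F^2$ since all terms are nonnegative. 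This matches the three terms of \eqref{lemmaeq5}.

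The main obstacle is the conditioning and independence bookkeeping. The sample $\xi_i^{(t)}$ is reused at both $x_i^{(t)}$ and $x_i^{(t-1)}$, so the filtration must be chosen so that $\mathbf x^{(t)}$ and $\mathbf x^{(t-1)}$ are measurable before $\xi^{(t)}$ is drawn. This holds because $\mathbf x^{(t)}$ depends only on $\mathbf m^{(t-1)}$ and earlier quantities. With that filtration, the cross terms vanish and the $1/n$ and $1/n^2$ factors, which come from independence across agents, are justified.
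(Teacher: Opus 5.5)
Your proposal is correct and follows essentially the same argument as the paper. Both use the same decomposition of $\bar m^{(t)}-\overline{\nabla f}(\mathbf x^{(t)})$ into $\beta e^{(t-1)}$ plus two conditionally mean-zero noise terms, the same filtration, cross-agent independence and mean-squared-smoothness bounds, and the same mini-batch initialization bound; the only cosmetic difference is that you sum the recursion after using $\beta^2\le\beta$ and rearranging, whereas the paper unrolls it and bounds the geometric sums by $1/(1-\beta^2)\le 1/(1-\beta)$, giving identical constants.
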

\begin{proof}
See Appendix~\ref{appendixB.4}. \qed
\end{proof}

The bounds in Lemmas~\ref{lemma4} and~\ref{lemma5} still involve $\sum_{t=0}^{T-1}\allowbreak\mathbb E\bigl[\lVert\mathbf x^{(t+1)}-\mathbf x^{(t)}\rVert_F^2\bigr]$. Expanding the decision variable recursion yields, for $t\ge0$,
\[ \begin{aligned} \mathbf x^{(t+1)}-\mathbf x^{(t)} ={}&-c\alpha\mathds1\bar m^{(t)}-\alpha\mathds1\pi_A^\top\bigl(\mathbf y^{(t+1)}-\bar{\mathbf y}^{(t+1)}\bigr)\\ &-\alpha\sum_{l=0}^{t}\bigl(A^{t-l}-\mathds1\pi_A^\top\bigr)\bigl(\mathbf y^{(l+1)}-\mathbf y^{(l)}\bigr). \end{aligned} \]
We bound the first term directly and control the remaining terms using the tracking recursion and the mixing properties of $A$ and $B$. Treating these terms separately preserves a coefficient of order $nc^2\alpha^2$ for the average momentum term, without an additional $h^2$ factor. As explained below, this is important for establishing network-independent asymptotic linear speedup. The following lemma gives the resulting bound.

\begin{lemma} \label{lemma6}
	Suppose Assumptions~\ref{assumption1}--\ref{assumption4} hold. If $\beta\in[0,1)$ and $0<\alpha\le \frac{1}{20hL}$, then the following inequality holds
	\begin{equation} \label{lemmaeq6}
		\begin{aligned}
			&\sum_{t=0}^{T-1}\mathbb E\bigl[\lVert\mathbf x^{(t+1)}-\mathbf x^{(t)}\rVert_F^2\bigr] \le6nc^2\alpha^2\sum_{t=0}^{T-1}\mathbb E \bigl[\lVert\bar m^{(t)}\rVert^2\bigr] +144n\alpha^2h^2\Delta_fL\\
			&\quad +\frac{180n\alpha^2h^2\sigma^2}{b_0} +324nT\alpha^2h^2(1-\beta)^2\sigma^2.
		\end{aligned}
	\end{equation}
\end{lemma}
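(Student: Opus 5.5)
We start from the expansion displayed just before the statement,
\[
\mathbf x^{(t+1)}-\mathbf x^{(t)} =-c\alpha\mathds1\bar m^{(t)}-\alpha\mathds1\pi_A^\top\bigl(\mathbf y^{(t+1)}-\bar{\mathbf y}^{(t+1)}\bigr)-\alpha\sum_{l=0}^{t}\bigl(A^{t-l}-\mathds1\pi_A^\top\bigr)\bigl(\mathbf y^{(l+1)}-\mathbf y^{(l)}\bigr),
\]
and apply $\lVert a+b+c\rVert^2\le 3(\lVert a\rVert^2+\lVert b\rVert^2+\lVert c\rVert^2)$. The first term contributes $3nc^2\alpha^2\lVert\bar m^{(t)}\rVert^2$ exactly. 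For the second, $\lVert\mathds1\pi_A^\top\rVert_2=\sqrt n\lVert\pi_A\rVert\le\sqrt n$, so it contributes at most $3n\alpha^2\lVert\mathbf y^{(t+1)}-\bar{\mathbf y}^{(t+1)}\rVert_F^2$. For the third, we use the weighted Cauchy--Schwarz (Jensen) inequality with weights $\lVert A^{t-l}-\mathds1\pi_A^\top\rVert_2$, whose sum is at most $s_A$. After summing over $t$ this gives at most $3s_A^2\alpha^2\sum_{t=0}^{T-1}\lVert\mathbf y^{(t+1)}-\mathbf y^{(t)}\rVert_F^2$.

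Next, we decompose $\mathbf y^{(t+1)}-\mathbf y^{(t)}$. Write $\mathbf y^{(l)}=(\mathbf y^{(l)}-\bar{\mathbf y}^{(l)})+\bar{\mathbf y}^{(l)}$ and note $\bar{\mathbf y}^{(l+1)}-\bar{\mathbf y}^{(l)}=\pi_B\mathds1^\top(\mathbf m^{(l)}-\mathbf m^{(l-1)})$ by \eqref{B.0.4}. Then
\[
\lVert\mathbf y^{(l+1)}-\mathbf y^{(l)}\rVert_F^2\lesssim\lVert\mathbf y^{(l+1)}-\bar{\mathbf y}^{(l+1)}\rVert_F^2+\lVert\mathbf y^{(l)}-\bar{\mathbf y}^{(l)}\rVert_F^2+n\lVert\mathbf m^{(l)}-\mathbf m^{(l-1)}\rVert_F^2 .
\]
Here we used $\lVert\pi_B\mathds1^\top\rVert_2\le\sqrt n$, and crucially we do not pass through $\bar m$, so no $h^2$ factor multiplies $\lVert\bar m\rVert^2$ beyond the first term. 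Alternatively, it may be cleaner to use $\mathbf y^{(l+1)}-\mathbf y^{(l)}=(B-I)(\mathbf y^{(l)}-\bar{\mathbf y}^{(l)})+\mathbf m^{(l)}-\mathbf m^{(l-1)}$, since $(B-I)\bar{\mathbf y}^{(l)}=\mathbb 0$. Because $\lVert B-I\rVert_2\le 1+\lVert B\rVert_2$ is a constant depending on $B$, I would likely instead absorb it through $s_B$-type bounds. I will use whichever gives constants of the form $ns_A^2$, $ns_B^2$, $s_A^2s_B^2$, i.e. $h^2$.

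All consensus-type terms $\sum\lVert\mathbf y^{(t+1)}-\bar{\mathbf y}^{(t+1)}\rVert_F^2$ are then bounded via Lemma~\ref{lemma4}. The $\sum\lVert\mathbf m^{(t)}-\mathbf m^{(t-1)}\rVert_F^2$ term needs the intermediate estimate from the proof of Lemma~\ref{lemma4}: namely that this sum is bounded by $9L^2\sum\lVert\mathbf x^{(t+1)}-\mathbf x^{(t)}\rVert_F^2+9nT(1-\beta)^2\sigma^2+5n\sigma^2/b_0+4n\Delta_fL$. The $t=0$ term $\lVert\mathbf m^{(0)}\rVert_F^2$ produces the $b_0$ and $\Delta_f L$ pieces via $\lVert\nabla f_i(x^{(0)})\rVert^2\le 2L\Delta_f$. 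I would restate that estimate rather than reprove it.

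Collecting terms, with coefficient of order $\alpha^2(n s_B^2+s_A^2s_B^2+\dots)\le C\alpha^2h^2$, we obtain
\[
S\le 3nc^2\alpha^2\sum\lVert\bar m^{(t)}\rVert^2+C'\alpha^2h^2L^2 S+C'\alpha^2h^2\Bigl(nT(1-\beta)^2\sigma^2+\tfrac{n\sigma^2}{b_0}+n\Delta_fL\Bigr),
\]
where $S=\sum_{t=0}^{T-1}\mathbb E\lVert\mathbf x^{(t+1)}-\mathbf x^{(t)}\rVert_F^2$. The condition $\alpha\le 1/(20hL)$ makes $C'\alpha^2h^2L^2\le 1/2$, so we can move that term to the left and multiply by $2$. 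This yields the constants $6nc^2\alpha^2$, and the constants $144,180,324$ result from tracking $C'$ precisely. The main obstacle is the bookkeeping: every $\mathbf y$-term must produce only $h^2$-type factors and every appearance of $S$ must be absorbable. In particular, the $n$-factors from $\lVert\mathds1\pi_A^\top\rVert$ and $\lVert\pi_B\mathds1^\top\rVert$ must pair with $s_B^2$ or $s_A^2$ as $ns_B^2$ or $ns_A^2$, and never as $n^2$. If a bare $n\cdot 9s_B^2L^2$ coefficient appears, it is still $\le 9h^2L^2$ and is absorbable; I would check that each such coefficient is dominated by $h^2$.
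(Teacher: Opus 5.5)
Your proposal is correct and follows the paper's proof essentially step for step. It uses the same three-term split controlled by Lemma~\ref{lemma11} and the same decomposition $\mathbf y^{(t+1)}-\mathbf y^{(t)}=\pi_B\mathds1^\top(\mathbf m^{(t)}-\mathbf m^{(t-1)})+(\mathbf y^{(t+1)}-\bar{\mathbf y}^{(t+1)})-(\mathbf y^{(t)}-\bar{\mathbf y}^{(t)})$. It then applies \eqref{B.3.3} (equivalently Lemma~\ref{lemma4}, whose right-hand side is exactly $s_B^2$ times your intermediate estimate) together with Lemma~\ref{lemma14} and the $t=0$ initialization term. With factor-$3$ splits, the coefficient of that estimate is $\alpha^2(3ns_B^2+9ns_A^2+18s_A^2s_B^2)\le18\alpha^2h^2$, so $C'=162$ and $162\alpha^2h^2L^2\le\frac12$, and the constants $6,144,180,324$ come out exactly. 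Keep your first decomposition: the $(B-I)$ variant would attach an extra factor such as $\lVert B-I\rVert_2^2$ to $s_A^2s_B^2$, and that factor is not bounded in terms of $h^2$.
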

\begin{proof}
See Appendix~\ref{appendixB.5}. \qed
\end{proof}

We now substitute the bound in Lemma~\ref{lemma6} into Lemmas~\ref{lemma4} and~\ref{lemma5} and combine the resulting bounds with Lemmas~\ref{lemma2} and~\ref{lemma3}. The resulting descent inequality still contains a term involving $\sum_{t=0}^{T-1}\mathbb E\bigl[\lVert\bar m^{(t)}\rVert^2\bigr]$, whose coefficient is nonpositive under the step size conditions stated below. Dropping this nonpositive term yields the following convergence result.

\begin{samepage}
\begin{theorem}
	Suppose\label{theorem1} Assumptions~\ref{assumption1}--\ref{assumption4} hold. If
	\[
	\begin{gathered}
	0<\alpha\le\min \Bigl\{\frac{1}{4cL},\frac{1}{30hL},\frac{\sqrt{n(1-\beta)}}{20cL} \Bigr\},\\
	\beta \in [0, 1), \quad \text{and}\quad b_0\in\mathbb N_+,
	\end{gathered}
	\]
	then the following inequality holds
	\begin{equation} \label{theoremeq1}
		\begin{aligned}
			&\frac{1}{T}\sum_{t=0}^{T-1} \mathbb E\bigl[\lVert\nabla f(\hat x^{(t)})\rVert^2\bigr] \le{} \frac{4\Delta_f}{c\alpha T} +\frac{8(1-\beta)\sigma^2}{n} +\frac{80h^2(1-\beta)^2\sigma^2}{c^2}\\
			&\quad +\frac{4\sigma^2}{nb_0(1-\beta)T} +\frac{45h^2\sigma^2}{c^2b_0T} +\frac{36h^2\Delta_fL}{c^2T}.
		\end{aligned}
	\end{equation}
\end{theorem}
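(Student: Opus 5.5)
The proof combines Lemmas~\ref{lemma2}--\ref{lemma6} into a single inequality in the two unknown sums
\[
S_m:=\sum_{t=0}^{T-1}\mathbb E\bigl[\lVert\bar m^{(t)}\rVert^2\bigr],\qquad D:=\sum_{t=0}^{T-1}\mathbb E\bigl[\lVert\mathbf x^{(t+1)}-\mathbf x^{(t)}\rVert_F^2\bigr].
\]
The plan is to show that the net coefficient of $S_m$ is nonpositive and then drop it.

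\textbf{Step 1: collapse to $D$ and $S_m$.} Write $E_y$ for the tracking-error sum. Substituting Lemma~\ref{lemma3} into Lemma~\ref{lemma2}, the $E_y$ term carries the coefficient
\[
\frac{8L^2s_A^2\alpha^2}{n}+\frac{5}{c^2},
\]
and the $S_m$ term carries $8nL^2s_A^2\alpha^2-1$. Since $\alpha\le 1/(30hL)$ and $h^2\ge ns_A^2$, we get $8nL^2s_A^2\alpha^2\le 8/900$. For the $E_y$ coefficient, note $c=n\pi_A^\top\pi_B\le n$, so $8L^2s_A^2\alpha^2/n\le 8/(900n^2)\le 1/c^2$. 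The $E_y$ coefficient is therefore at most $6/c^2$.

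Next insert Lemma~\ref{lemma4} for $E_y$ and Lemma~\ref{lemma5} for the momentum error. Every term then involves only $D$, $S_m$, or constants.

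\textbf{Step 2: collect the coefficient of $D$.} The coefficient is
\[
\frac{54s_B^2L^2}{c^2}+\frac{8L^2}{n^2(1-\beta)}.
\]
Now apply Lemma~\ref{lemma6}, which is valid because $1/(30hL)\le 1/(20hL)$. The $S_m$ term in Lemma~\ref{lemma6} then contributes to the $S_m$ coefficient
\[
6nc^2\alpha^2\Bigl(\frac{54s_B^2L^2}{c^2}+\frac{8L^2}{n^2(1-\beta)}\Bigr)=324ns_B^2L^2\alpha^2+\frac{48c^2L^2\alpha^2}{n(1-\beta)}.
\]
The first part is at most $324/900$, using $ns_B^2\le h^2$ and $\alpha\le 1/(30hL)$. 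The second part is at most $48/400$, using $\alpha\le\sqrt{n(1-\beta)}/(20cL)$. Adding the leftover $8/900$ from Step~1, the total is $\approx0.49<1$. The net coefficient of $S_m$ is therefore negative, and the term can be dropped.

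\textbf{Step 3: the constant terms.} Each constant term of Lemma~\ref{lemma6} gets multiplied by the $D$ coefficient. Using the step-size bounds, I expect:
\begin{itemize}
\item[] $L^2\alpha^2h^2\cdot s_B^2/c^2\lesssim s_B^2/c^2\lesssim h^2/c^2$, which is absorbed into the $h^2/c^2$ terms;
\item[] $L^2\alpha^2h^2/(n^2(1-\beta))\lesssim h^2(1-\beta)^2\cdot n(1-\beta)/(c^2n^2(1-\beta))$, which is also absorbed, using $\alpha^2\le n(1-\beta)/(400c^2L^2)$ together with $\alpha^2h^2L^2\le1/900$.
\end{itemize}
The direct terms $6/c^2\cdot(9nT(1-\beta)^2s_B^2\sigma^2+\dots)$ from Lemma~\ref{lemma4} give $54nTs_B^2(1-\beta)^2\sigma^2/c^2\le54h^2T(1-\beta)^2\sigma^2/c^2$. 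Lemma~\ref{lemma5} gives $8T(1-\beta)\sigma^2/n$ and $4\sigma^2/(nb_0(1-\beta))$. Finally, divide by $T$.

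\textbf{Main obstacle.} The delicate step is the constant bookkeeping in Step~3. For the Lemma~\ref{lemma6} term $324nT\alpha^2h^2(1-\beta)^2\sigma^2$ multiplied by $8L^2/(n^2(1-\beta))$, I would use $\alpha^2L^2\le n(1-\beta)/(400c^2)$. This yields $\lesssim h^2(1-\beta)^2\sigma^2T/c^2$ with no $1/(1-\beta)$ blow-up. The stated constants $80,45,36$ should then follow from summing $54+324\cdot54/900+324\cdot8/400\approx 80$, with the analogous sums for the $b_0$ and $\Delta_f$ terms.
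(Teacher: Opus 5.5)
Your proposal is correct and follows the paper's proof step for step. You substitute Lemma~\ref{lemma3} into Lemma~\ref{lemma2}, then Lemmas~\ref{lemma4}--\ref{lemma5}, and finally Lemma~\ref{lemma6}, and you show the coefficient of $\sum_{t}\mathbb E[\lVert\bar m^{(t)}\rVert^2]$ is nonpositive via $324ns_B^2\alpha^2L^2+\frac{48c^2\alpha^2L^2}{n(1-\beta)}$ plus the leftover from Lemma~\ref{lemma3}. Your constants also check out ($54+19.44+6.48\le80$, $30+10.8+3.6\le45$, $24+8.64+2.88\le36$); one clarification is that your second bullet needs only $\alpha^2L^2\le n(1-\beta)/(400c^2)$, with the factor $h^2$ carried through rather than bounded.
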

\begin{proof}
See Appendix~\ref{appendixB.6}. \qed
\end{proof}
\end{samepage}

Apart from initialization terms, the variance terms in Theorem~\ref{theorem1} are of orders $(1-\beta)\sigma^2/n$ and $h^2(1-\beta)^2\sigma^2/c^2$. Here, the latter contains an additional factor of $1-\beta$ and is therefore dominated by the former for sufficiently small $1-\beta$. However, decreasing $1-\beta$ also tightens the step size restriction, so the choice of $1-\beta$ must also account for the term $\Delta_f/(c\alpha T)$.

In particular, taking $\alpha=\min\Bigl\{\frac{1}{4cL},\allowbreak\frac{1}{30hL},\allowbreak\frac{\sqrt{n(1-\beta)}}{20cL}\Bigr\}$ gives 
\begin{equation} \label{theoremeq12}
	\frac{\Delta_f}{c\alpha T}\le\frac{20\Delta_fL}{T\sqrt{n(1-\beta)}}+\frac{(4+30h/c)\Delta_fL}{T}. 
\end{equation}
Here, the first term corresponds to the step-size restriction $c\alpha\le\frac{\sqrt{n(1-\beta)}}{20L}$, which is independent of the network-dependent parameter $h$. This is because, in Lemma~\ref{lemma6}, the coefficient of the average momentum term remains $6nc^2\alpha^2$, with no additional factor of $h^2$. We choose $1-\beta=n^{1/3}/T^{2/3}$ to balance the first term in \eqref{theoremeq12} with the variance term $(1-\beta)\sigma^2/n$, giving a combined contribution of order $(\Delta_fL+\sigma^2)/(n^{2/3}T^{2/3})$ with no additional network-dependent parameters. Together with an initial batch size $b_0$ chosen to control the initialization terms, these parameter choices yield the following corollary.

\begin{samepage}
\begin{corollary}
	Suppose\label{corollary1} Assumptions~\ref{assumption1}--\ref{assumption4} hold and $T\ge\sqrt n$. If
	\[
	\begin{gathered}
	\alpha=\min\Bigl\{\frac{1}{4cL},\frac{1}{30hL},\frac{n^{2/3}}{20\,cLT^{1/3}}\Bigr\},\\
	1-\beta=\frac{n^{1/3}}{T^{2/3}}, \quad \text{and}\quad b_0=\Bigl\lceil\frac{T^{1/3}}{n^{2/3}}\Bigr\rceil,
	\end{gathered}
	\]
	then the following inequality holds
	\begin{equation} \label{corollaryeq1}
		\begin{aligned}
			&\frac{1}{T}\sum_{t=0}^{T-1}\mathbb E\bigl[\lVert\nabla f(\hat x^{(t)})\rVert^2\bigr] \lesssim{}\frac{\Delta_fL+\sigma^2}{n^{2/3}T^{2/3}} +\frac{n^{2/3}h^2\sigma^2}{c^2T^{4/3}} +\frac{(c^2+h^2)\Delta_fL}{c^2T}.
		\end{aligned}
	\end{equation}
\end{corollary}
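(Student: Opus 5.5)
The corollary should follow by substituting the stated parameter choices into Theorem~\ref{theorem1} and bounding each of its six terms. Before doing so, we check that the choices are admissible.

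\emph{Admissibility.} Since $T\ge\sqrt n$, we have $n^{1/3}/T^{2/3}\le 1$, so $\beta\in[0,1)$ (the case $1-\beta=1$ gives $\beta=0$, which is allowed). Clearly $b_0\in\mathbb N_+$. For the step size, $\sqrt{n(1-\beta)}=n^{2/3}/T^{1/3}$. Hence the third entry of the minimum in Theorem~\ref{theorem1} is exactly $\frac{n^{2/3}}{20cLT^{1/3}}$, and the chosen $\alpha$ equals the minimum of the three admissible upper bounds.

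\emph{Bounding the six terms.}
\begin{itemize}
\item[(i)] For the first term, use $\frac{1}{\alpha}\le 4cL+30hL+20cLT^{1/3}/n^{2/3}$, which is exactly \eqref{theoremeq12} with $\sqrt{n(1-\beta)}=n^{2/3}/T^{1/3}$. This gives
\[
\frac{4\Delta_f}{c\alpha T}\lesssim \frac{\Delta_fL}{n^{2/3}T^{2/3}}+\frac{(1+h/c)\Delta_fL}{T}.
\]
Since $h/c\le 1+h^2/c^2$, the last piece is $\lesssim (c^2+h^2)\Delta_fL/(c^2T)$.
\item[(ii)] For the second term, $8(1-\beta)\sigma^2/n=8\sigma^2/(n^{2/3}T^{2/3})$.
\item[(iii)] For the third term, $(1-\beta)^2=n^{2/3}/T^{4/3}$, so it becomes $80n^{2/3}h^2\sigma^2/(c^2T^{4/3})$.
\item[(iv)] For the fourth term, $b_0\ge T^{1/3}/n^{2/3}$ gives $nb_0(1-\beta)T\ge n\cdot\frac{T^{1/3}}{n^{2/3}}\cdot\frac{n^{1/3}}{T^{2/3}}\cdot T=n^{2/3}T^{2/3}$. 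The term is therefore at most $4\sigma^2/(n^{2/3}T^{2/3})$.
\item[(v)] For the fifth term, $b_0\ge T^{1/3}/n^{2/3}$ gives $45h^2\sigma^2/(c^2b_0T)\le 45n^{2/3}h^2\sigma^2/(c^2T^{4/3})$.
\item[(vi)] The sixth term $36h^2\Delta_fL/(c^2T)$ is already of the required form.
\end{itemize}
Summing (i)--(vi) yields \eqref{corollaryeq1}.

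\emph{Main obstacle.} There is no genuine difficulty. The only delicate points are the two places where the ceiling in $b_0$ and the condition $T\ge\sqrt n$ enter: the former only through the lower bound $b_0\ge T^{1/3}/n^{2/3}$, the latter only to guarantee $\beta\ge0$. We also need the elementary absorption of $h/c$ into $1+h^2/c^2$ used in (i).
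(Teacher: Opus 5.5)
Your proposal is correct and takes the same route as the paper. You check admissibility via $\sqrt{n(1-\beta)}=n^{2/3}/T^{1/3}$, bound $1/\alpha$ by the sum of the three reciprocals as in \eqref{theoremeq12}, and substitute $1-\beta$ and $b_0\ge T^{1/3}/n^{2/3}$ into each term of Theorem~\ref{theorem1}; your term-by-term computations and the absorption $h/c\le 1+h^2/c^2$ all check out.
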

\end{samepage}

Corollary~\ref{corollary1} implies that ASPP achieves $\epsilon$-stationarity within \begin{equation} \label{corollaryeq12}
	\mathcal O\Bigl(\frac{(\Delta_fL+\sigma^2)^{3/2}}{n\epsilon^3}+\frac{\sqrt n\,(h^2\sigma^2/c^2)^{3/4}}{\epsilon^{3/2}}+\frac{(c^2+h^2)\Delta_fL}{c^2\epsilon^2}\Bigr)
\end{equation} 
communication rounds. Including initialization, each agent performs $b_0+2(T-1)=\mathcal O(T)$ stochastic gradient evaluations over $T$ communication rounds. Thus, the per-agent stochastic gradient complexity has the same order as the communication complexity.

\begin{remark} \label{remark1}
	The first term in the complexity bound \eqref{corollaryeq12} scales as $n^{-1}$ and is independent of the network-dependent parameters $c$ and $h$. The remaining two terms depend on $c$ and $h$ but are of lower order in $\epsilon$, so the first term dominates for sufficiently small $\epsilon$. A sufficient condition is 
	\begin{equation} \label{remarkeq1}
		\epsilon\lesssim\frac{\sqrt{\Delta_fL+\sigma^2}}{n(1+h^2/c^2)}. 
	\end{equation} 
	In this regime, ASPP achieves network-independent communication and per-agent stochastic gradient complexities of $\mathcal O(n^{-1}\epsilon^{-3})$, establishing asymptotic linear speedup with respect to the number of agents.
\end{remark}

\begin{remark} \label{remark2}
	ASPP achieves faster convergence than SPP under mean-squared smoothness, reducing the per-agent stochastic gradient complexity from $\mathcal O(n^{-1}\epsilon^{-4})$~\citep{liang2025linear} to $\mathcal O(n^{-1}\epsilon^{-3})$ for sufficiently small $\epsilon$ satisfying  \eqref{remarkeq1}. The corresponding total stochastic gradient complexity is $\mathcal O(\epsilon^{-3})$. This matches the $\epsilon^{-3}$ dependence of the lower bound on stochastic gradient complexity for nonconvex optimization under mean-squared smoothness~\citep{arjevani2023lower}.
\end{remark}

\section{Accelerated Stochastic Push--Pull with Local Updates}

In this section, we develop a local-update extension of ASPP. We first present its algorithmic design and then establish its convergence guarantees.

\subsection{Algorithm Design}
To further reduce communication overhead, a common strategy in federated learning is to perform multiple local updates between communication rounds \citep{fedavg,SCAFFOLD,cheng2024momentum,zhang2026non}. Motivated by this strategy, we develop a local-update variant of ASPP, termed Local Accelerated Stochastic Push--Pull (LASPP), as presented in Algorithm~\ref{alg2}.  

LASPP uses the same initialization as ASPP. In each communication round $t\ge1$, each agent~$i$ sets $x_i^{(t,0)}=x_i^{(t)}$ and performs $K$ local updates, indexed by $k=0,\ldots,K-1$, without communicating with its neighbors. At each local step~$k$, agent~$i$ constructs the recursive momentum estimator $m_i^{(t,k)}$ and uses communication-free counterparts of the ASPP recursions to form the local update direction $y_i^{(t,k+1)}$ and update $x_i^{(t,k+1)}$ with step size $\eta>0$.

After the $K$ local updates, agent~$i$ constructs $m_i^{(t)}$ using the accumulated local displacement. This construction yields
\begin{equation}\label{eq4}
	m_i^{(t)}=\frac{1}{K}\sum_{k=0}^{K-1}m_i^{(t,k)}.
\end{equation}
Agent $i$ then communicates with its neighbors and updates $y_i^{(t+1)}$ and $x_i^{(t+1)}$ using the same Push--Pull recursions as in ASPP. When $K=1$, LASPP reduces to ASPP.

\begin{algorithm}
	\caption{Local Accelerated Stochastic Push--Pull}
	\begin{algorithmic}[1]\label{alg2}
		\normalsize
		\setlength{\itemsep}{0.2em}
		\STATE \textbf{Initialization:} $x_i^{(0)} = x^{(0)}$, $y_i^{(0)} = \mathbb{0}$, $m_i^{(-1)} = \mathbb{0}$
		\STATE \hspace{\algorithmicindent}$m_i^{(0)} = \frac{1}{b_0} \sum_{r=1}^{b_0} g_i(x_i^{(0)}, \xi_{i,r}^{(0)})$
		\STATE \hspace{\algorithmicindent}$y_i^{(1)} = \sum_{j=1}^{n} b_{ij} y_j^{(0)} + m_i^{(0)} - m_i^{(-1)}$
		\STATE \hspace{\algorithmicindent}$x_i^{(1)} = \sum_{j=1}^{n} a_{ij} x_j^{(0)} - \alpha y_i^{(1)}$
		\FOR{ $t = 1, 2, \ldots, T-1$}
		\setlength{\itemsep}{0.2em}
		\STATE $x_i^{(t,0)} = x_i^{(t)}$
		\FOR{$k = 0, \dots, K - 1$ }
		\setlength{\itemsep}{0.2em}
		\STATE $
		\!\! m_i^{(t,k)} \!\! =\! g_i(x_i^{(t,k)}\!\!,  \xi_i^{(t,k)}) + \beta \bigl(m_i^{(t-1)} \!- g_i(x_i^{(t-1)}\!\!, \xi_i^{(t,k)})\bigr)
		$
		\STATE $\!\! y_i^{(t,k+1)} = y_i^{(t)} + m_i^{(t,k)} - m_i^{(t-1)}$
		\STATE $\!\! x_i^{(t,k+1)} = x_i^{(t,k)} - \eta y_i^{(t,k+1)}$
		\ENDFOR
		\STATE
		$
		m_i^{(t)} \! = \frac{1}{\eta K}(x_i^{(t)} - x_i^{(t,K)}) - y_i^{(t)} + m_i^{(t-1)} 
		$
		\STATE $y_i^{(t+1)} = \sum_{j=1}^{n} b_{ij} y_j^{(t)} + m_i^{(t)} - m_i^{(t-1)}$
		\STATE $x_i^{(t+1)} = \sum_{j=1}^{n} a_{ij} x_j^{(t)} - \alpha y_i^{(t+1)}$
		\ENDFOR
	\end{algorithmic}
\end{algorithm}

As shown in Algorithm~\ref{alg2}, LASPP is not obtained by simply skipping communication rounds in ASPP. Its update rules are instead designed to accommodate multiple local steps between successive communication rounds. The main features of these update rules are discussed in the following two remarks.

\begin{remark} \label{remark3}
	An alternative interpretation of the local updates follows from rewriting Line~9 as
	\[
	y_i^{(t,k+1)} = m_i^{(t,k)} + \bigl(y_i^{(t)}-m_i^{(t-1)}\bigr).
	\]
	This form reveals a correction mechanism analogous to that used in SCAFFOLD~\citep{SCAFFOLD}. SCAFFOLD uses the difference between estimates of the global and local update directions to correct local updates and mitigate client drift. LASPP applies an analogous correction using $y_i^{(t)}-m_i^{(t-1)}$, where $y_i^{(t)}$ incorporates network information and $m_i^{(t-1)}$ is the previous local momentum estimate. With this correction, each agent forms its local direction by adding the momentum difference $m_i^{(t,k)}-m_i^{(t-1)}$ to $y_i^{(t)}$, rather than using the current local momentum alone. In the analysis, the tracking recursion and the recursive momentum structure can then be used together to bound the deviations of the local iterates from $x_i^{(t)}$.
\end{remark}

\begin{remark} \label{remark4}
	Combining \eqref{eq4} with Line~8, 
	$m_i^{(t)}$ can be written as
	\begin{equation}\label{eq5} 
		\begin{aligned}
			&m_i^{(t)} = \frac{1}{K}\sum_{k=0}^{K-1}g_i(x_i^{(t,k)}, \xi_i^{(t,k)}) + \beta \Bigl(m_i^{(t-1)} - \frac{1}{K}\sum_{k=0}^{K-1}g_i(x_i^{(t-1)}, \xi_i^{(t,k)})\Bigr).
		\end{aligned}
	\end{equation}
	This expression shows that $m_i^{(t)}$ retains the recursive momentum structure. More importantly, compared with \eqref{eq3}, $m_i^{(t)}$ in \eqref{eq5} is constructed using averages of stochastic gradients over $K$ local steps. Intuitively, this averaging helps mitigate the effect of stochastic gradient noise. However, the first average in \eqref{eq5} uses gradients evaluated at successive local iterates $x_i^{(t,k)}$ rather than all at the starting point $x_i^{(t)}$. The subsequent analysis shows how to control the deviations of these local iterates from $x_i^{(t)}$ while preserving the noise reduction from gradient averaging, leading to improved communication complexity.
\end{remark}

\subsection{Convergence Analysis}

In this subsection, we analyze LASPP and characterize the effect of local updates on convergence. We retain the notation of Section~3.2 and define 
\[
\mathbf x^{(t,k)}
:=[x_1^{(t,k)},x_2^{(t,k)},\ldots,x_n^{(t,k)}]^\top
\in\mathbb R^{n\times p}.
\] 
Since LASPP preserves the communication updates and initialization of ASPP, the descent inequality in Lemma~\ref{lemma2} and the consensus error bound in Lemma~\ref{lemma3} remain valid under their stated conditions. We therefore need bounds on the tracking and average momentum estimation errors for LASPP. To establish the communication benefit of local updates, these bounds must capture the reduction in sampling noise from gradient averaging in \eqref{eq5} while accounting for the errors introduced as the local iterates $x_i^{(t,k)}$ deviate from the starting point $x_i^{(t)}$ of each round.

As in the analysis of ASPP, we first bound the tracking error by controlling the differences $\mathbf m^{(t)}-\mathbf m^{(t-1)}$. From \eqref{eq5}, we have, for $t\ge1$,
\begin{equation} \label{lemmaeq71}
	\begin{aligned}
		m_i^{(t)}-m_i^{(t-1)}
		={}&\frac1K\sum_{k=0}^{K-1}\bigl(g_i(x_i^{(t,k)},\xi_i^{(t,k)})-g_i(x_i^{(t-1)},\xi_i^{(t,k)})\bigr)\\
		&-(1-\beta)\Bigl(m_i^{(t-1)}-\frac1K\sum_{k=0}^{K-1}g_i(x_i^{(t-1)},\xi_i^{(t,k)})\Bigr).
	\end{aligned}
\end{equation}
Bounding the first term under mean-squared smoothness introduces both $\lVert x_i^{(t)}-x_i^{(t-1)}\rVert^2$ and the local drift $\lVert x_i^{(t,k)}-x_i^{(t)}\rVert^2$. In the second term, the stochastic gradients are evaluated at the same point using $K$ independent samples, so their average has sampling variance at most $\sigma^2/K$. Using this variance bound and the recursive momentum structure, we obtain a variance term of order $nTs_B^2(1-\beta)^2\sigma^2/K$ in the tracking error bound. The following lemma states the resulting bound.

\begin{lemma} \label{lemma7}
	Suppose Assumptions~\ref{assumption1}--\ref{assumption4} hold and $\beta\in[0,1)$. Then the following inequality holds
	\begin{equation} \label{lemmaeq7}
		\begin{aligned}
			&\sum_{t=0}^{T-1}\mathbb E\bigl[\lVert\mathbf y^{(t+1)}-\bar{\mathbf y}^{(t+1)}\rVert_F^2\bigr] \le{}15 s_B^2 L^2\sum_{t=0}^{T-1}\mathbb E\bigl[\lVert\mathbf x^{(t+1)}-\mathbf x^{(t)}\rVert_F^2\bigr]\\
			&\quad +\frac{18 s_B^2 L^2}{K}\sum_{t=1}^{T-1}\sum_{k=0}^{K-1}\mathbb E\bigl[\lVert\mathbf x^{(t,k)}-\mathbf x^{(t)}\rVert_F^2\bigr]\\
			&\quad +\frac{12nT s_B^2 (1-\beta)^2\sigma^2}{K} + \frac{5ns_B^2\sigma^2}{b_0} +4ns_B^2\Delta_fL.
		\end{aligned}
	\end{equation}
\end{lemma}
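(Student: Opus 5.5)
We follow the route used for Lemma~\ref{lemma4} and first reduce the tracking error to the momentum differences. Since $B$ is column-stochastic, $(I-\pi_B\mathds 1^\top)B=B-\pi_B\mathds 1^\top$. Unrolling \eqref{B.0.1} from $\mathbf y^{(0)}=\mathbb 0$ therefore gives
\[
\mathbf y^{(t+1)}-\bar{\mathbf y}^{(t+1)}=\sum_{l=0}^{t}\bigl(B^{t-l}-\pi_B\mathds 1^\top\bigr)\bigl(\mathbf m^{(l)}-\mathbf m^{(l-1)}\bigr).
\]
We apply the weighted Cauchy--Schwarz inequality with weights $\lVert B^{t-l}-\pi_B\mathds 1^\top\rVert_2$ and sum over $t$. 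This yields
\[
\sum_{t=0}^{T-1}\mathbb E\bigl[\lVert\mathbf y^{(t+1)}-\bar{\mathbf y}^{(t+1)}\rVert_F^2\bigr]\le s_B^2\sum_{t=0}^{T-1}\mathbb E\bigl[\lVert\mathbf m^{(t)}-\mathbf m^{(t-1)}\rVert_F^2\bigr].
\]
The $t=0$ term equals $\lVert\mathbf m^{(0)}\rVert_F^2$. We bound it as in Lemma~\ref{lemma4}, splitting into the mini-batch noise (at most $n\sigma^2/b_0$) and $\lVert\nabla f(\mathbf x^{(0)})\rVert_F^2\le 2nL\Delta_f$. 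The latter uses $L$-smoothness and the lower bound in Assumption~\ref{assumption2}. Together these give the terms $5ns_B^2\sigma^2/b_0$ and $4ns_B^2\Delta_fL$.

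For $t\ge1$ we start from \eqref{lemmaeq71}. For the first term we write $g_i(x_i^{(t,k)},\cdot)-g_i(x_i^{(t-1)},\cdot)$ as the sum of two pieces: a difference between $x_i^{(t,k)}$ and $x_i^{(t)}$, and a difference between $x_i^{(t)}$ and $x_i^{(t-1)}$. Jensen over $k$ and Assumption~\ref{assumption3} then bound this term by $L^2\lVert x_i^{(t)}-x_i^{(t-1)}\rVert^2$ plus $\frac{L^2}{K}\sum_k\lVert x_i^{(t,k)}-x_i^{(t)}\rVert^2$, each up to a constant. For the second term we write
\[
e_i^{(t-1)}:=m_i^{(t-1)}-\nabla f_i(x_i^{(t-1)}),
\]
so the term equals $-(1-\beta)\bigl(e_i^{(t-1)}-\zeta_i^{(t)}\bigr)$. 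Here $\zeta_i^{(t)}:=\frac1K\sum_k\bigl(g_i(x_i^{(t-1)},\xi_i^{(t,k)})-\nabla f_i(x_i^{(t-1)})\bigr)$ is conditionally mean zero given the past, with variance at most $\sigma^2/K$ because the $K$ samples are independent. This leaves
\[
\mathbb E\lVert\mathbf m^{(t)}-\mathbf m^{(t-1)}\rVert_F^2\lesssim L^2\lVert\mathbf x^{(t)}-\mathbf x^{(t-1)}\rVert_F^2+\frac{L^2}{K}\sum_k\lVert\mathbf x^{(t,k)}-\mathbf x^{(t)}\rVert_F^2+(1-\beta)^2\Bigl(\lVert\mathbf e^{(t-1)}\rVert_F^2+\frac{n\sigma^2}{K}\Bigr).
\]

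The main step is to bound $(1-\beta)^2\sum_t\mathbb E\lVert\mathbf e^{(t)}\rVert_F^2$ by quantities carrying the $1/K$ variance factor. From \eqref{eq5},
\[
e_i^{(t)}=\beta e_i^{(t-1)}+\beta\,\zeta_i'^{(t)}+\frac1K\sum_k\bigl(g_i(x_i^{(t,k)},\xi_i^{(t,k)})-\nabla f_i(x_i^{(t,k)})\bigr)+\frac1K\sum_k\bigl(\nabla f_i(x_i^{(t,k)})-\nabla f_i(x_i^{(t)})\bigr),
\]
where $\zeta_i'^{(t)}$ collects the variance-reduced differences $g_i(x_i^{(t,k)},\xi)-g_i(x_i^{(t-1)},\xi)$ minus their means. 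All terms except the last, deterministic drift term are conditionally mean zero. Squaring the first three terms then gives $\beta^2\lVert e_i^{(t-1)}\rVert^2$ plus variance terms of order $L^2$ times (step and drift) and $(1-\beta)^2\sigma^2/K$. These come from the identity $\beta\zeta'+\text{noise}=(\text{smoothness difference})+(1-\beta)\cdot\text{noise}$ together with independence across $k$. The drift term is handled with Young's inequality using weight $(1-\beta)$, at cost $\frac{1}{1-\beta}$ times the drift.

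Summing the recursion gives $\sum_t\lVert\mathbf e^{(t)}\rVert_F^2\lesssim\frac{1}{1-\beta}\bigl(\cdots\bigr)+\frac{n\sigma^2}{b_0(1-\beta)}$. Multiplying by $(1-\beta)^2$ and using $1-\beta\le1$ produces the claimed $12nT(1-\beta)^2\sigma^2/K$ term, plus $L^2$ multiples of the step and drift sums. It also produces an extra initialization contribution at most $(1-\beta)n\sigma^2/b_0\le n\sigma^2/b_0$, which is absorbed into $5ns_B^2\sigma^2/b_0$.

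The obstacle is doing this bookkeeping carefully, since the drift term must be split off before exploiting martingale orthogonality. Once that is done, collecting constants gives the coefficients $15$ and $18$ and completes the proof.
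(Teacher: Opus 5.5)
Your plan is correct, and its skeleton is the paper's. The paper proves this lemma in the same steps:
\begin{itemize}
\item it reduces to $s_B^2\sum_t\mathbb E\lVert\mathbf m^{(t)}-\mathbf m^{(t-1)}\rVert_F^2$ via Lemma~\ref{lemma11};
\item it treats the $t=0$ term exactly as you do, giving $2ns_B^2\sigma^2/b_0+4ns_B^2\Delta_fL$;
\item it bounds each difference from \eqref{lemmaeq71} by a three-way Young inequality (Lemma~\ref{lemma16});
\item it controls $(1-\beta)^2\sum_t\mathbb E\lVert\mathbf e^{(t)}\rVert_F^2$ by a separate recursion (Lemma~\ref{lemma15}), with the drift cross term against $\beta e_i^{(t-1)}$ handled by Young's inequality with weight $1-\beta$.
\end{itemize}

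The genuine difference lies in that recursion. The paper centers at the round's starting point:
\[
e_i^{(t)}-\beta e_i^{(t-1)}=\frac{1-\beta}{K}\sum_k\delta_{1,i}^{(t,k)}+\frac{\beta}{K}\sum_k\bigl(\delta_{1,i}^{(t,k)}-\delta_{2,i}^{(t,k)}\bigr)+\frac1K\sum_k\Delta_i^{(t,k)}.
\]
The $\delta$'s live at the $\mathcal F_{t,0}$-measurable points $x_i^{(t)},x_i^{(t-1)}$. They are therefore conditionally i.i.d., and the step term needs no splitting. All local drift sits in the uncentered $\Delta_i^{(t,k)}=g_i(x_i^{(t,k)},\xi_i^{(t,k)})-g_i(x_i^{(t)},\xi_i^{(t,k)})$, which is controlled by Jensen and the tower property.

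You instead center at the moving iterates $x_i^{(t,k)}$ and isolate a deterministic drift $D$. This needs martingale orthogonality with respect to $\mathcal F_{t,k}$, not independence across $k$ as you first say. It gains a factor $1/K$ on the fluctuating part of the drift. It gains nothing in order, because $D$ still contributes $\frac{L^2}{K(1-\beta)}\sum_k\lVert x_i^{(t,k)}-x_i^{(t)}\rVert^2$ through the cross term.

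Two bookkeeping points need fixing. Write $N_i^{(t)}:=\frac1K\sum_k\bigl(g_i(x_i^{(t,k)},\xi_i^{(t,k)})-\nabla f_i(x_i^{(t,k)})\bigr)$ for your noise term.

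\textbf{The displayed recursion.} It double counts the noise. If $\zeta_i'^{(t)}$ is the centered difference between $x_i^{(t,k)}$ and $x_i^{(t-1)}$, then $N_i^{(t)}$ must carry the factor $1-\beta$, as your next sentence suggests.

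\textbf{The constants.} $15$ and $18$ do not simply fall out of your route. Your smoothness term spans $x_i^{(t,k)}$ to $x_i^{(t-1)}$, so it must be split into step and drift parts. With unit-weight Young inequalities, three-way in both the momentum difference and the recursion, the step coefficient becomes $6+3\cdot6=24>15$.

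You can recover the stated constants by tuning the Young weights. More simply, observe two facts:
\begin{itemize}
\item $\beta\zeta_i'^{(t)}+(1-\beta)N_i^{(t)}+D$ is algebraically identical to the paper's three terms above;
\item $D$ and $\frac1K\sum_k\Delta_i^{(t,k)}$ have the same conditional mean given $\mathcal F_{t,0}$.
\end{itemize}
Regrouping then yields the paper's recursion
\[
\mathbb E\lVert e_i^{(t)}\rVert^2\le\beta\,\mathbb E\lVert e_i^{(t-1)}\rVert^2+\frac{4L^2}{K(1-\beta)}\sum_k\mathbb E\lVert x_i^{(t,k)}-x_i^{(t)}\rVert^2+\frac{3\beta^2L^2}{K}\mathbb E\lVert x_i^{(t)}-x_i^{(t-1)}\rVert^2+\frac{3(1-\beta)^2\sigma^2}{K}.
\]
Combined with Lemma~\ref{lemma16}, this gives the coefficients exactly: $6+9=15$, $6+12=18$, $3+9=12$, and $2+3=5$.
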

\begin{proof}
See Appendix~\ref{appendixC.1}. \qed
\end{proof}

We next bound the average momentum estimation error. By analyzing the recursive momentum update, we obtain the following bound with a variance term of order $T(1-\beta)\sigma^2/(nK)$.

\begin{lemma} \label{lemma8}
	Suppose Assumptions~\ref{assumption1}--\ref{assumption4} hold and $\beta\in[0,1)$. Then the following inequality holds
	\begin{equation} \label{lemmaeq8}
		\begin{aligned}
			&\sum_{t=0}^{T-1}\mathbb E\bigl[\lVert\bar m^{(t)} - \overline {\nabla f}(\mathbf x^{(t)})\rVert^2\bigr] \le \frac{4 L^2}{n K (1-\beta)^2} \sum_{t=1}^{T-1} \sum_{k=0}^{K-1} \mathbb E \bigl[ \lVert {\mathbf x}^{(t,k)} - {\mathbf x}^{(t)} \rVert_F^2 \bigr]\\
			&+ \frac{3 L^2}{n^2 K (1-\beta)} \sum_{t=0}^{T-1} \mathbb E \bigl[ \lVert {\mathbf x}^{(t+1)} - {\mathbf x}^{(t)} \rVert_F^2\bigr] + \frac{3T(1-\beta) \sigma^2}{n K} + \frac{\sigma^2}{n b_0 (1-\beta)},
		\end{aligned}
	\end{equation}
\end{lemma}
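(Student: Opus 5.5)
We will prove Lemma~\ref{lemma8} by analyzing the averaged momentum error $e^{(t)}:=\bar m^{(t)}-\overline{\nabla f}(\mathbf x^{(t)})$, where we take $\overline{\nabla f}(\mathbf x^{(t)})$ as the target. Averaging \eqref{eq5} over agents gives, for $t\ge1$,
\[
e^{(t)}=\beta e^{(t-1)}+\frac{1}{nK}\sum_{i=1}^n\sum_{k=0}^{K-1}\Bigl(u_i^{(t,k)}+\beta w_i^{(t,k)}\Bigr)+\beta\,\delta^{(t)}+ (1-\beta)\,\zeta^{(t)},
\]
where the terms are defined as follows.
\begin{itemize}
\item $u_i^{(t,k)}:=g_i(x_i^{(t,k)},\xi_i^{(t,k)})-g_i(x_i^{(t)},\xi_i^{(t,k)})-\bigl(\nabla f_i(x_i^{(t,k)})-\nabla f_i(x_i^{(t)})\bigr)$ is the zero-mean local-drift noise.
\item $w_i^{(t,k)}:=g_i(x_i^{(t)},\xi_i^{(t,k)})-g_i(x_i^{(t-1)},\xi_i^{(t,k)})-\bigl(\nabla f_i(x_i^{(t)})-\nabla f_i(x_i^{(t-1)})\bigr)$ is the zero-mean increment noise.
\item $\zeta^{(t)}$ collects the pure sampling noise $g_i(x_i^{(t-1)},\xi_i^{(t,k)})-\nabla f_i(x_i^{(t-1)})$, up to bookkeeping: the natural split writes $g(x^{(t,k)})-\beta g(x^{(t-1)})$ as $(1-\beta)g(x^{(t-1)})+[g(x^{(t,k)})-g(x^{(t-1)})]$.
\item $\delta^{(t)}:=\frac{1}{nK}\sum_{i,k}\bigl(\nabla f_i(x_i^{(t,k)})-\nabla f_i(x_i^{(t)})\bigr)$ is the \emph{deterministic} drift bias.
\end{itemize}
Conditioned on the past $\mathcal F^{(t)}$ (everything before round $t$'s samples), the noise terms are mean zero. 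The bias $\delta^{(t)}$ is not, since $x_i^{(t,k)}$ depends on $\xi_i^{(t,0)},\dots,\xi_i^{(t,k-1)}$.

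\textbf{Step 1: one-step recursion.} We expand $\mathbb E\lVert e^{(t)}\rVert^2$ by separating three groups.

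The noise group $\frac{1}{nK}\sum(\cdot)$ is a martingale-difference sum over $(i,k)$. Agents are independent, and within an agent the increments indexed by $k$ are martingale differences with respect to the local filtration. Cross terms therefore vanish, giving the factor $\frac{1}{n^2K^2}\sum_{i,k}$. Mean-squared smoothness then bounds
\[
\mathbb E\lVert u\rVert^2\le L^2\mathbb E\lVert x_i^{(t,k)}-x_i^{(t)}\rVert^2,\qquad \mathbb E\lVert w\rVert^2\le L^2\mathbb E\lVert x_i^{(t)}-x_i^{(t-1)}\rVert^2,
\]
and the sampling noise contributes $(1-\beta)^2\sigma^2/(nK)$.

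For the bias, we use Young's inequality $\lVert\beta e^{(t-1)}+\beta\delta^{(t)}\rVert^2\le(1+\tfrac{1-\beta}{2})\beta^2\lVert e^{(t-1)}\rVert^2+(1+\tfrac{2}{1-\beta})\lVert\delta^{(t)}\rVert^2$. Combined with Jensen and $L$-smoothness, $\lVert\delta^{(t)}\rVert^2\le\frac{L^2}{nK}\sum_{i,k}\lVert x_i^{(t,k)}-x_i^{(t)}\rVert^2$. The bias and the noise are correlated, so we fold the noise cross terms into the same Young step with a constant factor. This is where the constants $3$ and $4$ originate. Since $(1+\frac{1-\beta}{2})\beta^2\le\beta\le1-\frac{1-\beta}{1}\cdot\frac12$, the result is
\[
\mathbb E\lVert e^{(t)}\rVert^2\le\Bigl(1-\tfrac{1-\beta}{2}\Bigr)\mathbb E\lVert e^{(t-1)}\rVert^2+\frac{4L^2}{nK(1-\beta)}\sum_k\mathbb E\lVert\mathbf x^{(t,k)}-\mathbf x^{(t)}\rVert_F^2+\frac{3L^2}{n^2K}\mathbb E\lVert\mathbf x^{(t)}-\mathbf x^{(t-1)}\rVert_F^2+\frac{3(1-\beta)^2\sigma^2}{nK}.
\]
The $\frac{2}{1-\beta}$ Young factor is what produces the extra $(1-\beta)^{-1}$ on the drift term in the final bound. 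We will tune constants so the contraction factor is $1-c(1-\beta)$ with the stated coefficients surviving.

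\textbf{Step 2: summation.} The initial error satisfies $\mathbb E\lVert e^{(0)}\rVert^2\le\sigma^2/(nb_0)$ by the mini-batch and independence across agents. Summing the contraction over $t=0,\dots,T-1$ and dividing by $1-\beta$ gives
\[
\sum_t\mathbb E\lVert e^{(t)}\rVert^2\le\frac{\mathbb E\lVert e^{(0)}\rVert^2}{1-\beta}+\frac{1}{1-\beta}\sum_{t\ge1}(\text{forcing}).
\]
This yields $\frac{\sigma^2}{nb_0(1-\beta)}$, $\frac{3T(1-\beta)\sigma^2}{nK}$, the drift term with $(1-\beta)^{-2}$, and the displacement term $\frac{3L^2}{n^2K(1-\beta)}$. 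Finally, we reindex $\sum_{t\ge1}\lVert\mathbf x^{(t)}-\mathbf x^{(t-1)}\rVert^2$ as $\sum_{t=0}^{T-1}\lVert\mathbf x^{(t+1)}-\mathbf x^{(t)}\rVert^2$.

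\textbf{Main obstacle.} The delicate point is Step 1's handling of the local-drift bias. Because $x_i^{(t,k)}$ is adapted to earlier local samples, $\delta^{(t)}$ does not vanish in conditional expectation, and its cross terms with $u_i^{(t,k')}$ for $k'<k$ do not drop. We first try to split off the bias as a deterministic-like term via Young with weight $(1-\beta)$, accepting the $(1-\beta)^{-2}$ loss. We still need to check that the $K$-fold averaging survives, so that the sampling noise keeps its $1/K$ and the $u$-noise keeps its $1/(nK)\cdot\frac1K\sum_k$ scaling. For this we verify carefully the martingale-difference orthogonality within each agent's local loop.
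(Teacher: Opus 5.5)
Your architecture is the paper's: a one-step recursion for $e^{(t)}$, a Young step with weight of order $1-\beta$ on the cross term between $\beta e^{(t-1)}$ and the drift, then unrolling. But two steps as written do not deliver the stated inequality.

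\textbf{Contraction factor.} You relax $(1+\frac{1-\beta}{2})\beta^2$ to $1-\frac{1-\beta}{2}$. A recursion with that factor sums to $\frac{2}{1-\beta}$, not $\frac{1}{1-\beta}$, so every constant doubles; in particular the initialization term becomes $\frac{2\sigma^2}{nb_0(1-\beta)}$. To get the stated initialization term you need the contraction to be $\beta$ itself, which you had already shown.

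\textbf{Bias--noise cross term.} Once you split $\Delta_i^{(t,k)}:=g_i(x_i^{(t,k)},\xi_i^{(t,k)})-g_i(x_i^{(t)},\xi_i^{(t,k)})$ into the bias $\delta^{(t)}$ and the noise $u_i^{(t,k)}$, the cross term $2\mathbb E\langle\delta^{(t)},N\rangle$ with the whole noise group $N$ does not vanish. You never bound it: ``folding it into the Young step'' is asserted, not carried out. Done the obvious way, $2\langle\delta^{(t)},N\rangle\le\gamma\lVert\delta^{(t)}\rVert^2+\gamma^{-1}\lVert N\rVert^2$ multiplies the $\mathbb E\lVert N\rVert^2$ contribution by $1+\gamma^{-1}>1$. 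With your three-way split of $N$, the coefficient of $\frac{(1-\beta)^2\sigma^2}{nK}$ then exceeds $3$. Recovering the stated constants inside the split would at least require finer, $\beta$-dependent Young weights together with the identity $\mathbb E\lVert u_i^{(t,k)}\rVert^2+\mathbb E\lVert\nabla f_i(x_i^{(t,k)})-\nabla f_i(x_i^{(t)})\rVert^2=\mathbb E\lVert\Delta_i^{(t,k)}\rVert^2$. None of this is in the proposal.

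\textbf{Decomposition slip.} The bias enters with coefficient $1$, not $\beta$, since it comes from the unweighted average $\frac1K\sum_k g_i(x_i^{(t,k)},\xi_i^{(t,k)})$ in \eqref{eq5}. Also, $\beta w_i^{(t,k)}$ goes with pure noise taken at $x_i^{(t)}$; if that noise is taken at $x_i^{(t-1)}$, then $w_i^{(t,k)}$ enters with coefficient $1$.

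\textbf{The fix: do not split $\Delta_i^{(t,k)}$.} Write
$e^{(t)}=\beta e^{(t-1)}+\frac{1}{nK}\sum_{i,k}\bigl[(1-\beta)\delta_{1,i}^{(t,k)}+\beta(\delta_{1,i}^{(t,k)}-\delta_{2,i}^{(t,k)})+\Delta_i^{(t,k)}\bigr]$,
where $\delta_{1,i}^{(t,k)}$ and $\delta_{2,i}^{(t,k)}$ are the sampling noises at the $\mathcal F_{t,0}$-measurable points $x_i^{(t)}$ and $x_i^{(t-1)}$.
\begin{itemize}
\item Given $\mathcal F_{t,0}$, the summands of each of these two noise sums are mean zero and independent across $(i,k)$. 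Their second moments are therefore $nK\sigma^2$ and $KL^2\lVert\mathbf x^{(t)}-\mathbf x^{(t-1)}\rVert_F^2$.
\item The only cross term with $\beta e^{(t-1)}$ that survives is the one with $\frac{1}{nK}\mathbb E[\sum_{i,k}\Delta_i^{(t,k)}\mid\mathcal F_{t,0}]$. Bound it by Young with weight $1-\beta$ and Jensen.
\item For the square, use $\lVert a+b+c\rVert^2\le3(\lVert a\rVert^2+\lVert b\rVert^2+\lVert c\rVert^2)$. Bound the $\Delta$ part crudely by Cauchy--Schwarz, the tower property, and mean-squared smoothness: $\mathbb E\lVert\sum_{i,k}\Delta_i^{(t,k)}\rVert^2\le nKL^2\sum_k\mathbb E\lVert\mathbf x^{(t,k)}-\mathbf x^{(t)}\rVert_F^2$.
\end{itemize}
This gives contraction $\beta^2(2-\beta)\le\beta$ and drift coefficient $(3+\frac{1}{1-\beta})\frac{L^2}{nK}\le\frac{4L^2}{nK(1-\beta)}$. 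Unrolling then yields exactly the lemma.

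In particular, no within-loop martingale orthogonality for $u_i^{(t,k)}$ is needed. The ``main obstacle'' you identify is created by the split, and the split buys nothing here. The bias $\lVert\delta^{(t)}\rVert^2$ is itself only bounded by $\frac{L^2}{nK}\sum_k\lVert\mathbf x^{(t,k)}-\mathbf x^{(t)}\rVert_F^2$, which is the same as the crude bound on the whole average $\frac{1}{nK}\sum_{i,k}\Delta_i^{(t,k)}$.
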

\begin{proof}
See Appendix~\ref{appendixC.2}. \qed
\end{proof}

The bounds in Lemmas~\ref{lemma7} and~\ref{lemma8} still involve the term $\sum_{t=0}^{T-1}\mathbb E\bigl[\lVert\mathbf x^{(t+1)}-\mathbf x^{(t)}\rVert_F^2\bigr]$, as well as the local-drift term $\sum_{t=1}^{T-1}\sum_{k=0}^{K-1}\mathbb E\bigl[\lVert\mathbf x^{(t,k)}-\mathbf x^{(t)}\rVert_F^2\bigr]$. We first bound the former using the decomposition employed in the proof of Lemma~\ref{lemma6}. The following lemma gives the resulting bound.

\begin{lemma} \label{lemma9}
	Suppose Assumptions~\ref{assumption1}--\ref{assumption4} hold. If $\beta\in[0,1)$ and $0<\alpha\le\frac{1}{25hL}$, then the following inequality holds
	\begin{equation} \label{lemmaeq9}
		\begin{aligned}
			&\sum_{t=0}^{T-1}\mathbb E\bigl[\lVert\mathbf x^{(t+1)}-\mathbf x^{(t)}\rVert_F^2\bigr] \le{}6nc^2\alpha^2\sum_{t=0}^{T-1}\mathbb E\bigl[\lVert\bar m^{(t)}\rVert^2\bigr]\\
			&\quad +\frac{648\alpha^2h^2L^2}{K}\sum_{t=1}^{T-1}\sum_{k=0}^{K-1}\mathbb E\bigl[\lVert\mathbf x^{(t,k)}-\mathbf x^{(t)}\rVert_F^2\bigr]\\
			&\quad +\frac{432nT\alpha^2h^2(1-\beta)^2\sigma^2}{K} +\frac{180n\alpha^2h^2\sigma^2}{b_0} +144n\alpha^2h^2\Delta_fL.
		\end{aligned}
	\end{equation}
\end{lemma}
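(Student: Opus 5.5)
We follow the decomposition used for Lemma~\ref{lemma6}. By \eqref{B.0.2}, for $t\ge0$,
\[
\mathbf x^{(t+1)}-\mathbf x^{(t)}=-c\alpha\mathds1\bar m^{(t)}-\alpha\mathds1\pi_A^\top\bigl(\mathbf y^{(t+1)}-\bar{\mathbf y}^{(t+1)}\bigr)-\alpha\sum_{l=0}^{t}\bigl(A^{t-l}-\mathds1\pi_A^\top\bigr)\bigl(\mathbf y^{(l+1)}-\mathbf y^{(l)}\bigr).
\]
This identity depends only on the communication recursions, so LASPP inherits it unchanged. Applying $(a+b+c)^2\le 3(a^2+b^2+c^2)$ gives three terms.

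\begin{itemize}
\item The first term equals $3nc^2\alpha^2\lVert\bar m^{(t)}\rVert^2$, which is at most $6nc^2\alpha^2\lVert\bar m^{(t)}\rVert^2$. It is kept exactly, with no $h^2$ factor.
\item The second term is at most $3\alpha^2\lVert\pi_A\rVert_2^2\lVert\mathbf y^{(t+1)}-\bar{\mathbf y}^{(t+1)}\rVert_F^2$, and $\lVert\pi_A\rVert_2\le1$.
\item For the third term, apply the weighted Cauchy--Schwarz inequality with weights $\lVert A^{t-l}-\mathds1\pi_A^\top\rVert_2$, whose sum is bounded by $s_A$. Summing over $t$ then gives $s_A^2\sum_t\lVert\mathbf y^{(t+1)}-\mathbf y^{(t)}\rVert_F^2$.
\end{itemize}

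Next we bound the increments $\mathbf y^{(t+1)}-\mathbf y^{(t)}$. Column-stochasticity of $B$ gives $(B-I)\bar{\mathbf y}^{(t)}=\mathbb 0$. Hence
\[
\mathbf y^{(t+1)}-\mathbf y^{(t)}=(B-I)(\mathbf y^{(t)}-\bar{\mathbf y}^{(t)})+\mathbf m^{(t)}-\mathbf m^{(t-1)}.
\]
It follows that $\sum_t\lVert\mathbf y^{(t+1)}-\mathbf y^{(t)}\rVert_F^2\lesssim\sum_t\lVert\mathbf y^{(t)}-\bar{\mathbf y}^{(t)}\rVert_F^2+\sum_t\lVert\mathbf m^{(t)}-\mathbf m^{(t-1)}\rVert_F^2$.

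\begin{itemize}
\item The tracking-error sums are controlled by Lemma~\ref{lemma7}. That lemma is in fact proved from the bound $s_B^2\sum_t\lVert\mathbf m^{(t)}-\mathbf m^{(t-1)}\rVert_F^2$, so both contributions reduce to the single quantity $\sum_t\lVert\mathbf m^{(t)}-\mathbf m^{(t-1)}\rVert_F^2$.
\item This quantity is the one estimated in the proof of Lemma~\ref{lemma7} via \eqref{lemmaeq71}. It is bounded by
\[
\lesssim L^2\sum_t\lVert\mathbf x^{(t+1)}-\mathbf x^{(t)}\rVert_F^2+\tfrac{L^2}{K}\sum_{t,k}\lVert\mathbf x^{(t,k)}-\mathbf x^{(t)}\rVert_F^2+\tfrac{nT(1-\beta)^2\sigma^2}{K}+\tfrac{n\sigma^2}{b_0}+n\Delta_fL.
\]
\item The $n\Delta_fL$ and $n\sigma^2/b_0$ pieces come from the initial jump $\mathbf m^{(0)}-\mathbf m^{(-1)}=\mathbf m^{(0)}$. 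Its size is controlled using $\lVert\nabla f_i(x^{(0)})\rVert^2\le2L\Delta_f$ together with the initial minibatch variance $\sigma^2/b_0$.
\end{itemize}

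Collecting the coefficients gives prefactors $\alpha^2(s_A^2+s_B^2+s_A^2s_B^2)$, up to the factors of $n$ that come with the variance and initialization terms. These are absorbed into $\alpha^2h^2$ through $h^2=ns_A^2+ns_B^2+s_A^2s_B^2$.

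The resulting inequality has the form
\[
S\le 3nc^2\alpha^2\textstyle\sum\lVert\bar m^{(t)}\rVert^2+C\alpha^2h^2L^2 S+(\text{drift, variance, init terms}),
\qquad S:=\sum_{t=0}^{T-1}\lVert\mathbf x^{(t+1)}-\mathbf x^{(t)}\rVert_F^2,
\]
with $C$ an absolute constant. The step-size condition $\alpha\le\frac{1}{25hL}$ ensures $C\alpha^2h^2L^2\le\tfrac12$. Rearranging doubles every remaining coefficient, which turns $3nc^2\alpha^2$ into $6nc^2\alpha^2$ and yields the constants $648$, $432$, $180$ and $144$ in \eqref{lemmaeq9}.

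The main obstacle is bookkeeping rather than any new idea. The $\bar m$ term must be split off before Cauchy--Schwarz is applied, so that its coefficient carries no $h^2$. All other pieces must be tracked so that their constants, after the $\tfrac12$ absorption, match those stated. The local-drift contribution enters only through \eqref{lemmaeq71}, via mean-squared smoothness applied to $g_i(x_i^{(t,k)},\cdot)-g_i(x_i^{(t-1)},\cdot)$ after splitting at $x_i^{(t)}$; this is why it appears with the factor $L^2/K$ after averaging over $k$ with Jensen.
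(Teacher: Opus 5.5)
Your skeleton matches the paper's proof: decomposition \eqref{B.5.2}, Lemma~\ref{lemma11} on the third term, \eqref{B.3.3} to turn tracking errors into $\sum_t\lVert\mathbf m^{(t)}-\mathbf m^{(t-1)}\rVert_F^2$, Lemmas~\ref{lemma16} and~\ref{lemma12}, then absorption. However, the step that bounds the tracking increments does not give the stated $h^2$ dependence.

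The identity $\mathbf y^{(t+1)}-\mathbf y^{(t)}=(B-I)(\mathbf y^{(t)}-\bar{\mathbf y}^{(t)})+\mathbf m^{(t)}-\mathbf m^{(t-1)}$ is correct; it uses $B\pi_B=\pi_B$, not column-stochasticity. But your ``$\lesssim$'' then hides the factor $\lVert B-I\rVert_2^2$, which is not an absolute constant. In general one only has $\lVert B-I\rVert_2\le\min\{1+\sqrt n,s_B\}$. Under Assumption~\ref{assumption4} it can be of order $\sqrt n$ even on zero-sum vectors, e.g.\ when the columns of half of the nodes put almost all their mass on one node and those of the other half on another node.

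Multiplying by the $3s_A^2$ of the third term and the $s_B^2$ of \eqref{B.3.3} leaves a coefficient of order $\lVert B-I\rVert_2^2s_A^2s_B^2$ on $\sum_t\lVert\mathbf m^{(t)}-\mathbf m^{(t-1)}\rVert_F^2$. When $\lVert B-I\rVert_2^2\asymp n$, this exceeds $h^2=ns_A^2+ns_B^2+s_A^2s_B^2$ by a factor of order $\min\{n,s_A^2,s_B^2\}$. So you cannot conclude a bound of the form $C\alpha^2h^2$ with absolute $C$. The absorption under $\alpha\le\frac{1}{25hL}$ fails, and the constants $648,432,180,144$ do not come out. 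Your decomposition saves a factor $n$ on the momentum differences but pays $\lVert B-I\rVert_2^2$ on the tracking errors, and it is the latter that $h^2$ cannot absorb.

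The repair is the paper's splitting \eqref{B.5.6}. Since $\mathds1^\top\mathbf y^{(t+1)}=\mathds1^\top\mathbf m^{(t)}$, one has $\bar{\mathbf y}^{(t+1)}-\bar{\mathbf y}^{(t)}=\pi_B\mathds1^\top(\mathbf m^{(t)}-\mathbf m^{(t-1)})$, so
\[\mathbf y^{(t+1)}-\mathbf y^{(t)}=\pi_B\mathds1^\top(\mathbf m^{(t)}-\mathbf m^{(t-1)})+(\mathbf y^{(t+1)}-\bar{\mathbf y}^{(t+1)})-(\mathbf y^{(t)}-\bar{\mathbf y}^{(t)}).\]
Then $\lVert\pi_B\mathds1^\top\rVert_2^2\le n$ gives \eqref{B.5.7} with constants $3n$ and $6$, and $B-I$ never appears.

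Your second term also needs a correction: $\lVert\mathds1\pi_A^\top M\rVert_F^2=n\lVert\pi_A^\top M\rVert^2$, so the bound is $3n\alpha^2\lVert\mathbf y^{(t+1)}-\bar{\mathbf y}^{(t+1)}\rVert_F^2$. Dropping the $n$ is harmless here because $ns_B^2\le h^2$. It does show that the factors of $n$ in $h^2$ are structural, coming from $\lVert\mathds1\pi_A^\top\rVert_2^2,\lVert\pi_B\mathds1^\top\rVert_2^2\le n$, not inherited from variance or initialization terms.

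With these two fixes, the total coefficient is $3ns_B^2+9ns_A^2+18s_A^2s_B^2\le18h^2$, which is \eqref{C.3.1}. Apply Lemma~\ref{lemma16} for $t\ge1$, and split $\mathbf m^{(0)}$ using Lemma~\ref{lemma12} and $\lVert\nabla f(\mathbf x^{(0)})\rVert_F^2\le2n\Delta_fL$. This gives the coefficient $270\alpha^2h^2L^2\le\frac12$ on the left-hand sum. Doubling $18\cdot18$, $18\cdot12$, $18\cdot3+36$, and $36\cdot2$ yields exactly $648$, $432$, $180$, and $144$.
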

\begin{proof}
See Appendix~\ref{appendixC.3}. \qed
\end{proof}

We now exploit the structure of the local updates to bound the local drift. For $t\ge1$, unrolling the local recursion for $x_i^{(t,k)}$ gives
\[
x_i^{(t,k)}-x_i^{(t)}
=-\eta k y_i^{(t)}
-\eta\sum_{r=0}^{k-1}\bigl(m_i^{(t,r)}-m_i^{(t-1)}\bigr).
\]
We bound the first term through the tracking error and average momentum, and the second using an expansion analogous to \eqref{lemmaeq71}. Combining these estimates and averaging over the $K$ local steps, we obtain the following local-drift bound with a variance term of order $nT\eta^2Ks_B^2(1-\beta)^2\sigma^2$.

\begin{lemma} \label{lemma10}
	Suppose Assumptions~\ref{assumption1}--\ref{assumption4} hold and $\beta\in[0,1)$. If $0<\alpha\le\frac{1}{25hL}$ and $0<\eta\le\frac{1}{20s_BKL}$, then the following inequality holds
	\begin{equation} \label{lemmaeq10}
		\begin{aligned}
			&\frac{1}{K}\sum_{t=1}^{T-1}\sum_{k=0}^{K-1}\mathbb E\bigl[\lVert\mathbf x^{(t,k)}-\mathbf x^{(t)}\rVert_F^2\bigr] \le{}10n^2\eta^2K^2\sum_{t=0}^{T-1}\mathbb E\bigl[\lVert\bar m^{(t)}\rVert^2\bigr]\\
			&\quad +66n\eta^2K^2s_B^2\Delta_fL +216nT\eta^2Ks_B^2(1-\beta)^2\sigma^2 +\frac{86n\eta^2K^2s_B^2\sigma^2}{b_0}.
		\end{aligned}
	\end{equation}
\end{lemma}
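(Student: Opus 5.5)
We start from the unrolled local recursion stated just before the lemma,
\[
\mathbf x^{(t,k)}-\mathbf x^{(t)}=-\eta k\,\mathbf y^{(t)}-\eta\sum_{r=0}^{k-1}\bigl(\mathbf m^{(t,r)}-\mathbf m^{(t-1)}\bigr),
\]
and split it as $\lVert\mathbf x^{(t,k)}-\mathbf x^{(t)}\rVert_F^2\le 2\eta^2k^2\lVert\mathbf y^{(t)}\rVert_F^2+2\eta^2k\sum_{r<k}\lVert\mathbf m^{(t,r)}-\mathbf m^{(t-1)}\rVert_F^2$. For the first piece, decompose $\mathbf y^{(t)}=(\mathbf y^{(t)}-\bar{\mathbf y}^{(t)})+\pi_B\mathds 1^\top\mathbf m^{(t-1)}$, using \eqref{B.0.4}. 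Since $\lVert\pi_B\rVert\le1$, this gives $\lVert\mathbf y^{(t)}\rVert_F^2\le 2\lVert\mathbf y^{(t)}-\bar{\mathbf y}^{(t)}\rVert_F^2+2n^2\lVert\bar m^{(t-1)}\rVert^2$. This step produces the $n^2\eta^2K^2\sum_t\lVert\bar m^{(t)}\rVert^2$ term after shifting indices, since $t-1$ ranges over $0,\dots,T-2$. The tracking error is handled by Lemma~\ref{lemma7}, with indices shifted by one.

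For the second piece, expand as in \eqref{lemmaeq71}:
\[
m_i^{(t,r)}-m_i^{(t-1)}=\bigl(g_i(x_i^{(t,r)},\xi_i^{(t,r)})-g_i(x_i^{(t-1)},\xi_i^{(t,r)})\bigr)-(1-\beta)\bigl(m_i^{(t-1)}-g_i(x_i^{(t-1)},\xi_i^{(t,r)})\bigr).
\]
By mean-squared smoothness, the first term is bounded by $2L^2(\lVert x_i^{(t,r)}-x_i^{(t)}\rVert^2+\lVert x_i^{(t)}-x_i^{(t-1)}\rVert^2)$. The second term, with a single sample, is bounded by $(1-\beta)^2$ times $2\lVert m_i^{(t-1)}-\nabla f_i(x_i^{(t-1)})\rVert^2+2\sigma^2$. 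This is why the variance term is $K(1-\beta)^2\sigma^2$ rather than $(1-\beta)^2\sigma^2/K$.

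The per-agent momentum error $\sum_t\mathbb E\lVert\mathbf m^{(t)}-\nabla f(\mathbf x^{(t)})\rVert_F^2$ needs its own recursion. Analogous to Lemma~\ref{lemma8} but without averaging over agents, it yields a bound of order
\[
\frac{L^2}{1-\beta}\sum\lVert\mathbf x^{(t+1)}-\mathbf x^{(t)}\rVert_F^2+\frac{L^2}{(1-\beta)K}\sum\textstyle\sum_k\lVert\mathbf x^{(t,k)}-\mathbf x^{(t)}\rVert_F^2+\frac{nT(1-\beta)\sigma^2}{K}+\frac{n\sigma^2}{b_0(1-\beta)}.
\]
After multiplying by $(1-\beta)^2$, it contributes terms of order $(1-\beta)$ times those quantities. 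These are harmless, or, as in the proof of Lemma~\ref{lemma4}, it may be cleaner to reuse the same intermediate inequality for $\lVert\mathbf m^{(t)}-\mathbf m^{(t-1)}\rVert$-type quantities that was derived in Appendix~\ref{appendixC.1}.

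Summing over $k<K$ and $t$ and dividing by $K$, the drift term appears on the right-hand side with coefficient of order $\eta^2K^2L^2$. This comes both directly and through Lemma~\ref{lemma7}, where the coefficient is $\eta^2K^2s_B^2L^2\cdot 18$. The condition $\eta\le 1/(20s_BKL)$ makes the total coefficient at most $1/2$, so we absorb the drift term into the left-hand side. The remaining $\sum\lVert\mathbf x^{(t+1)}-\mathbf x^{(t)}\rVert_F^2$ terms, with coefficient of order $\eta^2K^2s_B^2L^2$, are replaced using Lemma~\ref{lemma9}. That lemma itself contains the drift term with coefficient $648\alpha^2h^2L^2$, which is small under $\alpha\le 1/(25hL)$, so it can be absorbed in the same step. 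Its $6nc^2\alpha^2\sum\lVert\bar m\rVert^2$ term multiplied by $\eta^2K^2s_B^2L^2$ is dominated by $n^2\eta^2K^2\sum\lVert\bar m\rVert^2$, using $c\le n$ and $\alpha L\le1$.

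The main obstacle is this simultaneous absorption. The drift term enters the bound via Lemma~\ref{lemma7}, via Lemma~\ref{lemma9} (itself fed back through Lemma~\ref{lemma7}), and via the per-agent momentum error. We must check that all coefficients combine to below $1/2$ under the two step-size conditions, while tracking constants to obtain $10,66,216,86$.
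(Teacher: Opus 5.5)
Your plan has a genuine gap: the local noise sum is bounded by Cauchy--Schwarz, which makes the variance term quadratic in $K$ instead of linear. Your first split applies Cauchy--Schwarz pathwise to the whole local sum, $\lVert\sum_{r<k}(\mathbf m^{(t,r)}-\mathbf m^{(t-1)})\rVert_F^2\le k\sum_{r<k}\lVert\mathbf m^{(t,r)}-\mathbf m^{(t-1)}\rVert_F^2$. Only afterwards do you extract the variance from each summand.

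Every summand carries the noise $(1-\beta)\bigl(g_i(x_i^{(t-1)},\xi_i^{(t,r)})-\nabla f_i(x_i^{(t-1)})\bigr)$, which is worth $\mathcal O((1-\beta)^2\sigma^2)$. With the prefactor $2\eta^2k$ and $k$ summands, you get $\mathcal O(\eta^2k^2(1-\beta)^2\sigma^2)$ per agent. Averaging over $k<K$ and summing over $i$ and $t$ then gives $\mathcal O(nT\eta^2K^2(1-\beta)^2\sigma^2)$. This is not bounded by the $216nT\eta^2Ks_B^2(1-\beta)^2\sigma^2$ in the statement once $K\gtrsim s_B^2$. Your remark that the variance term is $K(1-\beta)^2\sigma^2$ therefore does not follow from your own decomposition.

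The loss is not cosmetic. In Theorem~\ref{theorem2}, it is exactly the linear dependence on $K$ here that, combined with the bounds on $\eta$, produces the factor $1/K$ in $h^2(1-\beta)^2\sigma^2/(c^2K)$.

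The missing idea is to isolate the zero-mean noise before squaring, so that independence across local steps can be used. For instance, write
\[
\sum_{r<k}\bigl(m_i^{(t,r)}-m_i^{(t-1)}\bigr)=\sum_{r<k}u_i^{(t,r)}-k(1-\beta)\bigl(m_i^{(t-1)}-\nabla f_i(x_i^{(t-1)})\bigr)+(1-\beta)\sum_{r<k}v_i^{(t,r)},
\]
with $u_i^{(t,r)}:=g_i(x_i^{(t,r)},\xi_i^{(t,r)})-g_i(x_i^{(t-1)},\xi_i^{(t,r)})$ and $v_i^{(t,r)}:=g_i(x_i^{(t-1)},\xi_i^{(t,r)})-\nabla f_i(x_i^{(t-1)})$.

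\begin{itemize}
\item Apply Cauchy--Schwarz only to the $u$-sum; under mean-squared smoothness it produces no $\sigma^2$.
\item The $v_i^{(t,r)}$ are evaluated at the $\mathcal F_{t,0}$-measurable point $x_i^{(t-1)}$ with independent samples. Hence $\mathbb E\bigl[\lVert\sum_{r<k}v_i^{(t,r)}\rVert^2\bigm|\mathcal F_{t,0}\bigr]\le k\sigma^2$, and averaging $k$ over $k<K$ gives at most $K/2$.
\end{itemize}

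This is what the paper's six-term expansion \eqref{C.4.1} does. There the noise is split as $(1-\beta)\delta_{1,i}^{(t,r)}+\beta(\delta_{1,i}^{(t,r)}-\delta_{2,i}^{(t,r)})$, and the non-zero-mean drift $\Delta_i^{(t,r)}$ is kept separate. With that change, the rest of your plan matches the paper: the decomposition of $\mathbf y^{(t)}$, Lemmas~\ref{lemma7}, \ref{lemma15} and \ref{lemma9}, and the final absorption.

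Two smaller corrections:
\begin{itemize}
\item Absorbing the $6nc^2\alpha^2\sum_t\lVert\bar m^{(t)}\rVert^2$ term from Lemma~\ref{lemma9} needs $ns_B^2\alpha^2L^2\le\alpha^2h^2L^2\le 1/625$. The facts $c\le n$ and $\alpha L\le1$ alone leave an extra factor $ns_B^2$.
\item The per-agent momentum-error bound carries $(1-\beta)^{-2}$, not $(1-\beta)^{-1}$, on the drift term, because $\Delta_i^{(t,k)}$ is not zero-mean and its cross term costs a Young factor $1/(1-\beta)$. This is still harmless after multiplying by $(1-\beta)^2$.
\end{itemize}
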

\begin{proof}
See Appendix~\ref{appendixC.4}. \qed
\end{proof}

Apart from initialization terms, the bound in Lemma~\ref{lemma10} consists of a term proportional to $(\eta K)^2\sum_{t=0}^{T-1}\mathbb E\bigl[\lVert\bar m^{(t)}\rVert^2\bigr]$ and a variance term proportional to $(\eta K)^2(1-\beta)^2/K$. A suitable restriction on $\eta K$ therefore allows the average momentum term introduced by local drift to be absorbed by the negative term in the descent inequality, while preserving the $1/K$ factor in the variance contribution to the convergence bound. Moreover, the factor $(1-\beta)^2$ shows that recursive momentum also controls the variance contribution to local drift.

We now combine the bounds in Lemmas~\ref{lemma3} and~\ref{lemma7}--\ref{lemma10} with the descent inequality in Lemma~\ref{lemma2}. The resulting inequality contains the term $\sum_{t=0}^{T-1}\mathbb E\bigl[\lVert\bar m^{(t)}\rVert^2\bigr]$, whose coefficient is nonpositive under the step size conditions stated below. Dropping this nonpositive term yields the following convergence result.

\begin{samepage}
\begin{theorem}
	Suppose\label{theorem2} Assumptions~\ref{assumption1}--\ref{assumption4} hold. If
	\[
	\begin{gathered}
	0<\alpha\le\min\Bigl\{\frac{1}{4cL},\frac{1}{50hL},\frac{\sqrt{nK(1-\beta)}}{20cL}\Bigr\},\\
	0<\eta\le\min\Bigl\{\frac{c}{100ns_BKL},\frac{1-\beta}{100\sqrt n\,KL}\Bigr\},\\
	\beta \in [0, 1), \quad \text{and}\quad b_0\in\mathbb N_+,
	\end{gathered}
	\]
	then the following inequality holds
	\begin{equation} \label{theoremeq2}
		\begin{aligned}
			&\frac{1}{T}\sum_{t=0}^{T-1} \mathbb E\bigl[\lVert\nabla f(\hat x^{(t)})\rVert^2\bigr] \le{} \frac{4\Delta_f}{c\alpha T} +\frac{12(1-\beta)\sigma^2}{nK} +\frac{105h^2(1-\beta)^2\sigma^2}{c^2K}\\
			&\quad +\frac{4\sigma^2}{nb_0(1-\beta)T} +\frac{45h^2\sigma^2}{c^2b_0T} +\frac{36h^2\Delta_fL}{c^2T}.
		\end{aligned}
	\end{equation}
\end{theorem}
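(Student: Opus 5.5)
The plan is to reduce everything to a closed linear system in four sums:
\[
S_m:=\sum_{t=0}^{T-1}\mathbb E\bigl[\lVert\bar m^{(t)}\rVert^2\bigr],\qquad
D:=\frac1K\sum_{t=1}^{T-1}\sum_{k=0}^{K-1}\mathbb E\bigl[\lVert\mathbf x^{(t,k)}-\mathbf x^{(t)}\rVert_F^2\bigr],
\]
\[
X:=\sum_{t=0}^{T-1}\mathbb E\bigl[\lVert\mathbf x^{(t+1)}-\mathbf x^{(t)}\rVert_F^2\bigr],\qquad
Y:=\sum_{t=0}^{T-1}\mathbb E\bigl[\lVert\mathbf y^{(t+1)}-\bar{\mathbf y}^{(t+1)}\rVert_F^2\bigr].
\]
Everything is then expressed in terms of $S_m$ plus constants. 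The step-size conditions imply those of the earlier lemmas: $\alpha\le\frac{1}{50hL}$ gives $\alpha\le\frac{1}{25hL}$, and $\eta\le\frac{c}{100ns_BKL}\le\frac{1}{20s_BKL}$ because $c=n\pi_A^\top\pi_B\le n$. So Lemmas~\ref{lemma2}, \ref{lemma3} and~\ref{lemma7}--\ref{lemma10} all apply.

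First I substitute Lemma~\ref{lemma10} for $D$ into Lemma~\ref{lemma9}. This gives
\[
X\le\bigl(6nc^2\alpha^2+6480n^2\alpha^2h^2L^2\eta^2K^2\bigr)S_m+(\text{constants}).
\]
Since $h^2\ge ns_B^2$ and $\eta^2K^2L^2\le c^2/(10^4n^2s_B^2)$, I expect the extra coefficient to be dominated by $6nc^2\alpha^2$ only after using $h^2s_B^{-2}$ carefully. If that fails, I would keep it as $O(c^2\alpha^2 h^2/(ns_B^2))$ and absorb it using $\alpha h L\le 1/50$. The variance constants stay of order $nT\alpha^2h^2(1-\beta)^2\sigma^2/K$. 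The local-drift variance $216\cdot648\,nT\alpha^2h^2L^2\eta^2Ks_B^2(1-\beta)^2\sigma^2$ is absorbed into the same order using $\eta^2K^2L^2s_B^2\le 1/400$.

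Next I plug $X$ and $D$ into Lemma~\ref{lemma7} to get $Y$, and into Lemma~\ref{lemma8} for the momentum error. I then insert $Y$ into Lemma~\ref{lemma3}, and everything into Lemma~\ref{lemma2}. The terms collect as follows.
\begin{itemize}
\item The direct tracking term $\frac{5}{c^2}Y$ produces, in the variance part, $\frac{5}{c^2}\cdot 12nTs_B^2(1-\beta)^2\sigma^2/K$ plus lower-order pieces. After dividing by $T$ this gives the $h^2(1-\beta)^2\sigma^2/(c^2K)$ term, using $ns_B^2\le h^2$.
\item The constants $5ns_B^2\sigma^2/b_0$ and $4ns_B^2\Delta_fL$ give the $45h^2\sigma^2/(c^2b_0T)$ and $36h^2\Delta_fL/(c^2T)$ terms, as in Theorem~\ref{theorem1}.
\item The momentum error $4\sum\mathbb E\lVert\bar m-\overline{\nabla f}\rVert^2$ contributes $12(1-\beta)\sigma^2/(nK)$ and $4\sigma^2/(nb_0(1-\beta))$.
\item The consensus term $\frac{4L^2}{n}(\cdot)$ contributes only $\alpha^2$-scaled copies, which are lower order given $\alpha\le\frac{1}{50hL}$ and $h^2\ge s_A^2 s_B^2$ and $h^2\ge ns_A^2$.
\end{itemize}
The numeric constants (105 versus 80) come from summing these contributions, and I would just track them.

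The main obstacle is showing that the total coefficient of $S_m$ is at most $1$, so that the $-S_m$ in Lemma~\ref{lemma2} cancels it. There are three dangerous contributions.
\begin{itemize}
\item \textbf{Lemma~\ref{lemma8}, $X$ part.} The factor $\frac{12L^2}{n^2K(1-\beta)}\cdot 6nc^2\alpha^2$ is at most $\frac{72}{400}$ by $\alpha\le\sqrt{nK(1-\beta)}/(20cL)$.
\item \textbf{Lemma~\ref{lemma8}, drift part.} The factor $\frac{16L^2}{K(1-\beta)^2}\cdot 10n^2\eta^2K^2/n$ simplifies to $160nL^2\eta^2K^2/(1-\beta)^2$. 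This is at most $160/10^4$ by $\eta\le(1-\beta)/(100\sqrt nKL)$. This is exactly why the $(1-\beta)$ appears in the $\eta$ condition.
\item \textbf{Tracking term, drift part.} The factor $\frac{5}{c^2}\cdot18s_B^2L^2\cdot10n^2\eta^2K^2$ is at most $900/10^4$ by $\eta\le c/(100ns_BKL)$.
\end{itemize}
The remaining pieces ($\frac{5}{c^2}\cdot15s_B^2L^2\cdot6nc^2\alpha^2$, the $\frac{4L^2}{n}2n^2s_A^2\alpha^2$ term, and the $Y$-through-consensus contributions) are small by $\alpha\le\frac{1}{50hL}$. Summing all these fractions shows the total coefficient is below $1$, so the $S_m$ terms can be dropped. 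Dividing by $T$ then yields~\eqref{theoremeq2}.
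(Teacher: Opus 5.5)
Your plan is the paper's proof. The paper also combines Lemmas~\ref{lemma2}, \ref{lemma3} and \ref{lemma7}--\ref{lemma10} and then drops the nonpositive $\sum_t\mathbb E\lVert\bar m^{(t)}\rVert^2$ term; it just substitutes Lemma~\ref{lemma9} before Lemma~\ref{lemma10} instead of composing them first, which is immaterial because all substitutions are linear.

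One correction to your hedge. The cross term $6480n^2\alpha^2h^2L^2\eta^2K^2$ is not dominated by $6nc^2\alpha^2$ in general. In the Lemma~\ref{lemma8} branch, where it is multiplied by $\frac{12L^2}{n^2K(1-\beta)}$, it cannot be absorbed by $\alpha hL\le\frac1{50}$ alone, because a factor $\frac1{1-\beta}$ survives. There you must use $\eta\le\frac{1-\beta}{100\sqrt n\,KL}$; the paper does this by folding that term into the $\frac{20L^2}{n(1-\beta)^2}$ drift coefficient via $\frac{1-\beta}{nK}\le1$. Also, to land on the stated constants such as $105$, you need $\eta^2K^2L^2s_B^2\le\frac{c^2}{10^4n^2}$ rather than the looser $\frac1{400}$.
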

\begin{proof}
See Appendix~\ref{appendixC.5}. \qed
\end{proof}
\end{samepage}

The bound in Theorem~\ref{theorem2} retains the factor of $1/K$ in the variance terms after accounting for local drift. Apart from initialization terms, the variance terms are of orders $(1-\beta)\sigma^2/(nK)$ and $h^2(1-\beta)^2\sigma^2/(c^2K)$. Meanwhile, the step size upper bound involving $1-\beta$ is larger by a factor of $\sqrt K$ than its ASPP counterpart, giving $c\alpha\leq\sqrt{nK(1-\beta)}/(20L)$ without an additional factor involving $h$.

Following the parameter-selection argument for ASPP, we balance\newline $\frac{\Delta_fL}{T\sqrt{nK(1-\beta)}}$ with $(1-\beta)\sigma^2/(nK)$. Choosing $1-\beta=(nK)^{1/3}/T^{2/3}$ gives a network-independent contribution of order $(\Delta_fL+\sigma^2)/(nKT)^{2/3}$, gaining an additional factor of $K^{-2/3}$ over the corresponding contribution for ASPP. With suitable choices of the step sizes and initial batch size, we obtain the following corollary.

\begin{corollary}
	Suppose\label{corollary2} Assumptions~\ref{assumption1}--\ref{assumption4} hold and $T\ge\sqrt{nK}$. If
	\[
	\begin{gathered}
	\alpha=\min\Bigl\{\frac{1}{4cL},\frac{1}{50hL},\frac{n^{2/3}K^{2/3}}{20cLT^{1/3}}\Bigr\},\\
	0<\eta\le\min\Bigl\{\frac{c}{100ns_BKL},\frac{1}{100n^{1/6}LK^{2/3}T^{2/3}}\Bigr\},\\
	1 - \beta = \frac{n^{1/3} K^{1/3}}{T^{2/3}}, \quad \text{and}\quad b_0 = \Bigl\lceil\frac{K^{1/3} T^{1/3}}{n^{2/3}}\Bigr\rceil,
	\end{gathered}
	\]
	then the following inequality holds
	\begin{equation} \label{corollaryeq2}
		\begin{aligned}
			\frac{1}{T}\sum_{t=0}^{T-1}
			\!\mathbb E\bigl[\lVert\nabla f(\hat x^{(t)})\rVert^2\bigr]
			\lesssim
			\frac{\Delta_fL+\sigma^2}{n^{2/3}K^{2/3}T^{2/3}}
			+\frac{n^{2/3}h^2\sigma^2}{c^2K^{1/3}T^{4/3}}+\frac{(c^2+h^2)\Delta_fL}{c^2T}.
		\end{aligned}
	\end{equation}
\end{corollary}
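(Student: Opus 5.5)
The corollary follows by plugging the stated parameters into Theorem~\ref{theorem2} and bounding its six terms one at a time. First I check that the parameters are admissible.

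\emph{Admissibility.} The condition $T\ge\sqrt{nK}$ gives $(nK)^{1/2}\le T$. Hence $1-\beta=(nK)^{1/3}/T^{2/3}\le 1$. If this is an equality we have $\beta=0$, which lies in $[0,1)$ as required.

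For $\alpha$, note that
\[
\frac{\sqrt{nK(1-\beta)}}{20cL}=\frac{\sqrt{nK}\,(nK)^{1/6}}{20cLT^{1/3}}=\frac{n^{2/3}K^{2/3}}{20cLT^{1/3}},
\]
so the chosen $\alpha$ is exactly the minimum in Theorem~\ref{theorem2}. The only change is that $1/(50hL)$ is kept as stated.

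For $\eta$, the second constraint of Theorem~\ref{theorem2} reads
\[
\frac{1-\beta}{100\sqrt n KL}=\frac{n^{1/3}K^{1/3}}{100\sqrt n KLT^{2/3}}=\frac{1}{100n^{1/6}K^{2/3}LT^{2/3}},
\]
which matches the corollary. Finally $b_0\in\mathbb N_+$. Theorem~\ref{theorem2} therefore applies.

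\emph{Term-by-term bounds.}

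The first term is $4\Delta_f/(c\alpha T)$. Since $1/\min\{a,b,d\}\le 1/a+1/b+1/d$,
\[
\frac{\Delta_f}{c\alpha T}\lesssim \frac{\Delta_fL}{T}+\frac{h\Delta_fL}{cT}+\frac{\Delta_fL}{n^{2/3}K^{2/3}T^{2/3}}.
\]
Using $h/c\le 1+h^2/c^2$, the first two pieces are bounded by $(c^2+h^2)\Delta_fL/(c^2T)$.

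The second term is $(1-\beta)\sigma^2/(nK)=\sigma^2/(n^{2/3}K^{2/3}T^{2/3})$.

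The third term is $h^2(1-\beta)^2\sigma^2/(c^2K)=n^{2/3}h^2\sigma^2/(c^2K^{1/3}T^{4/3})$.

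The fourth term is $\sigma^2/(nb_0(1-\beta)T)$. Using $b_0\ge K^{1/3}T^{1/3}/n^{2/3}$ and $(1-\beta)T=(nK)^{1/3}T^{1/3}$, the denominator is at least
\[
n\cdot\frac{K^{1/3}T^{1/3}}{n^{2/3}}\cdot n^{1/3}K^{1/3}T^{1/3}=n^{2/3}K^{2/3}T^{2/3},
\]
so this term is again of order $\sigma^2/(nKT)^{2/3}$.

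The fifth term is $h^2\sigma^2/(c^2b_0T)$. Here $b_0\ge K^{1/3}T^{1/3}/n^{2/3}$ gives a bound of $n^{2/3}h^2\sigma^2/(c^2K^{1/3}T^{4/3})$, which matches the middle term of the corollary.

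The last term, $36h^2\Delta_fL/(c^2T)$, is already of the form $(c^2+h^2)\Delta_fL/(c^2T)$.

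Summing the six bounds gives the claimed inequality.

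\emph{Main point of care.} The only delicate step is the arithmetic with fractional exponents: checking that the $\alpha$ and $\eta$ choices match the constraints of Theorem~\ref{theorem2}, and that the $b_0$ terms land on the right orders. A careless bound on $b_0$ could otherwise leave the fifth term with the wrong power of $K$. Beyond this verification, no new analysis is needed.
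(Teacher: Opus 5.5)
Your proposal is correct and takes the same route as the paper, which obtains the corollary by checking that the chosen $\alpha,\eta,\beta,b_0$ satisfy the hypotheses of Theorem~\ref{theorem2} and then substituting them into its six terms, bounding $1/\alpha$ by the sum of the reciprocals. Your exponent arithmetic is right throughout, including the matching of the $\alpha$ and $\eta$ constraints and the two $b_0$-dependent terms.
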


Corollary~\ref{corollary2} implies that LASPP achieves $\epsilon$-stationarity within 
\begin{equation}\label{corollaryeq22} 
	\mathcal O \Bigl(
	\frac{(\Delta_fL+\sigma^2)^{3/2}}{nK\epsilon^3}
	+\frac{\sqrt n}{K^{1/4}\epsilon^{3/2}}
	\Bigl(\frac{h^2\sigma^2}{c^2}\Bigr)^{3/4}
	+\frac{(c^2+h^2)\Delta_fL}{c^2\epsilon^2}
	\Bigr)
\end{equation}
communication rounds. Compared with the corresponding bound for ASPP, the first two terms are reduced by factors of $K$ and $K^{1/4}$, respectively, capturing the communication benefit of local updates.

\begin{remark} \label{remark21}
	The first term in \eqref{corollaryeq22} scales as $n^{-1}K^{-1}$ and is independent of the network parameters $c$ and $h$. A sufficient condition for this term to dominate the remaining terms is 
	\begin{equation} \label{remarkeq21}
		\epsilon\lesssim\frac{\sqrt{\Delta_fL+\sigma^2}}{nK(1+h^2/c^2)}. 
	\end{equation}
	In this regime, LASPP has a communication complexity of $\mathcal O(n^{-1}K^{-1}\epsilon^{-3})$. Compared with ASPP, LASPP thus achieves an additional asymptotic linear speedup with respect to the number of local steps $K$, while retaining linear speedup with respect to the number of agents $n$.
\end{remark}

\begin{remark} \label{remark22}
	Including initialization, each agent performs $b_0+2K(T-1)=\mathcal O(KT)$ stochastic gradient evaluations. By \eqref{corollaryeq22}, the resulting per-agent stochastic gradient complexity has the same order as its ASPP counterpart for every integer $1\le K\le\bar K$, where
	$\bar K:=\biggl\lceil\min\Bigl\{\frac{\sqrt{\Delta_fL+\sigma^2}}{n(1+h^2/c^2)\,\epsilon},\allowbreak\frac{c^2(\Delta_fL+\sigma^2)}{n^2h^2\epsilon^2}\Bigr\}\biggr\rceil.$ In particular, choosing $K=\bar K$ allows LASPP to achieve $\epsilon$-stationarity within \[ \mathcal O \Bigl(\frac{(c^2+h^2)(\Delta_fL+\sigma^2)}{c^2\epsilon^2}+\frac{nh^2\sqrt{\Delta_fL+\sigma^2}}{c^2\epsilon}\Bigr) \] communication rounds. This improves the asymptotic communication complexity from $\mathcal O(\epsilon^{-3})$ for ASPP to $\mathcal O(\epsilon^{-2})$, while maintaining the same per-agent stochastic gradient complexity. The $\epsilon^{-2}$ dependence matches the known communication lower bound for centralized stochastic nonconvex optimization with local updates~\citep{zhang2021fedpd}.
\end{remark}

\section{Numerical Experiments}

In the numerical experiments, we consider two learning tasks on real datasets: multiclass logistic regression on MNIST and convolutional neural network (CNN) training on CIFAR-10. 

We compare the proposed algorithms with SPP. In addition, we include SCAFFOLD-MVR \citep{cheng2024momentum} as a centralized baseline, providing a reference for the performance achievable by local-update methods when a central server is available. SCAFFOLD-MVR is an accelerated variant of SCAFFOLD based on recursive momentum. For the decentralized methods, performance is evaluated using the average of the local models. 

For all decentralized algorithms, we use a strongly connected Erdős--Rényi digraph $\mathcal G$ generated with edge probability $0.3$ and augmented with a self-loop at each node. We take $\mathcal G_A=\mathcal G_B=\mathcal G$ and construct the mixing weights as
\[
a_{ij}=
\begin{cases}
	\dfrac{1}{|\mathcal N_{A,i}^{\mathrm{in}}|}, & \text{if } (j,i)\in\mathcal E,\\
	0, & \text{otherwise},
\end{cases}
\quad \quad \quad
b_{ij}=
\begin{cases}
	\dfrac{1}{|\mathcal N_{B,j}^{\mathrm{out}}|}, & \text{if } (j,i)\in\mathcal E,\\
	0, & \text{otherwise}.
\end{cases}
\]

\subsection{Nonconvex Logistic Regression}

We first consider a multiclass logistic regression problem on the MNIST dataset. MNIST contains $60{,}000$ training images from $C=10$ handwritten digit classes. To simulate heterogeneous data distributions across agents, we partition the training data among the agents by Dirichlet sampling with parameter $0.9$. The local dataset of agent~$i$ is denoted by $\{(z_{i,s},\ell_{i,s})\}_{s=1}^{N_i}$, where $N_i$ is the number of local samples, $z_{i,s}\in\mathbb R^d$ is the $d$-dimensional image feature vector, and $\ell_{i,s}\in\{0,1\}^{C}$ is the one-hot encoded label. Let $w_r\in\mathbb R^d$ denote the model parameters for class~$r$, and write $x=[w_1^\top,\ldots,w_C^\top]^\top\in\mathbb R^{Cd}$. The local loss function $f_i$ in problem~\eqref{eq1} is then given by
\[
f_i(x)=\frac{1}{N_i}\sum_{s=1}^{N_i}\Biggl[-\sum_{r=1}^{C}\ell_{i,s}^{r}\log\frac{\exp(w_r^\top z_{i,s})}{\sum_{k=1}^{C}\exp(w_k^\top z_{i,s})}\Biggr] +\rho\sum_{q=1}^{Cd}\frac{x_q^2}{1+x_q^2},
\]
where $x_q$ denotes the $q$th entry of $x$. The last term is the nonconvex Geman--McClure regularizer with regularization parameter $\rho=0.01$.

In this experiment, we set $n=16$. The mini-batch size is $16$, and $b_0=16$ is used when an initial batch is required. We set $\alpha=0.01$ for all methods and $\beta=0.9$ for momentum-based methods. For local-update methods, we set $K=10$ and $\eta=0.001$.

\begin{figure}[htbp]
	\centering
	\includegraphics[width=0.68\linewidth]{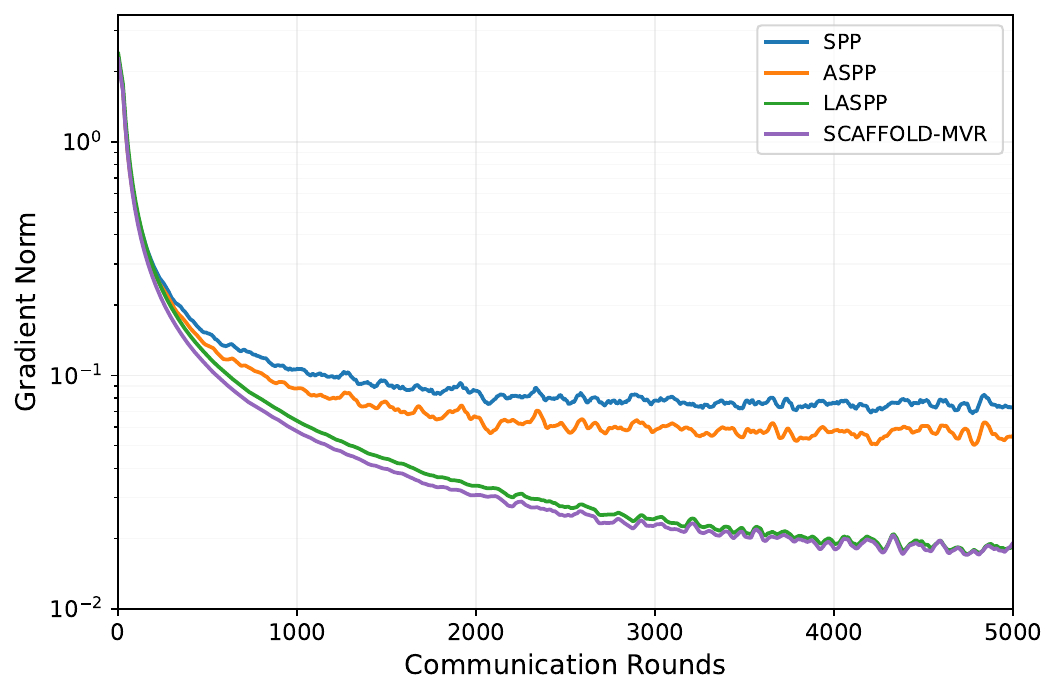}
	\caption{Comparison of different methods for nonconvex logistic regression on MNIST.}
	\label{fig3.1}
\end{figure}

Fig.~\ref{fig3.1} compares the four methods for the nonconvex logistic regression problem, using the global gradient norm as the performance metric. As shown in the figure, ASPP achieves faster convergence than SPP. LASPP further improves upon ASPP by performing multiple local updates between communication rounds. Moreover, LASPP exhibits performance comparable to that of the centralized baseline SCAFFOLD-MVR.

\begin{figure}[htbp]
	\centering
	\includegraphics[width=0.68\linewidth]{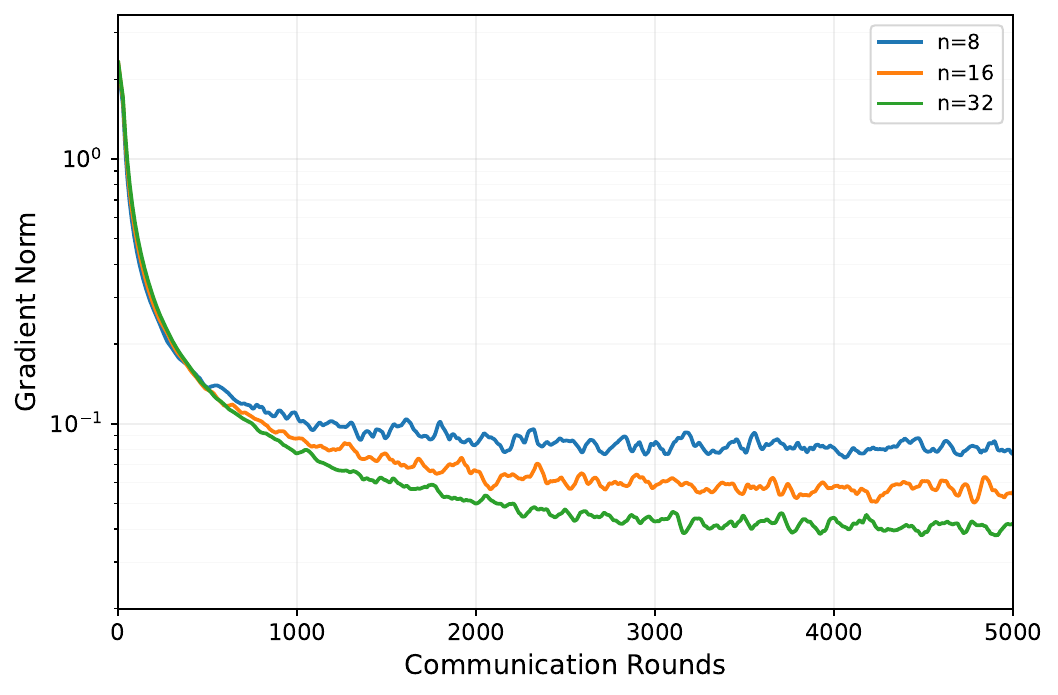}
	\caption{Performance of ASPP with different numbers of agents, $n=8,16,32$, for nonconvex logistic regression on MNIST.}
	\label{fig3.2}
\end{figure}

We further examine the convergence benefits of increasing the number of agents or local update steps. Fig.~\ref{fig3.2} reports the performance of ASPP with $n=8,16,32$ agents, while Fig.~\ref{fig3.3} reports the performance of LASPP with $K=5,10,20$ local update steps. In both experiments, the benefits of increasing $n$ or $K$ become more pronounced as optimization proceeds, with lower gradient norms attained for the same number of communication rounds. These observations are consistent with the asymptotic linear speedup established in our analysis.

\begin{figure}[htbp]
	\centering
	\includegraphics[width=0.68\linewidth]{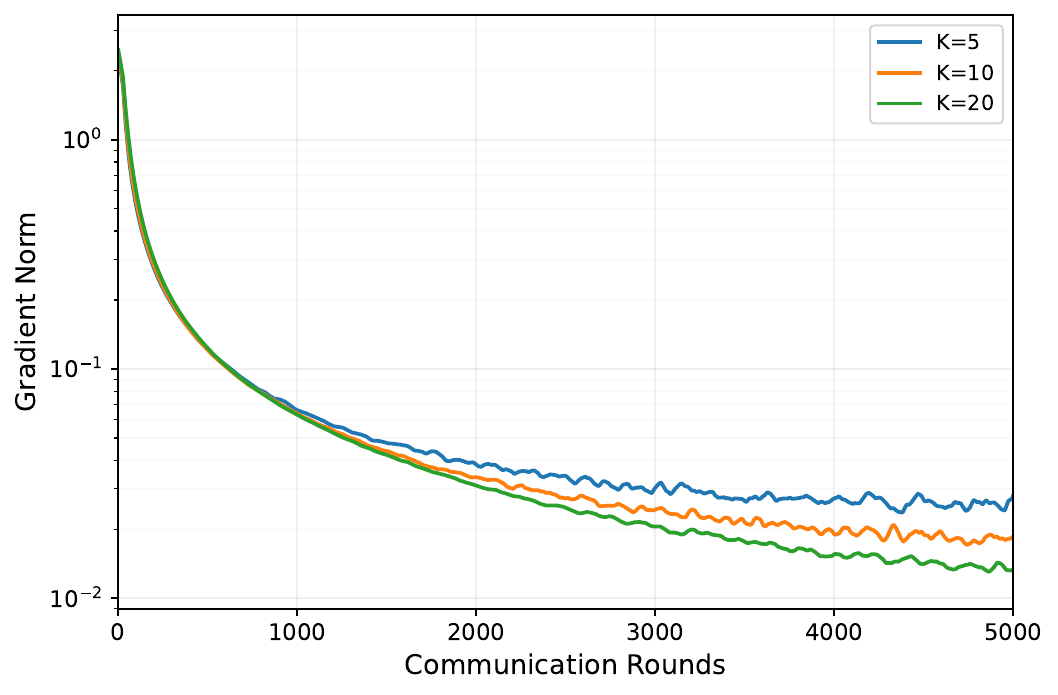}
	\caption{Performance of LASPP with different numbers of local steps, $K=5,10,20$, for nonconvex logistic regression on MNIST.}
	\label{fig3.3}
\end{figure}

\subsection{CNN Training}

We next evaluate the practical performance of the proposed algorithms by training a CNN for image classification on CIFAR-10. CIFAR-10 contains $50{,}000$ color training images from $10$ classes. The training data are distributed across agents by Dirichlet sampling with parameter $0.9$, leading to heterogeneous local datasets. The same CNN architecture is used at all agents. The CNN consists of two $3\times 3$ convolutional layers with $32$ feature maps, each followed by ReLU activation and $2\times 2$ max pooling, and three fully connected layers with $256$, $64$, and $10$ units. The training loss is the cross-entropy loss.

For this experiment, we set $n=8$ and use a mini-batch size of $16$. When an initial batch is required, we take $b_0=16$. We set $\alpha=0.003$ for all methods and $\beta=0.8$ for momentum-based methods. For local-update methods, we set $K=5$ and $\eta=0.001$.

Fig.~\ref{fig3.4} illustrates the test accuracy curves of the four methods. As shown in the figure, ASPP consistently outperforms SPP. LASPP achieves higher accuracy than ASPP with fewer communication rounds, and its test accuracy is close to that of the centralized baseline SCAFFOLD-MVR.

\begin{figure}[htbp]
	\centering
	\includegraphics[width=0.68\linewidth]{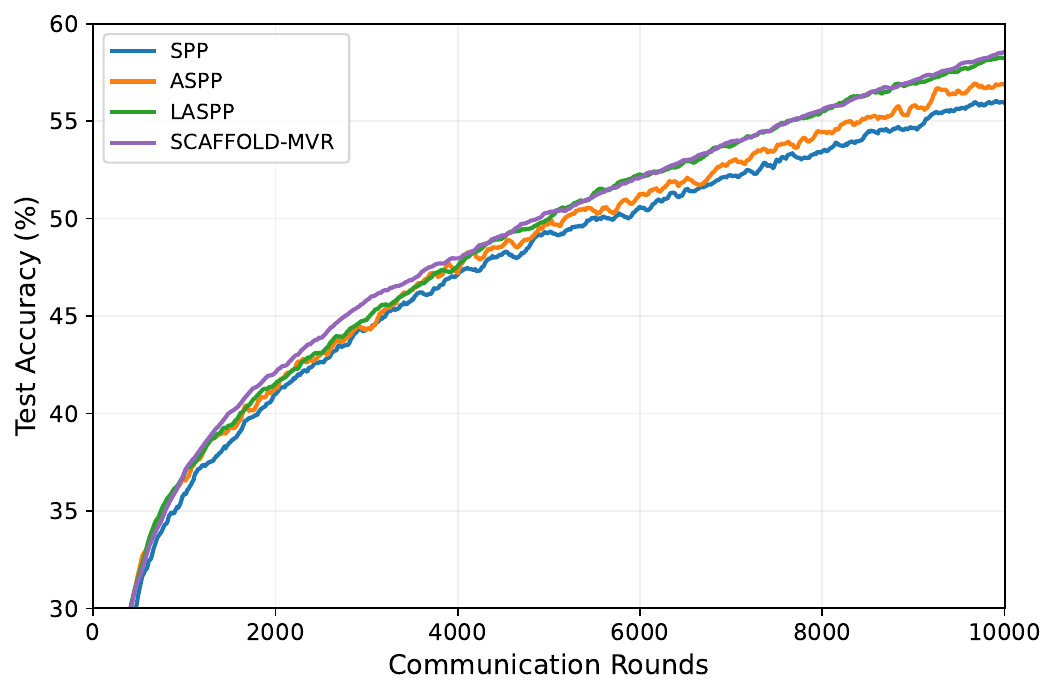}
	\caption{Comparison of different methods for CNN training on CIFAR-10.}
	\label{fig3.4}
\end{figure}

\section{Conclusion}

In this paper, we developed ASPP and its local-update variant LASPP for decentralized stochastic nonconvex optimization over directed networks. By incorporating recursive momentum for variance reduction, ASPP improves upon the convergence rate of SPP under mean-squared smoothness and achieves asymptotic linear speedup with respect to the number of agents. LASPP further achieves asymptotic linear speedup with respect to the number of local update steps. Compared with ASPP, LASPP reduces communication complexity without increasing stochastic gradient complexity when the number of local steps is suitably chosen. The numerical results demonstrate the effectiveness of both algorithms and support our theoretical findings.

\FloatBarrier
\appendix
\numberwithin{equation}{section}
\section{Supporting Lemmas}

This section introduces several supporting lemmas.

\begin{lemma} \label{lemma11}
	Suppose $A$ is a row-stochastic matrix satisfying Assumption~\ref{assumption4}. For any sequence of matrices $\Delta^{(l)} \in \mathbb{R}^{n \times p}$, $0 \le l \le T$, we have
	\begin{equation} \label{lemmaeq11}
		\sum_{t=0}^{T} \Bigl\lVert\sum_{l=0}^{t} (A^{t-l} - \mathds 1 \pi_A^\top)\Delta^{(l)} \Bigr\rVert_F^2 \leq s_A^2 \sum_{l=0}^{T} \lVert\Delta^{(l)}\rVert_F^2.
	\end{equation}
	An analogous bound holds for a column-stochastic matrix $B$ satisfying Assumption~\ref{assumption4}, with $A^{t-l}-\mathds 1 \pi_A^\top$ replaced by $B^{t-l}-\pi_B \mathds 1^\top$ and $s_A$ by $s_B$.
\end{lemma}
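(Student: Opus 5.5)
This is a Young-type convolution inequality for the sequence of matrices $M_\tau:=A^\tau-\mathds 1\pi_A^\top$, $\tau\ge 0$, whose spectral norms have total mass $s_A=\sum_{\tau\ge0}\lVert M_\tau\rVert_2$. The plan is to handle each fixed $t$ with the triangle inequality and Cauchy--Schwarz, weighting the terms by $\lVert M_{t-l}\rVert_2$, and then to swap the order of summation.

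First, fix $t\in\{0,\ldots,T\}$ and write $a_\tau:=\lVert M_\tau\rVert_2$. Since $\lVert MX\rVert_F\le\lVert M\rVert_2\lVert X\rVert_F$, the triangle inequality gives
\[
\Bigl\lVert\sum_{l=0}^{t}M_{t-l}\Delta^{(l)}\Bigr\rVert_F\le\sum_{l=0}^{t}a_{t-l}\lVert\Delta^{(l)}\rVert_F .
\]

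Next, write $a_{t-l}\lVert\Delta^{(l)}\rVert_F=\sqrt{a_{t-l}}\cdot\sqrt{a_{t-l}}\lVert\Delta^{(l)}\rVert_F$ and apply Cauchy--Schwarz:
\[
\Bigl(\sum_{l=0}^{t}a_{t-l}\lVert\Delta^{(l)}\rVert_F\Bigr)^2\le\Bigl(\sum_{l=0}^{t}a_{t-l}\Bigr)\Bigl(\sum_{l=0}^{t}a_{t-l}\lVert\Delta^{(l)}\rVert_F^2\Bigr)\le s_A\sum_{l=0}^{t}a_{t-l}\lVert\Delta^{(l)}\rVert_F^2 .
\]

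Then sum over $t=0,\ldots,T$ and exchange the order of summation:
\[
\sum_{t=0}^{T}\sum_{l=0}^{t}a_{t-l}\lVert\Delta^{(l)}\rVert_F^2=\sum_{l=0}^{T}\lVert\Delta^{(l)}\rVert_F^2\sum_{t=l}^{T}a_{t-l}\le s_A\sum_{l=0}^{T}\lVert\Delta^{(l)}\rVert_F^2 .
\]
Combining the three displays yields the claimed bound with constant $s_A^2$.

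For the column-stochastic case, the argument is identical with $M_\tau=B^\tau-\pi_B\mathds 1^\top$. Only the submultiplicativity $\lVert MX\rVert_F\le\lVert M\rVert_2\lVert X\rVert_F$ and the summability $s_B<\infty$ from Lemma~\ref{lemma1} are used.

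There is no real obstacle. The one point needing care is the splitting of the weights $a_{t-l}$ in the Cauchy--Schwarz step: it must be done so that each of the two resulting sums contributes a single factor of $s_A$, rather than a factor that grows with $T$.
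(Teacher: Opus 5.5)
Your proof is correct and follows essentially the same route as the argument the paper defers to (the proof of Lemma~A.1 in \citep{liang2025linear}). That argument also uses the triangle inequality, then Cauchy--Schwarz weighted by $\lVert A^{t-l}-\mathds 1\pi_A^\top\rVert_2$ to extract one factor of $s_A$, then an exchange of the order of summation to extract the second.
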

\begin{proof}
The result follows from the proof of Lemma~A.1 in~\citep{liang2025linear}. \qed
\end{proof}

The following lemma bounds the initialization errors for both algorithms.

\begin{lemma} \label{lemma12}
	Suppose Assumptions~\ref{assumption1}--\ref{assumption4} hold. Then, we have
	\begin{equation} \label{lemmaeq12}
		\begin{aligned}
			\mathbb E\bigl[\lVert\bar m^{(0)} - \overline {\nabla f}(\mathbf x^{(0)})\rVert^2\bigr] \le \frac{\sigma^2}{n b_0}, \quad\quad
			\mathbb E \bigl[\lVert\mathbf{m}^{(0)} - {\nabla f}(\mathbf x^{(0)}) \rVert_F^2 \bigr] \le \frac{n \sigma^2}{b_0}.
		\end{aligned}
	\end{equation}
\end{lemma}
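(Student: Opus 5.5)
The plan is a direct variance computation that uses only the initialization $m_i^{(0)}=\frac1{b_0}\sum_{r=1}^{b_0}g_i(x^{(0)},\xi_{i,r}^{(0)})$ and Assumption~\ref{assumption1}. Since all agents start at the common point $x^{(0)}$, we have $\mathbf x^{(0)}=\mathds 1 (x^{(0)})^\top$. For each agent $i$, write the per-agent error as
\[
e_i:=m_i^{(0)}-\nabla f_i(x^{(0)})=\frac1{b_0}\sum_{r=1}^{b_0}\zeta_{i,r},\qquad \zeta_{i,r}:=g_i(x^{(0)},\xi_{i,r}^{(0)})-\nabla f_i(x^{(0)}).
\]
By unbiasedness, each $\zeta_{i,r}$ has zero mean, and $\mathbb E\lVert\zeta_{i,r}\rVert^2\le\sigma^2$. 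By the mutual independence of the samples, the $\zeta_{i,r}$ are independent across both $r$ and $i$. The query point $x^{(0)}$ is deterministic, so no conditioning is needed.

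For the second inequality, I will expand $\mathbb E\lVert e_i\rVert^2=\frac1{b_0^2}\sum_{r,r'}\mathbb E\langle\zeta_{i,r},\zeta_{i,r'}\rangle$. The cross terms vanish by independence and zero mean, which leaves $\mathbb E\lVert e_i\rVert^2\le\sigma^2/b_0$. Summing over $i$ gives $\mathbb E\lVert\mathbf m^{(0)}-\nabla f(\mathbf x^{(0)})\rVert_F^2=\sum_i\mathbb E\lVert e_i\rVert^2\le n\sigma^2/b_0$.

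For the first inequality, note that $\bar m^{(0)}-\overline{\nabla f}(\mathbf x^{(0)})=\frac1n\sum_i e_i=\frac1{nb_0}\sum_{i,r}\zeta_{i,r}$. This is an average of $nb_0$ independent zero-mean terms, so the same cross-term cancellation now applies across agents as well. It yields the bound $\frac{1}{n^2b_0^2}\cdot nb_0\sigma^2=\sigma^2/(nb_0)$.

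There is no real obstacle here. The only point needing care is to invoke independence across agents, not just within an agent's batch, so that the cross terms between different agents also vanish. This is exactly what produces the $1/n$ factor in the first bound.
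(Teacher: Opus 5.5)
Your proposal is correct and matches the paper's proof. Both expand the squared norm of the averaged zero-mean errors, cancel the cross terms using unbiasedness and the mutual independence of the initialization samples (across agents as well as within each batch), and bound each remaining term by $\sigma^2$, with the deterministic common starting point $x^{(0)}$ removing any need for conditioning.
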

\begin{proof}
See Appendix~\ref{appendixD.1}. \qed
\end{proof}

The next two lemmas concern the iterates generated by Algorithm~\ref{alg1}.

\begin{lemma} \label{lemma13}
	Suppose Assumptions~\ref{assumption1}--\ref{assumption4} hold and $\beta \in [0,1)$. Then the following inequality holds
	\begin{equation} \label{lemmaeq13}
		\sum_{t=0}^{T-1} \mathbb E \bigl[\lVert {\mathbf m}^{(t)} - {\nabla f} (\mathbf x^{(t)}) \rVert_F^2\bigr] \le \: \frac{2 L^2}{1 - \beta} \sum_{t=0}^{T-1} \mathbb E \bigl[\lVert {\mathbf x}^{(t+1)} - {\mathbf x}^{(t)} \rVert_F^2\bigr]+ \frac{n \sigma^2}{b_0 (1 - \beta)} + 2nT(1-\beta) \sigma^2.
	\end{equation}
\end{lemma}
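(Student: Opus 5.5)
Define the per-agent error $e_i^{(t)}:=m_i^{(t)}-\nabla f_i(x_i^{(t)})$ and the stacked error $\mathbf e^{(t)}:=\mathbf m^{(t)}-\nabla f(\mathbf x^{(t)})$. The plan is to derive the standard one-step STORM-type recursion for $\mathbf e^{(t)}$ and then sum it over $t$. The initial term is already controlled by Lemma~\ref{lemma12}: $\mathbb E\lVert\mathbf e^{(0)}\rVert_F^2\le n\sigma^2/b_0$.

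\textbf{Step 1: one-step decomposition.} For $t\ge1$, the update \eqref{eq3} gives
\[
e_i^{(t)}=\beta e_i^{(t-1)}+\beta\bigl(g_i(x_i^{(t)},\xi_i^{(t)})-g_i(x_i^{(t-1)},\xi_i^{(t)})-\nabla f_i(x_i^{(t)})+\nabla f_i(x_i^{(t-1)})\bigr)+(1-\beta)\bigl(g_i(x_i^{(t)},\xi_i^{(t)})-\nabla f_i(x_i^{(t)})\bigr).
\]
Let $\mathcal F_t$ be the sigma-field generated by everything before $\xi^{(t)}$. Then $x_i^{(t)}$ and $x_i^{(t-1)}$ are $\mathcal F_t$-measurable, because $x^{(t)}$ depends only on $\mathbf m^{(t-1)}$ and earlier quantities. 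The last two terms therefore have zero conditional mean, so the cross term with $\beta e_i^{(t-1)}$ vanishes. This gives
\[
\mathbb E\lVert e_i^{(t)}\rVert^2\le\beta^2\mathbb E\lVert e_i^{(t-1)}\rVert^2+2\beta^2\mathbb E\lVert g_i(x_i^{(t)},\xi)-g_i(x_i^{(t-1)},\xi)-(\nabla f_i(x_i^{(t)})-\nabla f_i(x_i^{(t-1)}))\rVert^2+2(1-\beta)^2\sigma^2.
\]
The variance of the gradient difference is at most its second moment, so Assumption~\ref{assumption3} bounds the middle term by $2L^2\mathbb E\lVert x_i^{(t)}-x_i^{(t-1)}\rVert^2$ (using $\beta^2\le1$). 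Summing over agents yields
\[
\mathbb E\lVert\mathbf e^{(t)}\rVert_F^2\le\beta\,\mathbb E\lVert\mathbf e^{(t-1)}\rVert_F^2+2L^2\mathbb E\lVert\mathbf x^{(t)}-\mathbf x^{(t-1)}\rVert_F^2+2n(1-\beta)^2\sigma^2,
\]
where $\beta^2\le\beta$ has also been used.

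\textbf{Step 2: summation.} Sum the recursion for $t=1,\dots,T-1$ and add the $t=0$ term. Writing $S=\sum_{t=0}^{T-1}\mathbb E\lVert\mathbf e^{(t)}\rVert_F^2$, we get
\[
S\le\beta S+\mathbb E\lVert\mathbf e^{(0)}\rVert_F^2+2L^2\sum_{t=0}^{T-2}\mathbb E\lVert\mathbf x^{(t+1)}-\mathbf x^{(t)}\rVert_F^2+2nT(1-\beta)^2\sigma^2.
\]
Rearranging and dividing by $1-\beta$ gives
\[
S\le\frac{n\sigma^2}{b_0(1-\beta)}+\frac{2L^2}{1-\beta}\sum_{t=0}^{T-1}\mathbb E\lVert\mathbf x^{(t+1)}-\mathbf x^{(t)}\rVert_F^2+2nT(1-\beta)\sigma^2.
\]
The sum over the iterate differences was enlarged to run to $T-1$, which only loosens the bound. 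This is exactly \eqref{lemmaeq13}.

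\textbf{Main obstacle.} The delicate point is the measurability argument behind the cross-term cancellation. One must check that $\xi_i^{(t)}$ is independent of $\mathbf x^{(t)}$, $\mathbf x^{(t-1)}$ and $\mathbf m^{(t-1)}$. This holds because $x^{(t)}$ is formed from $y^{(t)}$, which uses only $m^{(t-1)}$, and all samples are mutually independent by Assumption~\ref{assumption1}. The other point to check is that the two noise terms may be grouped and bounded with the factor-2 Young inequality without breaking the zero-mean structure.
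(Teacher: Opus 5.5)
Your proof is correct and follows the paper's argument: the same decomposition of $m_i^{(t)}-\nabla f_i(x_i^{(t)})$, the same vanishing cross term under $\mathcal F_t$, and the same factor-2 Young bound combined with the variance and mean-squared smoothness estimates. The only difference is cosmetic: you weaken the contraction to $\beta$ and absorb $\beta S$ by rearranging, which is legitimate because $S$ is finite whenever the right-hand side of the lemma is. The paper instead keeps $\beta^2$, unrolls the recursion, and uses $1/(1-\beta^2)\le 1/(1-\beta)$ and $(1-\beta)^2/(1-\beta^2)\le 1-\beta$.
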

\begin{proof}
See Appendix~\ref{appendixD.2}. \qed
\end{proof}

\begin{samepage}
\begin{lemma}
	Suppose\label{lemma14} Assumptions~\ref{assumption1}--\ref{assumption4} hold and $\beta \in [0,1)$. Then the following inequality holds
	\begin{equation} \label{lemmaeq14}
			\sum_{t=1}^{T-1} \mathbb E \bigl[\lVert {\mathbf m}^{(t)} - {\mathbf m}^{(t-1)} \rVert_F^2\bigr] 
			\le 9L^2 \sum_{t=0}^{T-1} \mathbb E \bigl[\lVert {\mathbf x}^{(t+1)} - {\mathbf x}^{(t)} \rVert_F^2\bigr] + \frac{3n\sigma^2}{b_0} + 9nT(1-\beta)^2\sigma^2.
	\end{equation}
\end{lemma}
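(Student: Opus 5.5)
We write, for $t\ge1$, the decomposition already displayed before Lemma~\ref{lemma4}:
\[
m_i^{(t)}-m_i^{(t-1)}=\underbrace{g_i(x_i^{(t)},\xi_i^{(t)})-g_i(x_i^{(t-1)},\xi_i^{(t)})}_{=:u_i^{(t)}}-(1-\beta)\bigl(m_i^{(t-1)}-g_i(x_i^{(t-1)},\xi_i^{(t)})\bigr).
\]
The second bracket is split further as $\bigl(m_i^{(t-1)}-\nabla f_i(x_i^{(t-1)})\bigr)+\bigl(\nabla f_i(x_i^{(t-1)})-g_i(x_i^{(t-1)},\xi_i^{(t)})\bigr)$. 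By $(a+b+c)^2\le3(a^2+b^2+c^2)$ we get, summing over $i$,
\[
\lVert\mathbf m^{(t)}-\mathbf m^{(t-1)}\rVert_F^2\le 3\lVert\mathbf u^{(t)}\rVert_F^2+3(1-\beta)^2\lVert\mathbf m^{(t-1)}-\nabla f(\mathbf x^{(t-1)})\rVert_F^2+3(1-\beta)^2\lVert\nabla f(\mathbf x^{(t-1)})-\mathbf g(\mathbf x^{(t-1)},\xi^{(t)})\rVert_F^2 .
\]
Taking expectations, Assumption~\ref{assumption3} gives $\mathbb E\lVert\mathbf u^{(t)}\rVert_F^2\le L^2\mathbb E\lVert\mathbf x^{(t)}-\mathbf x^{(t-1)}\rVert_F^2$, and Assumption~\ref{assumption1} gives the third term at most $3n(1-\beta)^2\sigma^2$, since $\xi_i^{(t)}$ is fresh and independent of $x_i^{(t-1)}$.

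Summing over $t=1,\dots,T-1$, we shift the middle term to $\sum_{t=0}^{T-2}\mathbb E\lVert\mathbf m^{(t)}-\nabla f(\mathbf x^{(t)})\rVert_F^2$ and bound it by Lemma~\ref{lemma13}. Multiplying by $3(1-\beta)^2$ and using $(1-\beta)^2/(1-\beta)=1-\beta\le1$ yields at most
\[
6L^2\sum_{t}\mathbb E\lVert\mathbf x^{(t+1)}-\mathbf x^{(t)}\rVert_F^2+\frac{3n\sigma^2}{b_0}+6nT(1-\beta)^3\sigma^2 .
\]
Here $(1-\beta)^3\le(1-\beta)^2$. Together with the direct terms $3L^2\sum_{t=0}^{T-2}\mathbb E\lVert\mathbf x^{(t+1)}-\mathbf x^{(t)}\rVert_F^2$ and $3n(T-1)(1-\beta)^2\sigma^2$, this gives constants $9L^2$, $3n\sigma^2/b_0$, and $9nT(1-\beta)^2\sigma^2$, matching \eqref{lemmaeq14}.

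The main point needing care is the conditioning in the variance step. The cross-term independence is not actually needed, since we use the crude factor $3$ rather than expanding. We only need $\mathbb E\bigl[\lVert\nabla f_i(x_i^{(t-1)})-g_i(x_i^{(t-1)},\xi_i^{(t)})\rVert^2\bigr]\le\sigma^2$, which follows by conditioning on the history up to $x_i^{(t)}$, because $x_i^{(t-1)}$ is measurable with respect to earlier samples. We also need to check the index ranges so that the Lemma~\ref{lemma13} sum up to $T-2$ is dominated by its full version up to $T-1$; this holds since all summands are nonnegative.
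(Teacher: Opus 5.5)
Your proof is correct and follows the paper's argument essentially step for step. You use the same three-term split of $m_i^{(t)}-m_i^{(t-1)}$ with the crude factor $3$, condition on $\mathcal F_t$ for both the mean-squared smoothness and variance terms, and then apply Lemma~\ref{lemma13} multiplied by $3(1-\beta)^2$ with $1-\beta\le1$, which yields the constants $9L^2$, $3n\sigma^2/b_0$, and $9nT(1-\beta)^2\sigma^2$.
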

\begin{proof}
See Appendix~\ref{appendixD.3}. \qed
\end{proof}
\end{samepage}

The next two lemmas concern the iterates generated by Algorithm~\ref{alg2}.

\begin{lemma} \label{lemma15}
	Suppose Assumptions~\ref{assumption1}--\ref{assumption4} hold and $\beta\in[0,1)$. Then the following inequality holds
	\begin{equation} \label{lemmaeq15}
		\begin{aligned}
		&\sum_{t=0}^{T-1}\mathbb E\bigl[\lVert\mathbf m^{(t)}-\nabla f(\mathbf x^{(t)})\rVert_F^2\bigr] \le{}\frac{4L^2}{K(1-\beta)^2}\sum_{t=1}^{T-1}\sum_{k=0}^{K-1}\mathbb E\bigl[\lVert\mathbf x^{(t,k)}-\mathbf x^{(t)}\rVert_F^2\bigr]\\
		&\quad +\frac{3L^2}{K(1-\beta)}\sum_{t=0}^{T-1}\mathbb E\bigl[\lVert\mathbf x^{(t+1)}-\mathbf x^{(t)}\rVert_F^2\bigr] +\frac{n\sigma^2}{b_0(1-\beta)}+\frac{3nT(1-\beta)\sigma^2}{K}.
	\end{aligned}
	\end{equation}
\end{lemma}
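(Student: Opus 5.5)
Define the per-agent error $e_i^{(t)}:=m_i^{(t)}-\nabla f_i(x_i^{(t)})$ and derive a one-step recursion for it from the representation \eqref{eq5} of $m_i^{(t)}$ in Remark~\ref{remark4}. For $t\ge1$, subtract $\nabla f_i(x_i^{(t)})$ and add and subtract $\beta\nabla f_i(x_i^{(t-1)})$. This gives
\[
e_i^{(t)}=\beta e_i^{(t-1)}+\frac1K\sum_{k=0}^{K-1}\zeta_i^{(t,k)}+\frac1K\sum_{k=0}^{K-1}\bigl(\nabla f_i(x_i^{(t,k)})-\nabla f_i(x_i^{(t)})\bigr),
\]
where
\[
\zeta_i^{(t,k)}:=g_i(x_i^{(t,k)},\xi_i^{(t,k)})-\nabla f_i(x_i^{(t,k)})-\beta\bigl(g_i(x_i^{(t-1)},\xi_i^{(t,k)})-\nabla f_i(x_i^{(t-1)})\bigr).
\]
The last sum is a deterministic drift bias. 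It vanishes when $K=1$, and it is the new ingredient relative to Lemma~\ref{lemma13}.

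The main difficulty is that $\zeta_i^{(t,k)}$ is not conditionally mean zero given the past, because $x_i^{(t,k)}$ depends on $\xi_i^{(t,0)},\dots,\xi_i^{(t,k-1)}$. I would handle this by splitting $\zeta_i^{(t,k)}$ into two parts.
\begin{itemize}
\item $\zeta_i^{(t,k)}=u_i^{(t,k)}+v_i^{(t,k)}$, where $u_i^{(t,k)}$ is $\zeta_i^{(t,k)}$ with $x_i^{(t,k)}$ replaced by $x_i^{(t)}$, and $v_i^{(t,k)}$ is the remainder.
\item The remainder is $v_i^{(t,k)}=[g_i(x_i^{(t,k)},\xi_i^{(t,k)})-g_i(x_i^{(t)},\xi_i^{(t,k)})]-[\nabla f_i(x_i^{(t,k)})-\nabla f_i(x_i^{(t)})]$. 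By Assumption~\ref{assumption3} it satisfies $\mathbb E\lVert v_i^{(t,k)}\rVert^2\le L^2\mathbb E\lVert x_i^{(t,k)}-x_i^{(t)}\rVert^2$.
\item The terms $u_i^{(t,k)}$, $k=0,\dots,K-1$, are conditionally independent and mean zero given $\mathcal F_t$, the history up to $x^{(t)}$ and $x^{(t-1)}$.
\item Write $u_i^{(t,k)}=(1-\beta)[g_i(x_i^{(t)},\xi)-\nabla f_i(x_i^{(t)})]+\beta[\text{difference of centered gradients at }x_i^{(t)},x_i^{(t-1)}]$. Then $\mathbb E\lVert\frac1K\sum_k u_i^{(t,k)}\rVert^2\le\frac{2(1-\beta)^2\sigma^2+2L^2\mathbb E\lVert x_i^{(t)}-x_i^{(t-1)}\rVert^2}{K}$, using $\beta\le1$ and Assumptions~\ref{assumption1} and~\ref{assumption3}.
\end{itemize}
This is where the $1/K$ factors in front of the variance term and the $\lVert\mathbf x^{(t+1)}-\mathbf x^{(t)}\rVert$ sum come from.

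Next, collect the terms. Set $U^{(t)}:=\frac1K\sum_k u_i^{(t,k)}$ and let $W^{(t)}$ be the remaining part, namely the averaged $v$ terms plus the drift bias. The cross term $2\langle\beta e_i^{(t-1)},U^{(t)}\rangle$ has zero expectation since $e_i^{(t-1)}$ is $\mathcal F_t$-measurable. For the term $\beta e^{(t-1)}+W^{(t)}$, use Young's inequality $\lVert a+b\rVert^2\le(1+\tfrac{1-\beta}{2})\lVert a\rVert^2+(1+\tfrac{2}{1-\beta})\lVert b\rVert^2$. Also use $\beta^2(1+\tfrac{1-\beta}{2})\le\beta\le 1-\tfrac{1-\beta}{1}\cdot\tfrac12\cdot\ldots$; concretely, $\beta^2(1+\frac{1-\beta}{2})\le 1-\frac{1-\beta}{2}$. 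Bound $W^{(t)}$ by Jensen together with $L$-smoothness and mean-squared smoothness:
\[
\mathbb E\lVert W^{(t)}\rVert^2\le\frac{4L^2}{K}\sum_k\mathbb E\lVert x_i^{(t,k)}-x_i^{(t)}\rVert^2,
\]
with coefficient $(1+\frac{2}{1-\beta})\le\frac{3}{1-\beta}$. A cross term between $U$ and $W$ also survives; absorb it with $\lVert a+b+c\rVert^2$ expansions and Young's inequality, giving a constant factor on $U$. This gives
\[
\mathbb E\lVert e^{(t)}\rVert^2\le(1-\tfrac{1-\beta}{2})\mathbb E\lVert e^{(t-1)}\rVert^2+\frac{C}{1-\beta}\cdot\frac{L^2}{K}\sum_k\mathbb E\lVert x^{(t,k)}-x^{(t)}\rVert^2+\frac{C'}{K}\bigl((1-\beta)^2\sigma^2+L^2\mathbb E\lVert x^{(t)}-x^{(t-1)}\rVert^2\bigr).
\]

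Finally, sum over agents and over $t=1,\dots,T-1$, and telescope by moving $(1-\tfrac{1-\beta}{2})$ times the sum to the left-hand side. The result is $\frac{1-\beta}{2}\sum_t\mathbb E\lVert e^{(t)}\rVert^2\le\mathbb E\lVert e^{(0)}\rVert^2+\cdots$. Bound $\mathbb E\lVert\mathbf e^{(0)}\rVert_F^2\le n\sigma^2/b_0$ by Lemma~\ref{lemma12}. Dividing by $(1-\beta)$ then produces the four terms of \eqref{lemmaeq15}:
\begin{itemize}
\item the drift term with $(1-\beta)^{-2}$,
\item the iterate-difference term with $1/(K(1-\beta))$,
\item the initialization term $n\sigma^2/(b_0(1-\beta))$,
\item the variance term $nT(1-\beta)\sigma^2/K$.
\end{itemize}
Matching the stated constants $4$ and $3$ may require a sharper Young split. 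A likely choice is to keep $U$ orthogonal to everything $\mathcal F_t$-measurable and apply Young only between $\beta e^{(t-1)}$ and $W$ rather than a three-way split. I would then need to argue separately that $\mathbb E\langle U,W\rangle$ is controlled, for example by including the $v$ terms in $U$'s variance bound. I expect this bookkeeping of the non-martingale cross terms to be the main technical obstacle.
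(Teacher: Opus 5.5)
Your decomposition is the paper's. Your $U$ equals $\frac{1-\beta}{K}\sum_k\delta_{1,i}^{(t,k)}+\frac{\beta}{K}\sum_k(\delta_{1,i}^{(t,k)}-\delta_{2,i}^{(t,k)})$. Since $v_i^{(t,k)}$ plus the drift bias is exactly $\Delta_i^{(t,k)}=g_i(x_i^{(t,k)},\xi_i^{(t,k)})-g_i(x_i^{(t)},\xi_i^{(t,k)})$, your $W$ equals $\frac1K\sum_k\Delta_i^{(t,k)}$. The mechanism is also the same: the conditionally independent, mean-zero $U$ gives the $1/K$ factors, a $(1-\beta)$-weighted Young bound on the surviving cross term with $\beta e_i^{(t-1)}$ gives the extra $(1-\beta)^{-1}$ on the drift term, and then you unroll.

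The argument is sound, but as written it proves the lemma only up to absolute constants, not with the stated $4,3,1,3$. Three steps lose constants:
\begin{itemize}
\item Your contraction $1-\frac{1-\beta}{2}$ turns the initialization term into $\frac{2n\sigma^2}{b_0(1-\beta)}$.
\item Splitting $W$ into a centered part and a bias costs a factor $4$. Applying Jensen and Assumption~\ref{assumption3} directly to $\Delta_i^{(t,k)}$, conditioning on $\mathcal F_{t,k}$, gives $\mathbb E\lVert W\rVert^2\le\frac{L^2}{K}\sum_k\mathbb E\lVert x_i^{(t,k)}-x_i^{(t)}\rVert^2$.
\item The generic absorption of $\langle U,W\rangle$ inflates the variance and iterate-difference constants.
\end{itemize}

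The cross-term bookkeeping you expected to be the main obstacle disappears if you expand around $\beta e_i^{(t-1)}$, as the paper does. Write $U=U_1+U_2$ for the two pieces above. Then
$\lVert\beta e_i^{(t-1)}+U_1+U_2+W\rVert^2=\beta^2\lVert e_i^{(t-1)}\rVert^2+2\langle\beta e_i^{(t-1)},U_1+U_2+W\rangle+\lVert U_1+U_2+W\rVert^2$, and:
\begin{itemize}
\item the $U$ cross terms vanish conditionally on $\mathcal F_{t,0}$;
\item the bound $\lVert U_1+U_2+W\rVert^2\le3(\lVert U_1\rVert^2+\lVert U_2\rVert^2+\lVert W\rVert^2)$ disposes of $\langle U,W\rangle$ with no separate argument;
\item the remaining cross term satisfies $2\langle\beta e_i^{(t-1)},\mathbb E[W\mid\mathcal F_{t,0}]\rangle\le(1-\beta)\beta^2\lVert e_i^{(t-1)}\rVert^2+\frac{1}{1-\beta}\mathbb E[\lVert W\rVert^2\mid\mathcal F_{t,0}]$.
\end{itemize}
With $\beta^2(2-\beta)\le\beta$ and $3+\frac{1}{1-\beta}\le\frac{4}{1-\beta}$, you get a one-step recursion with contraction exactly $\beta$. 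Its geometric sums give precisely the stated constants.

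One small correction to your framing. $\zeta_i^{(t,k)}$ is a martingale difference with respect to $\{\mathcal F_{t,k}\}_k$, hence conditionally mean zero given $\mathcal F_{t,0}$. The only genuinely non-centered piece is the bias $\frac1K\sum_k(\nabla f_i(x_i^{(t,k)})-\nabla f_i(x_i^{(t)}))$, which you had already isolated. So the difficulty you named first is misdiagnosed, though this does not affect the rest of your argument.
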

\begin{proof}
See Appendix~\ref{appendixD.4}. \qed
\end{proof}

\begin{lemma} \label{lemma16}
	Suppose Assumptions~\ref{assumption1}--\ref{assumption4} hold and $\beta\in[0,1)$. Then the following inequality holds
	\begin{equation} \label{lemmaeq16}
		\begin{aligned}
		&\sum_{t=1}^{T-1}\mathbb E\bigl[\lVert\mathbf m^{(t)}-\mathbf m^{(t-1)}\rVert_F^2\bigr] \le{}15L^2\sum_{t=0}^{T-1}\mathbb E\bigl[\lVert\mathbf x^{(t+1)}-\mathbf x^{(t)}\rVert_F^2\bigr]\\
		&\quad +\frac{18L^2}{K}\sum_{t=1}^{T-1}\sum_{k=0}^{K-1}\mathbb E\bigl[\lVert\mathbf x^{(t,k)}-\mathbf x^{(t)}\rVert_F^2\bigr] +\frac{3n\sigma^2}{b_0}+\frac{12nT(1-\beta)^2\sigma^2}{K}.
	\end{aligned}
	\end{equation}
\end{lemma}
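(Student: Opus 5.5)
The plan is to start from the decomposition \eqref{lemmaeq71}, written in matrix form, and treat the two pieces separately. For $t\ge1$, write
\[
\mathbf m^{(t)}-\mathbf m^{(t-1)}=\mathbf D^{(t)}-(1-\beta)\mathbf E^{(t)},
\]
where $\mathbf D^{(t)}$ has rows $\frac1K\sum_k\bigl(g_i(x_i^{(t,k)},\xi_i^{(t,k)})-g_i(x_i^{(t-1)},\xi_i^{(t,k)})\bigr)$. The matrix $\mathbf E^{(t)}$ has rows $m_i^{(t-1)}-\frac1K\sum_k g_i(x_i^{(t-1)},\xi_i^{(t,k)})$.

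\emph{The term $\mathbf D^{(t)}$.} Apply Jensen over $k$ and then mean-squared smoothness (Assumption~\ref{assumption3}). This gives $\mathbb E\lVert\mathbf D^{(t)}\rVert_F^2\le \frac{L^2}{K}\sum_k\mathbb E\lVert\mathbf x^{(t,k)}-\mathbf x^{(t-1)}\rVert_F^2$. Splitting $\mathbf x^{(t,k)}-\mathbf x^{(t-1)}=(\mathbf x^{(t,k)}-\mathbf x^{(t)})+(\mathbf x^{(t)}-\mathbf x^{(t-1)})$ then produces the local-drift sum and the consecutive-difference sum, each with a constant factor.

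\emph{The term $\mathbf E^{(t)}$.} Decompose
\[
\mathbf E^{(t)}=\bigl(\mathbf m^{(t-1)}-\nabla f(\mathbf x^{(t-1)})\bigr)-\Bigl(\tfrac1K\sum_k g(\mathbf x^{(t-1)},\xi^{(t,k)})-\nabla f(\mathbf x^{(t-1)})\Bigr).
\]
The second part is an average of $K$ independent, conditionally unbiased samples at the fixed point $\mathbf x^{(t-1)}$, which is $\mathcal F$-measurable before round $t$ is sampled. By Assumption~\ref{assumption1} its expected squared norm is at most $n\sigma^2/K$. Hence
\[
(1-\beta)^2\,\mathbb E\lVert\mathbf E^{(t)}\rVert_F^2\le 2(1-\beta)^2\,\mathbb E\lVert\mathbf m^{(t-1)}-\nabla f(\mathbf x^{(t-1)})\rVert_F^2+\frac{2(1-\beta)^2n\sigma^2}{K}.
\]
Summing over $t$, I invoke Lemma~\ref{lemma15} for the first piece. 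The factor $(1-\beta)^2$ cancels the $(1-\beta)^{-2}$ on the drift term and the $(1-\beta)^{-1}$ on the $\mathbf x$-difference term, leaving constants times $L^2$ or $L^2/K$, since $K\ge1$ and $1-\beta\le1$. The same factor turns $\frac{n\sigma^2}{b_0(1-\beta)}$ into at most $\frac{n\sigma^2}{b_0}$, and $\frac{3nT(1-\beta)\sigma^2}{K}$ into $\frac{3nT(1-\beta)^2\sigma^2}{K}$. Together with the direct $\sigma^2/K$ term, the variance contribution is $O(nT(1-\beta)^2\sigma^2/K)$, as claimed.

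\emph{Combining.} Use $\lVert a+b\rVert^2\le(1+\theta)\lVert a\rVert^2+(1+\theta^{-1})\lVert b\rVert^2$ with a fixed weight, for instance $3$ and $3/2$, to merge the two pieces, and collect constants to reach $15$, $18$, $3$ and $12$. The initialization index $t=1$ involves $\mathbf m^{(0)}$, which is already covered by Lemma~\ref{lemma15} via Lemma~\ref{lemma12}.

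The main obstacle I expect is the cross term. Rather than using conditional unbiasedness, the crude inequality above is safest, because $\mathbf D^{(t)}$ and the sampling noise share the samples $\xi^{(t,k)}$. The constant bookkeeping is routine, but it must be checked that the weighted Young splits indeed land within $15L^2$ and $18L^2/K$.
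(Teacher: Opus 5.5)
Your proposal is correct and is essentially the paper's proof. The paper uses the same decomposition \eqref{D.5.1} into three terms: the smoothness term, $(1-\beta)\bigl(\mathbf m^{(t-1)}-\nabla f(\mathbf x^{(t-1)})\bigr)$, and the $K$-sample noise average at $\mathbf x^{(t-1)}$. It bounds the square of their sum crudely by three times the sum of squares, with no cross-term cancellation, splits $\mathbf x^{(t,k)}-\mathbf x^{(t-1)}$ through $\mathbf x^{(t)}$, and closes with Lemma~\ref{lemma15}. Your two-stage Young split reproduces exactly these weights, and hence the constants $15$, $18$, $3$, $12$, provided you put the weight $3$ on $\mathbf D^{(t)}$ and $3/2$ on $(1-\beta)\mathbf E^{(t)}$; the reverse assignment would give $27L^2/K$ instead of $18L^2/K$ on the drift term.
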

\begin{proof}
See Appendix~\ref{appendixD.5}. \qed
\end{proof}

\section{Proofs for ASPP}

Let $\mathcal F_t$ denote the $\sigma$-algebra generated by all samples drawn before communication round~$t$. Set $\mathcal F_0$ to be the trivial $\sigma$-algebra and define \[\begin{aligned}\mathcal F_1&:=\sigma\bigl(\xi_{i,r}^{(0)}:i\in\mathcal N,\ 1\le r\le b_0\bigr),\\ \mathcal F_{t+1}&:=\sigma\bigl(\mathcal F_t,\xi_i^{(t)}:i\in\mathcal N\bigr),\qquad t\ge1.\end{aligned}\]
For $t\ge1$, the update rules of Algorithm~\ref{alg1} imply that $\mathbf x^{(t)}$, $\mathbf y^{(t)}$, and $\mathbf m^{(t-1)}$ are $\mathcal F_t$-measurable. By Assumption~\ref{assumption1}, the samples $\{\xi_i^{(t)}\}_{i=1}^n$ are mutually independent and independent of $\mathcal F_t$.

\subsection{Proof of Lemma~\ref{lemma2}} \label{appendixB.1}

\begin{proof}
Since $f$ is $L$-smooth, it follows from the descent lemma \citep{nesterov2018lectures} that, for $t \ge 0$,
\begin{equation} \label{B.1.1}
	\begin{aligned}
		&f({\hat x}^{(t+1)}) - f({\hat x}^{(t)}) \le \langle \nabla f({\hat x}^{(t)}), {\hat x}^{(t+1)} - {\hat x}^{(t)} \rangle + \frac{L}{2} \lVert {\hat x}^{(t+1)} - {\hat x}^{(t)} \rVert^2.
	\end{aligned}
\end{equation}
Substituting \eqref{B.0.3} into \eqref{B.1.1}, we obtain
\begin{equation} \label{B.1.2}
	\begin{aligned}
		&f({\hat x}^{(t+1)}) - f({\hat x}^{(t)}) \le - c \alpha \langle \nabla f({\hat x}^{(t)}), \bar m^{(t)} \rangle - \alpha \langle \nabla f({\hat x}^{(t)}), \pi_A^\top ({\mathbf y}^{(t+1)} - {\bar {\mathbf y}}^{(t+1)}) \rangle\\
		&\quad + c^2 \alpha^2 L \lVert\bar m ^{(t)} \rVert^2 + \alpha^2 L \lVert {\mathbf y}^{(t+1)} - {\bar {\mathbf y}}^{(t+1)} \rVert_F^2.
	\end{aligned}
\end{equation}

Using $\langle a,b \rangle = \frac{1}{2}(\lVert a \rVert^2 + \lVert b \rVert^2 - \lVert a - b \rVert^2)$ for any $a,b \in \mathbb R^p$, we have 
\begin{equation} \label{B.1.3}
	\begin{aligned}
		&-\langle \nabla f({\hat x}^{(t)}), \bar m^{(t)} \rangle\\
		&= -\frac{1}{2} \lVert\nabla f({\hat x}^{(t)}) \rVert^2 -\frac{1}{2} \lVert\bar m ^{(t)} \rVert^2 + \frac{1}{2}\lVert\nabla f({\hat x}^{(t)}) - \bar m ^{(t)} \rVert^2\\
		&\le -\frac{1}{2} \lVert\nabla f({\hat x}^{(t)}) \rVert^2 -\frac{1}{2} \lVert\bar m ^{(t)} \rVert^2 + \lVert\nabla f({\hat x}^{(t)}) - {\overline {\nabla f}} (\mathbf x^{(t)}) \rVert^2\\
		&\quad + \lVert {\bar m}^{(t)} - {\overline {\nabla f}} (\mathbf x^{(t)}) \rVert^2\\
		&\le -\frac{1}{2} \lVert\nabla f({\hat x}^{(t)}) \rVert^2 -\frac{1}{2} \lVert\bar m ^{(t)} \rVert^2 + \frac{L^2}{n} \lVert {\mathbf{x}}^{(t)} - \bar {\mathbf{x}}^{(t)} \rVert_F^2 + \lVert {\bar m}^{(t)} - {\overline {\nabla f}} (\mathbf x^{(t)}) \rVert^2,
	\end{aligned}
\end{equation}
where the last inequality is derived from the $L$-smoothness of $f_i$. Applying Young's inequality, we have
\begin{equation} \label{B.1.4}
	\begin{aligned}
		&\quad -\langle \nabla f({\hat x}^{(t)}), \pi_A^\top ({\mathbf y}^{(t+1)} - {\bar {\mathbf y}}^{(t+1)}) \rangle \le \frac{1}{c} \lVert {\mathbf y}^{(t+1)} - {\bar {\mathbf y}}^{(t+1)} \rVert^2_F + \frac{c}{4} \lVert\nabla f({\hat x}^{(t)}) \rVert^2.
	\end{aligned}
\end{equation}

Substituting \eqref{B.1.3} and \eqref{B.1.4} into \eqref{B.1.2} and using $\alpha \le \frac{1}{4cL}$ yields
\begin{equation} \label{B.1.5}
	\begin{aligned}
		&\lVert\nabla f({\hat x}^{(t)}) \rVert^2 \le \: \frac{4(f({\hat x}^{(t)}) - f({\hat x}^{(t+1)}))}{c \alpha} + \frac{4 L^2}{n} \lVert {\mathbf{x}}^{(t)} - \bar {\mathbf{x}}^{(t)} \rVert_F^2 + 4 \lVert {\bar m}^{(t)} - {\overline {\nabla f}} (\mathbf x^{(t)}) \rVert^2\\
		&\quad + \frac{5}{c^2} \lVert {\mathbf y}^{(t+1)} - {\bar {\mathbf y}}^{(t+1)} \rVert^2_F - \lVert\bar m ^{(t)} \rVert^2 .
	\end{aligned}
\end{equation}
Summing over $t$ from $0$ to $T-1$, using $f(x^{(0)}) - \inf_{x \in \mathbb R^p} f(x) \le \Delta_f$, and recalling that $x_i^{(0)} = x^{(0)}$ for all $i \in \mathcal N$, we have
\begin{equation} \label{B.1.6}
	\begin{aligned}
		&\sum_{t=0}^{T-1} \lVert\nabla f({\hat x}^{(t)}) \rVert^2 \le \: \frac{4 \Delta_f}{c \alpha} + \frac{4L^2}{n} \sum_{t=0}^{T-1} \lVert {\mathbf{x}}^{(t+1)} - \bar {\mathbf{x}}^{(t+1)} \rVert_F^2 + 4 \sum_{t=0}^{T-1} \lVert {\bar m}^{(t)} - {\overline {\nabla f}} (\mathbf x^{(t)}) \rVert^2\\
		&\quad + \frac{5}{c^2} \sum_{t=0}^{T-1} \lVert {\mathbf{y}}^{(t+1)} - \bar {\mathbf{y}}^{(t+1)} \rVert_F^2 - \sum_{t=0}^{T-1} \lVert {\bar m}^{(t)} \rVert^2.
	\end{aligned}
\end{equation}
Taking the expectation on both sides completes the proof. \qed
\end{proof}

\subsection{Proof of Lemma~\ref{lemma3}} \label{appendixB.2}

\begin{proof}
Using \eqref{B.0.2} recursively, we have, for $t\ge0$,
\begin{equation} \label{B.2.1}
	{\mathbf x}^{(t+1)} = A^{t+1} \mathbf x^{(0)} - \alpha \sum_{l=0}^{t} A^{t-l} \mathbf y^{(l+1)}.
\end{equation}
Left-multiplying \eqref{B.2.1} by $\mathds{1} \pi_A^\top$ yields
\begin{equation} \label{B.2.2}
	{\bar {\mathbf x}}^{(t+1)} = {\bar {\mathbf x}}^{(0)} - \alpha \sum_{l=0}^{t} \mathds{1} \pi_A^\top \mathbf y^{(l+1)}.
\end{equation}
Combining \eqref{B.2.1} and \eqref{B.2.2} with the initialization $x_i^{(0)} = x^{(0)}$ for all $i \in \mathcal N$ yields
\begin{equation} \label{B.2.3}
	\lVert {\mathbf x}^{(t+1)} - \bar {\mathbf x}^{(t+1)} \rVert_F^2 = \alpha^2 \Bigl\lVert\sum_{l=0}^{t} \bigl(A^{t-l} - \mathds 1 \pi_A^\top \bigr) \mathbf y^{(l+1)} \Bigr\rVert_F^2.
\end{equation}
Taking the sum of \eqref{B.2.3} over $t=0,\ldots,T-1$, we have
\begin{equation} \label{B.2.4}
	\begin{aligned}
		&\sum_{t=0}^{T-1} \lVert {\mathbf x}^{(t+1)} - \bar {\mathbf x}^{(t+1)} \rVert^2_F\\
		&= \: \alpha^2 \sum_{t=0}^{T-1} \Bigl\lVert\sum_{l=0}^{t} \bigl(A^{t-l} - \mathds 1 \pi_A^\top \bigr) \mathbf y^{(l+1)} \Bigr\rVert_F^2\\
		&\le \: s_A^2 \alpha^2 \sum_{t=0}^{T-1} \lVert\mathbf y^{(t+1)}\rVert_F^2\\
		&\le \: 2 s_A^2 \alpha^2 \sum_{t=0}^{T-1} \lVert {\mathbf{y}}^{(t+1)} - \bar {\mathbf{y}}^{(t+1)} \rVert_F^2 + 2 n^2 s_A^2 \alpha^2 \sum_{t=0}^{T-1} \lVert\bar m^{(t)}\rVert^2 ,
	\end{aligned}
\end{equation}
where the first inequality follows from Lemma~\ref{lemma11}. Taking the expectation on both sides completes the proof. \qed
\end{proof}

\subsection{Proof of Lemma~\ref{lemma4}} \label{appendixB.3}

\begin{proof}
Unrolling \eqref{B.0.1} and using $\mathbf y^{(0)}=\mathbb{0}$, we obtain, for $t\ge0$,
\begin{equation} \label{B.3.1}
	{\mathbf y}^{(t+1)} = \sum_{l=0}^{t} B^{t-l}(\mathbf m^{(l)} - \mathbf m^{(l-1)}).
\end{equation}
Left-multiplying \eqref{B.3.1} by $\pi_B \mathds{1}^\top$ yields
\begin{equation} \label{B.3.2}
	\bar {\mathbf y}^{(t+1)} = \sum_{l=0}^{t} \pi_B \mathds{1}^\top (\mathbf m^{(l)} - \mathbf m^{(l-1)}).
\end{equation}
Using \eqref{B.3.1}, \eqref{B.3.2}, and Lemma~\ref{lemma11}, we have
\begin{equation} \label{B.3.3}
	\sum_{t=0}^{T-1} \lVert {\mathbf y}^{(t+1)} - \bar {\mathbf y}^{(t+1)} \rVert_F^2 \le s_B^2 \sum_{t=0}^{T-1} \lVert\mathbf m^{(t)} - \mathbf m^{(t-1)}\rVert_F^2.
\end{equation}

Taking expectations and using $\mathbf m^{(-1)}=\mathbb{0}$, we obtain
\begin{equation} \label{B.3.4}
	\begin{aligned}
		&\sum_{t=0}^{T-1}\mathbb E\bigl[\lVert\mathbf y^{(t+1)}-\bar{\mathbf y}^{(t+1)}\rVert_F^2\bigr]\\
		&\le {}2s_B^2\mathbb E\bigl[\lVert\mathbf m^{(0)}-\nabla f(\mathbf x^{(0)})\rVert_F^2\bigr] +2s_B^2\lVert\nabla f(\mathbf x^{(0)})\rVert_F^2\\
		&\quad +s_B^2\sum_{t=1}^{T-1}\mathbb E\bigl[\lVert\mathbf m^{(t)}-\mathbf m^{(t-1)}\rVert_F^2\bigr]\\
		&\le {}\frac{2ns_B^2\sigma^2}{b_0} +4ns_B^2\Delta_fL +s_B^2\sum_{t=1}^{T-1}\mathbb E\bigl[\lVert\mathbf m^{(t)}-\mathbf m^{(t-1)}\rVert_F^2\bigr],
	\end{aligned}
\end{equation}
where the last inequality follows from Lemma~\ref{lemma12} together with the bound $\lVert\nabla f(\mathbf x^{(0)})\rVert_F^2\le2n\Delta_fL$, which follows from the $L$-smoothness of each $f_i$ and the definition of $\Delta_f$. Applying Lemma~\ref{lemma14} to \eqref{B.3.4} completes the proof. \qed
\end{proof}

\subsection{Proof of Lemma~\ref{lemma5}} \label{appendixB.4}

\begin{proof}
For any $i \in \mathcal N$ and $t \ge 1$, let $\delta_{1,i}^{(t)} := g_i(x_i^{(t)}, \xi_i^{(t)}) - \nabla f_i(x_i^{(t)})$ and $\delta_{2,i}^{(t)} := g_i(x_i^{(t-1)}, \xi_i^{(t)}) - \nabla f_i(x_i^{(t-1)})$. Using the update of $m_i^{(t)}$ in \eqref{eq3}, we obtain
\begin{equation} \label{B.4.1}
	\begin{aligned}
		&m_i^{(t)} - \nabla f_i(x_i^{(t)}) = \: \beta \bigl(m_i^{(t-1)} - \nabla f_i(x_i^{(t-1)})\bigr) + (1-\beta) \delta_{1,i}^{(t)} + \beta (\delta_{1,i}^{(t)} - \delta_{2,i}^{(t)}).
	\end{aligned}
\end{equation}
Averaging \eqref{B.4.1} over $i=1,\ldots,n$ yields
\begin{equation} \label{B.4.2}
	\begin{aligned}
		&\bar m^{(t)} - \overline {\nabla f}(\mathbf x^{(t)}) = \: \beta \bigl(\bar m^{(t-1)} - \overline {\nabla f}(\mathbf x^{(t-1)})\bigr) + \frac{1-\beta}{n} \sum_{i=1}^n \delta_{1,i}^{(t)} + \frac{\beta}{n} \sum_{i=1}^n (\delta_{1,i}^{(t)} - \delta_{2,i}^{(t)}).
	\end{aligned}
\end{equation}
From \eqref{B.4.2}, we can deduce that
\begin{equation} \label{B.4.3}
	\begin{aligned}
		&\mathbb E\bigl[\lVert\bar m^{(t)} - \overline {\nabla f}(\mathbf x^{(t)})\rVert^2 \bigm| \mathcal F_t\bigr]\\
		&= \:\beta^2 \lVert\bar m^{(t-1)} - \overline {\nabla f}(\mathbf x^{(t-1)})\rVert^2+ \mathbb E \Bigl[ \Bigl\lVert\frac{(1-\beta)}{n} \sum_{i=1}^n \delta_{1,i}^{(t)} + \frac{\beta}{n} \sum_{i=1}^n (\delta_{1,i}^{(t)} - \delta_{2,i}^{(t)}) \Bigr\rVert^2 \Bigm| \mathcal F_t \Bigr]\\
		&\le \:\beta^2 \lVert\bar m^{(t-1)} - \overline {\nabla f}(\mathbf x^{(t-1)})\rVert^2 + \frac{2(1-\beta)^2}{n^2}\mathbb E \Bigl[ \Bigl\lVert\sum_{i=1}^n \delta_{1,i}^{(t)} \Bigr\rVert^2 \Bigm| \mathcal F_t \Bigr]\\
		&\quad + \frac{2 \beta^2}{n^2}\mathbb E \Bigl[ \Bigl\lVert\sum_{i=1}^n (\delta_{1,i}^{(t)} - \delta_{2,i}^{(t)}) \Bigr\rVert^2 \Bigm| \mathcal F_t \Bigr],
	\end{aligned}
\end{equation}
where the equality follows from the unbiasedness of the stochastic gradients and the $\mathcal F_t$-measurability of $\bar m^{(t-1)}-\overline{\nabla f}(\mathbf x^{(t-1)})$.

For the second and third terms on the right-hand side of \eqref{B.4.3}, the unbiasedness of the stochastic gradients and the independence of the samples give
\begin{equation} \label{B.4.5}
	\begin{aligned}
		&\mathbb E\Bigl[ \Bigl\lVert\sum_{i=1}^n\delta_{1,i}^{(t)}\Bigr\rVert^2 \Bigm|\mathcal F_t \Bigr] ={}\sum_{i=1}^n \mathbb E\bigl[ \lVert\delta_{1,i}^{(t)}\rVert^2 \bigm|\mathcal F_t \bigr] \le n\sigma^2,
	\end{aligned}
\end{equation}
and
\begin{equation} \label{B.4.7}
	\begin{aligned}
		&\mathbb E\Bigl[ \Bigl\lVert\sum_{i=1}^{n} \bigl(\delta_{1,i}^{(t)}-\delta_{2,i}^{(t)}\bigr) \Bigr\rVert^{2} \Bigm|\mathcal F_t \Bigr]\\
		&= {}\sum_{i=1}^{n} \mathbb E\bigl[ \lVert\delta_{1,i}^{(t)}-\delta_{2,i}^{(t)}\rVert^{2} \bigm|\mathcal F_t \bigr]\\
		&= {}\sum_{i=1}^{n} \mathbb E\bigl[ \lVert g_i(x_i^{(t)},\xi_i^{(t)}) -g_i(x_i^{(t-1)},\xi_i^{(t)})\rVert^{2} \bigm|\mathcal F_t \bigr] -\sum_{i=1}^{n} \lVert\nabla f_i(x_i^{(t)}) -\nabla f_i(x_i^{(t-1)})\rVert^{2}\\
		&\le {}L^{2} \lVert\mathbf x^{(t)}-\mathbf x^{(t-1)}\rVert_F^{2}.
	\end{aligned}
\end{equation}
Here, the second equality in \eqref{B.4.7} follows by expanding the squared norm and using the unbiasedness of the stochastic gradients. The final inequalities in \eqref{B.4.5} and \eqref{B.4.7} follow from the variance bound and mean-squared smoothness, respectively.

Substituting \eqref{B.4.5} and \eqref{B.4.7} into \eqref{B.4.3} and taking the total expectation yields
\begin{equation} \label{B.4.8}
	\begin{aligned}
		&\mathbb E\bigl[\lVert\bar m^{(t)} - \overline {\nabla f}(\mathbf x^{(t)})\rVert^2 \bigr]\\
		&\le \:\beta^2 \mathbb E\bigl[\lVert\bar m^{(t-1)} - \overline {\nabla f}(\mathbf x^{(t-1)})\rVert^2\bigr] + \frac{2(1-\beta)^2 \sigma^2}{n}+ \frac{2 \beta^2 L^2}{n^2} \mathbb E \bigl[\lVert {\mathbf x}^{(t)} - {\mathbf x}^{(t-1)} \rVert_F^2\bigr]\\
		&\le \: \beta^{2t} \mathbb E\bigl[\lVert\bar m^{(0)} - \overline {\nabla f}(\mathbf x^{(0)})\rVert^2\bigr] + \frac{2(1-\beta)^2 \sigma^2}{n} \sum_{l=0}^{t-1} \beta^{2(t-1-l)}\\
		&\quad + \frac{2 \beta^2 L^2}{n^2} \sum_{l=0}^{t-1} \beta^{2(t-1-l)} \mathbb E \bigl[\lVert {\mathbf x}^{(l+1)} - {\mathbf x}^{(l)} \rVert_F^2\bigr],
	\end{aligned}
\end{equation}
where the last inequality is obtained by unrolling the recurrence down to $t=0$. Summing over $t=0,\ldots,T-1$, we have
\begin{equation} \label{B.4.9}
	\begin{aligned}
		&\sum_{t=0}^{T-1}\mathbb E\bigl[\lVert\bar m^{(t)} - \overline {\nabla f}(\mathbf x^{(t)})\rVert^2 \bigr]\\
		&\le \: \sum_{t=0}^{T-1} \beta^{2t} \mathbb E\bigl[\lVert\bar m^{(0)} - \overline {\nabla f}(\mathbf x^{(0)})\rVert^2\bigr] + \frac{2(1-\beta)^2 \sigma^2}{n} \sum_{t=1}^{T-1} \sum_{l=0}^{t-1} \beta^{2(t-1-l)}\\
		&\quad + \frac{2 \beta^2 L^2}{n^2} \sum_{t=1}^{T-1} \sum_{l=0}^{t-1} \beta^{2(t-1-l)}\mathbb E \bigl[\lVert {\mathbf x}^{(l+1)} - {\mathbf x}^{(l)} \rVert_F^2\bigr]\\
		&\le \: \frac{1}{1 - \beta^2}\mathbb E\bigl[\lVert\bar m^{(0)} - \overline {\nabla f}(\mathbf x^{(0)})\rVert^2\bigr] + \frac{2T(1-\beta)^2 \sigma^2}{n(1-\beta^2)}\\
		&\quad + \frac{2 \beta^2 L^2}{n^2(1 - \beta^2)} \sum_{t=0}^{T-1} \mathbb E \bigl[\lVert {\mathbf x}^{(t+1)} - {\mathbf x}^{(t)} \rVert_F^2\bigr],
	\end{aligned}
\end{equation}
where the last inequality is obtained by interchanging the order of summation and using the geometric-series bound. Applying Lemma~\ref{lemma12} to \eqref{B.4.9} and using $\beta \in [0, 1)$ yields the desired result. \qed
\end{proof}

\subsection{Proof of Lemma~\ref{lemma6}} \label{appendixB.5}

\begin{proof}
From \eqref{B.0.2}, it follows that, for $t\ge1$,
\begin{equation} \label{B.5.1}
	\mathbf x^{(t+1)}-\mathbf x^{(t)}
	=A(\mathbf x^{(t)}-\mathbf x^{(t-1)})-\alpha(\mathbf y^{(t+1)}-\mathbf y^{(t)}).
\end{equation}
The common initialization gives $A\mathbf x^{(0)}=\mathbf x^{(0)}$ and $\mathbf x^{(1)}-\mathbf x^{(0)}=-\alpha\mathbf y^{(1)}$. Unrolling \eqref{B.5.1} and using $\mathbf y^{(0)}=\mathbb{0}$, we obtain, for $t\ge0$,
\begin{equation} \label{B.5.2}
	\begin{aligned}
		&\mathbf x^{(t+1)}-\mathbf x^{(t)}\\
		&= {} -\alpha\sum_{l=0}^{t}A^{t-l}\bigl(\mathbf y^{(l+1)}-\mathbf y^{(l)}\bigr)\\
		&= {}-c\alpha\mathds{1}\bar m^{(t)} -\alpha\mathds{1}\pi_A^\top\bigl(\mathbf y^{(t+1)}-\bar{\mathbf y}^{(t+1)}\bigr) -\alpha\sum_{l=0}^{t}(A^{t-l}-\mathds{1}\pi_A^\top)\bigl(\mathbf y^{(l+1)}-\mathbf y^{(l)}\bigr),
	\end{aligned}
\end{equation}
where the last equality follows from $\pi_A^\top\bar{\mathbf y}^{(t+1)}=\pi_A^\top\pi_B \mathds{1}^\top \mathbf m^{(t)}=c\bar m^{(t)}$.

By $\lVert\mathds1\pi_A^\top\rVert_2^2\le n$ and Lemma~\ref{lemma11}, it follows from \eqref{B.5.2} that
\begin{equation} \label{B.5.3}
	\begin{aligned}
		&\sum_{t=0}^{T-1}\lVert\mathbf x^{(t+1)}-\mathbf x^{(t)}\rVert_F^2 \le{}3nc^2\alpha^2\sum_{t=0}^{T-1}\lVert\bar m^{(t)}\rVert^2+3n\alpha^2\sum_{t=0}^{T-1}\lVert\mathbf y^{(t+1)}-\bar{\mathbf y}^{(t+1)}\rVert_F^2\\
		&\quad +3s_A^2\alpha^2\sum_{t=0}^{T-1}\lVert\mathbf y^{(t+1)}-\mathbf y^{(t)}\rVert_F^2.
	\end{aligned}
\end{equation}

For the second term on the right-hand side of \eqref{B.5.3}, we use \eqref{B.3.3}. For the third term, note that
\begin{equation} \label{B.5.6}
	\begin{aligned}
		&\mathbf y^{(t+1)}-\mathbf y^{(t)} ={}\pi_B\mathds1^\top\bigl(\mathbf m^{(t)}-\mathbf m^{(t-1)}\bigr) +\bigl(\mathbf y^{(t+1)}-\bar{\mathbf y}^{(t+1)}\bigr)-\bigl(\mathbf y^{(t)}-\bar{\mathbf y}^{(t)}\bigr).
	\end{aligned}
\end{equation}
Using $\lVert\pi_B\mathds1^\top\rVert_2^2\le n$ and $\mathbf y^{(0)}=\bar{\mathbf y}^{(0)}=\mathbb{0}$, we have
\begin{equation} \label{B.5.7}
	\begin{aligned}
		&\sum_{t=0}^{T-1}\lVert\mathbf y^{(t+1)}-\mathbf y^{(t)}\rVert_F^2 \le{}3n\sum_{t=0}^{T-1}\lVert\mathbf m^{(t)}-\mathbf m^{(t-1)}\rVert_F^2+6\sum_{t=0}^{T-1}\lVert\mathbf y^{(t+1)}-\bar{\mathbf y}^{(t+1)}\rVert_F^2.
	\end{aligned}
\end{equation}
Substituting \eqref{B.3.3} and \eqref{B.5.7} into \eqref{B.5.3} and using $9ns_A^2+3ns_B^2+18s_A^2s_B^2\le18h^2$ gives
\begin{equation} \label{B.5.8}
	\begin{aligned}
		&\sum_{t=0}^{T-1}\lVert\mathbf x^{(t+1)}-\mathbf x^{(t)}\rVert_F^2 \le{}3nc^2\alpha^2\sum_{t=0}^{T-1}\lVert\bar m^{(t)}\rVert^2+18\alpha^2h^2\sum_{t=0}^{T-1}\lVert\mathbf m^{(t)}-\mathbf m^{(t-1)}\rVert_F^2.
	\end{aligned}
\end{equation}

Taking expectations and using $\mathbf m^{(-1)}=\mathbb{0}$, we obtain
\begin{equation} \label{B.5.9}
	\begin{aligned}
		&\sum_{t=0}^{T-1}\mathbb E\bigl[\lVert\mathbf x^{(t+1)}-\mathbf x^{(t)}\rVert_F^2\bigr]\\
		&\le {}3nc^2\alpha^2\sum_{t=0}^{T-1}\mathbb E\bigl[\lVert\bar m^{(t)}\rVert^2\bigr] +18\alpha^2h^2\sum_{t=1}^{T-1}\mathbb E\bigl[\lVert\mathbf m^{(t)}-\mathbf m^{(t-1)}\rVert_F^2\bigr]\\
		&\quad + 36\alpha^2h^2\mathbb E \bigl[\lVert\mathbf{m}^{(0)} - {\nabla f}(\mathbf x^{(0)}) \rVert_F^2 \bigr] + 36\alpha^2h^2\lVert {\nabla f}(\mathbf x^{(0)}) \rVert_F^2\\
		&\le {}3nc^2\alpha^2\sum_{t=0}^{T-1}\mathbb E\bigl[\lVert\bar m^{(t)}\rVert^2\bigr] +162\alpha^2h^2L^2\sum_{t=0}^{T-1}\mathbb E\bigl[\lVert\mathbf x^{(t+1)}-\mathbf x^{(t)}\rVert_F^2\bigr]\\
		&\quad + 162nT\alpha^2h^2(1-\beta)^2\sigma^2 +\frac{90n\alpha^2h^2\sigma^2}{b_0} + 72n\alpha^2h^2\Delta_fL,
	\end{aligned}
\end{equation}
where the last inequality follows from Lemmas~\ref{lemma12} and~\ref{lemma14} together with the bound $\lVert\nabla f(\mathbf x^{(0)})\rVert_F^2\le 2n\Delta_fL$. Finally, $\alpha\le\frac{1}{20hL}$ implies $1-162\alpha^2h^2L^2\ge\frac12$. Rearranging \eqref{B.5.9} and using this bound completes the proof. \qed
\end{proof}

\subsection{Proof of Theorem~\ref{theorem1}} \label{appendixB.6}

\begin{proof}
Substituting Lemma~\ref{lemma3} into Lemma~\ref{lemma2} yields
\begin{equation} \label{B.6.1}
	\begin{aligned}
		&\sum_{t=0}^{T-1}\mathbb E\bigl[\lVert\nabla f(\hat x^{(t)})\rVert^2\bigr] \le{}\frac{4\Delta_f}{c\alpha} -\frac{3}{4}\sum_{t=0}^{T-1}\mathbb E\bigl[\lVert\bar m^{(t)}\rVert^2\bigr] +\frac{6}{c^2}\sum_{t=0}^{T-1}\mathbb E\bigl[\lVert\mathbf y^{(t+1)}-\bar{\mathbf y}^{(t+1)}\rVert_F^2\bigr]\\
		&\quad +4\sum_{t=0}^{T-1}\mathbb E\bigl[\lVert\bar m^{(t)}-\overline{\nabla f}(\mathbf x^{(t)})\rVert^2\bigr],
	\end{aligned}
\end{equation}
where we used $8ns_A^2\alpha^2L^2-1\le-\frac{3}{4}$ and $\frac{8s_A^2\alpha^2L^2}{n}+\frac{5}{c^2}\le\frac{6}{c^2}$, which follow from $\alpha\le\frac{1}{20hL}$, $h^2\ge ns_A^2$, and $c\le n$.

Substituting the bounds in Lemmas~\ref{lemma4} and~\ref{lemma5} into \eqref{B.6.1} yields
\begin{equation} \label{B.6.2}
	\begin{aligned}
		&\sum_{t=0}^{T-1} \mathbb E\bigl[\lVert\nabla f(\hat x^{(t)})\rVert^2\bigr] \le{} \frac{4\Delta_f}{c\alpha} -\frac{3}{4}\sum_{t=0}^{T-1} \mathbb E\bigl[\lVert\bar m^{(t)}\rVert^2\bigr]\\
		&\quad +\Bigl( \frac{54s_B^2L^2}{c^2} +\frac{8L^2}{n^2(1-\beta)} \Bigr) \sum_{t=0}^{T-1} \mathbb E\bigl[ \lVert\mathbf x^{(t+1)}-\mathbf x^{(t)}\rVert_F^2 \bigr]\\
		&\quad +\frac{4\sigma^2}{nb_0(1-\beta)} +\frac{8T(1-\beta)\sigma^2}{n} +\frac{54nTs_B^2(1-\beta)^2\sigma^2}{c^2} +\frac{30ns_B^2\sigma^2}{c^2b_0} +\frac{24ns_B^2\Delta_fL}{c^2}.
	\end{aligned}
\end{equation}

Applying Lemma~\ref{lemma6} to \eqref{B.6.2} gives
\begin{equation} \label{B.6.3}
	\begin{aligned}
		&\sum_{t=0}^{T-1} \mathbb E\bigl[\lVert\nabla f(\hat x^{(t)})\rVert^2\bigr] \le{} \frac{4\Delta_f}{c\alpha} +\frac{4\sigma^2}{nb_0(1-\beta)} +\frac{8T(1-\beta)\sigma^2}{n}\\
		&\quad +\frac{80Th^2(1-\beta)^2\sigma^2}{c^2} +\frac{45h^2\sigma^2}{c^2b_0} +\frac{36h^2\Delta_fL}{c^2},
	\end{aligned}
\end{equation}
where we used $ns_B^2\le h^2$, $\alpha^2h^2L^2\le\frac{1}{900}$, and $\frac{c^2\alpha^2L^2}{n(1-\beta)}\le\frac{1}{400}$. In particular, the coefficient of $\sum_{t=0}^{T-1}\mathbb E\bigl[\lVert\bar m^{(t)}\rVert^2\bigr]$ satisfies
$324ns_B^2\alpha^2L^2
+\frac{48c^2\alpha^2L^2}{n(1-\beta)}-\frac{3}{4}\le0
$.
Dividing \eqref{B.6.3} by $T$ completes the proof. \qed
\end{proof}

\section{Proofs for LASPP}

For Algorithm~\ref{alg2}, let $\mathcal F_{t,k}$ denote the $\sigma$-algebra generated by all samples drawn before local step~$k$ of communication round~$t$. Set \[\mathcal F_{1,0}:=\sigma\bigl(\xi_{i,r}^{(0)}:i\in\mathcal N,\ 1\le r\le b_0\bigr).\] For $t\ge1$, recursively define \[\begin{aligned}\mathcal F_{t,k+1}&:=\sigma\bigl(\mathcal F_{t,k},\xi_i^{(t,k)}:i\in\mathcal N\bigr),\quad k=0,\ldots,K-1,\\ \mathcal F_{t+1,0}&:=\mathcal F_{t,K}.\end{aligned}\]
The update rules of Algorithm~\ref{alg2} imply that $\mathbf x^{(t)}$, $\mathbf y^{(t)}$, and $\mathbf m^{(t-1)}$ are $\mathcal F_{t,0}$-measurable. Moreover, $\mathbf x^{(t,k)}$ is $\mathcal F_{t,k}$-measurable for $k=0,\ldots,K-1$. By Assumption~\ref{assumption1}, the samples $\{\xi_i^{(t,k)}\}_{i=1}^n$ are mutually independent and independent of $\mathcal F_{t,k}$.

\subsection{Proof of Lemma~\ref{lemma7}} \label{appendixC.1}

\begin{proof}
By the same argument as in \eqref{B.3.1}--\eqref{B.3.4}, we obtain
\begin{equation} \label{C.1.1}
		\sum_{t=0}^{T-1}\mathbb E\bigl[\lVert\mathbf y^{(t+1)}-\bar{\mathbf y}^{(t+1)}\rVert_F^2\bigr]
		\le\frac{2ns_B^2\sigma^2}{b_0}+4ns_B^2\Delta_fL
		+s_B^2\sum_{t=1}^{T-1}\mathbb E\bigl[\lVert\mathbf m^{(t)}-\mathbf m^{(t-1)}\rVert_F^2\bigr].
\end{equation}
Applying Lemma~\ref{lemma16} to \eqref{C.1.1} completes the proof. \qed
\end{proof}

\subsection{Proof of Lemma~\ref{lemma8}} \label{appendixC.2}

\begin{proof}
For any $i\in\mathcal N$, $t\ge1$, and $k=0,\ldots,K-1$, let $\delta_{1,i}^{(t,k)} := g_i(x_i^{(t)}, \xi_i^{(t,k)}) - \nabla f_i(x_i^{(t)})$, $\delta_{2,i}^{(t,k)} := g_i(x_i^{(t-1)}, \xi_i^{(t,k)}) - \nabla f_i(x_i^{(t-1)})$, and $\Delta_i^{(t,k)} := g_i(x_i^{(t,k)}, \xi_i^{(t,k)}) - g_i(x_i^{(t)}, \xi_i^{(t,k)})$. From \eqref{eq5}, it follows that
\begin{equation} \label{C.2.1}
	\begin{aligned}
		&m_i^{(t)} - \nabla f_i(x_i^{(t)}) = \: \beta \bigl(m_i^{(t-1)} - \nabla f_i(x_i^{(t-1)})\bigr) + \frac{1-\beta}{K} \sum_{k=0}^{K-1} \delta_{1,i}^{(t,k)}\\
		&\quad + \frac{\beta}{K} \sum_{k=0}^{K-1}(\delta_{1,i}^{(t,k)} - \delta_{2,i}^{(t,k)}) + \frac{1}{K}\sum_{k=0}^{K-1}\Delta_i^{(t,k)}.
	\end{aligned}
\end{equation}
Averaging both sides of \eqref{C.2.1} over $i=1,\ldots,n$ gives
\begin{equation} \label{C.2.2}
	\begin{aligned}
		&\bar m^{(t)} - \overline {\nabla f}(\mathbf x^{(t)}) = \: \beta \bigl(\bar m^{(t-1)} - \overline {\nabla f}(\mathbf x^{(t-1)})\bigr) + \frac{1-\beta}{nK} \sum_{i=1}^n \sum_{k=0}^{K-1} \delta_{1,i}^{(t,k)}\\
		&\quad + \frac{\beta}{nK} \sum_{i=1}^n \sum_{k=0}^{K-1}(\delta_{1,i}^{(t,k)} - \delta_{2,i}^{(t,k)}) + \frac{1}{nK}\sum_{i=1}^n \sum_{k=0}^{K-1}\Delta_i^{(t,k)}.
	\end{aligned}
\end{equation}
Using the unbiasedness of the stochastic gradients and the $\mathcal F_{t,0}$-measurability of $\bar m^{(t-1)}-\overline{\nabla f}(\mathbf x^{(t-1)})$, we obtain from \eqref{C.2.2}
\begin{equation} \label{C.2.3}
	\begin{aligned}
		&\mathbb E\bigl[\lVert\bar m^{(t)} - \overline {\nabla f}(\mathbf x^{(t)})\rVert^2 \bigm| \mathcal F_{t,0}\bigr] \le \: \beta^2 \lVert\bar m^{(t-1)} - \overline {\nabla f}(\mathbf x^{(t-1)})\rVert^2\\
		&+ \frac{3(1-\beta)^2}{n^2 K^2} \mathbb E \Bigl[ \Bigl\lVert\sum_{i=1}^n \sum_{k=0}^{K-1} \delta_{1,i}^{(t,k)} \Bigr\rVert^2 \Bigm| \mathcal F_{t,0} \Bigr]\\
		&+ \frac{3 \beta^2}{n^2 K^2} \mathbb E \Bigl[ \Bigl\lVert\sum_{i=1}^n \sum_{k=0}^{K-1}(\delta_{1,i}^{(t,k)} - \delta_{2,i}^{(t,k)}) \Bigr\rVert^2 \Bigm| \mathcal F_{t,0} \Bigr] + \frac{3}{n^2 K^2}\mathbb E \Bigl[ \Bigl\lVert\sum_{i=1}^n \sum_{k=0}^{K-1} \Delta_i^{(t,k)} \Bigr\rVert^2 \Bigm| \mathcal F_{t,0} \Bigr]\\
		&+ 2\Bigl\langle \beta \bigl(\bar m^{(t-1)} \!\!-\! \overline {\nabla f}(\mathbf x^{(t-1)})\bigr), \frac{1}{nK} \mathbb E \Bigl[\sum_{i=1}^n \! \sum_{k=0}^{K-1}\! \Delta_i^{(t,k)} \Bigm| \mathcal F_{t,0} \Bigr] \Bigr\rangle.
	\end{aligned}
\end{equation}

For the second and third terms on the right-hand side of \eqref{C.2.3}, the unbiasedness of the stochastic gradients and the independence of the samples give
\begin{equation} \label{C.2.5}
	\begin{aligned}
		&\mathbb E \Bigl[ \Bigl\lVert\sum_{i=1}^n \sum_{k=0}^{K-1} \delta_{1,i}^{(t,k)} \Bigr\rVert^2 \Bigm| \mathcal F_{t,0} \Bigr] = \sum_{i=1}^n \sum_{k=0}^{K-1} \mathbb E \bigl[ \lVert\delta_{1,i}^{(t,k)} \rVert^2 \bigm| \mathcal F_{t,0} \bigr] \le nK \sigma^2,
	\end{aligned}
\end{equation}
and
\begin{equation} \label{C.2.6}
	\begin{aligned}
		&\mathbb E \Bigl[ \Bigl\lVert\sum_{i=1}^n \sum_{k=0}^{K-1}(\delta_{1,i}^{(t,k)} - \delta_{2,i}^{(t,k)}) \Bigr\rVert^2 \Bigm| \mathcal F_{t,0} \Bigr]\\
		&= \sum_{i=1}^n \sum_{k=0}^{K-1} \mathbb E \bigl[ \lVert\delta_{1,i}^{(t,k)} - \delta_{2,i}^{(t,k)} \rVert^2 \bigm| \mathcal F_{t,0} \bigr]\\
		&= \sum_{i=1}^n \sum_{k=0}^{K-1} \mathbb E \bigl[\bigl\lVert g_i(x_i^{(t)}, \xi_i^{(t,k)}) - g_i(x_i^{(t-1)}, \xi_i^{(t,k)})\bigr\rVert^2 \bigm|\mathcal F_{t,0}\bigr]\\
		&\quad - K\sum_{i=1}^n\bigl\lVert\nabla f_i(x_i^{(t)}) - \nabla f_i(x_i^{(t-1)}) \bigr\rVert^2\\
		&\le K L^2 \lVert {\mathbf x}^{(t)} - {\mathbf x}^{(t-1)} \rVert_F^2.
	\end{aligned}
\end{equation}
The final inequalities in \eqref{C.2.5} and \eqref{C.2.6} follow from the variance bound and mean-squared smoothness, respectively.

For the fourth term in \eqref{C.2.3}, we have
\begin{equation} \label{C.2.7}
	\begin{aligned}
		&\mathbb E \Bigl[ \Bigl\lVert\sum_{i=1}^n \! \sum_{k=0}^{K-1} \Delta_i^{(t,k)} \Bigr\rVert^2 \Bigm| \mathcal F_{t,0} \Bigr]\\
		&\le \: nK \sum_{i=1}^n \! \sum_{k=0}^{K-1} \mathbb E \bigl[ \bigl\lVert g_i(x_i^{(t,k)}\!, \xi_i^{(t,k)}) \! - g_i(x_i^{(t)}\!, \xi_i^{(t,k)}) \bigr\rVert^2 \bigm| \mathcal F_{t,0} \bigr]\\
		&\le \: nK L^2 \sum_{k=0}^{K-1} \mathbb E \bigl[ \lVert {\mathbf x}^{(t,k)} - {\mathbf x}^{(t)} \rVert_F^2 \bigm|\mathcal F_{t,0}\bigr],
	\end{aligned}
\end{equation}
where the last inequality is due to
\begin{equation} \label{C.2.8}
	\begin{aligned}
		&\mathbb E \bigl[ \bigl\lVert g_i(x_i^{(t,k)}\!, \xi_i^{(t,k)}) \! - g_i(x_i^{(t)}\!, \xi_i^{(t,k)}) \bigr\rVert^2 \bigm| \mathcal F_{t,0} \bigr]\\
		&= \: \mathbb E \Bigl[ \mathbb E \bigl[ \bigl\lVert g_i(x_i^{(t,k)}\!, \xi_i^{(t,k)}) \! - g_i(x_i^{(t)}\!, \xi_i^{(t,k)}) \bigr\rVert^2 \bigm| \mathcal F_{t,k} \bigr] \Bigm| \mathcal F_{t,0} \Bigr]\\
		&\le \: L^2 \mathbb E \bigl[ \lVert x_i^{(t,k)} - x_i^{(t)} \rVert^2 \bigm|\mathcal F_{t,0}\bigr],
	\end{aligned}
\end{equation}
where the equality follows from the tower property of conditional expectation and the last inequality follows from mean-squared smoothness.

For the last term in \eqref{C.2.3}, Young's inequality and Jensen's inequality give
\begin{equation} \label{C.2.9}
	\begin{aligned}
		&2\Bigl\langle \beta \bigl(\bar m^{(t-1)} \!\!-\! \overline {\nabla f}(\mathbf x^{(t-1)})\bigr), \frac{1}{nK} \mathbb E \Bigl[\sum_{i=1}^n \! \sum_{k=0}^{K-1}\! \Delta_i^{(t,k)} \Bigm| \mathcal F_{t,0} \Bigr] \Bigr\rangle\\
		&\le \: (1-\beta) \beta^2 \lVert\bar m^{(t-1)} - \overline {\nabla f}(\mathbf x^{(t-1)})\rVert^2 + \frac{1}{n^2 K^2 (1-\beta)} \mathbb E \Bigl[ \Bigl\lVert\sum_{i=1}^n \! \sum_{k=0}^{K-1} \Delta_i^{(t,k)} \Bigr\rVert^2 \Bigm| \mathcal F_{t,0} \Bigr]\\
		&\le \: (1-\beta) \beta^2 \lVert\bar m^{(t-1)} - \overline {\nabla f}(\mathbf x^{(t-1)})\rVert^2 + \frac{L^2}{n K (1-\beta)} \sum_{k=0}^{K-1} \mathbb E \bigl[ \lVert {\mathbf x}^{(t,k)} - {\mathbf x}^{(t)} \rVert_F^2 \bigm|\mathcal F_{t,0}\bigr],
	\end{aligned}
\end{equation}
where the last inequality follows from \eqref{C.2.7}.

Substituting \eqref{C.2.5}, \eqref{C.2.6}, \eqref{C.2.7}, and \eqref{C.2.9} into \eqref{C.2.3}, and then using $\beta\in[0,1)$ and taking the total expectation, we obtain
\begin{equation} \label{C.2.10}
	\begin{aligned}
		&\mathbb E\bigl[\lVert\bar m^{(t)} - \overline {\nabla f}(\mathbf x^{(t)})\rVert^2\bigr]\\
		&\le \: \beta \mathbb E\bigl[\lVert\bar m^{(t-1)} - \overline {\nabla f}(\mathbf x^{(t-1)})\rVert^2\bigr] + \frac{4 L^2}{n K (1-\beta)} \sum_{k=0}^{K-1} \mathbb E \bigl[ \lVert {\mathbf x}^{(t,k)} - {\mathbf x}^{(t)} \rVert_F^2 \bigr]\\
		&\quad + \frac{3 \beta^2 L^2}{n^2 K} \mathbb E \bigl[ \lVert {\mathbf x}^{(t)} - {\mathbf x}^{(t-1)} \rVert_F^2\bigr] + \frac{3(1-\beta)^2 \sigma^2}{n K}\\
		&\le \: \beta^t \mathbb E\bigl[\lVert\bar m^{(0)} - \overline {\nabla f}(\mathbf x^{(0)})\rVert^2\bigr] + \frac{4 L^2}{n K (1-\beta)} \sum_{l=1}^t \sum_{k=0}^{K-1} \beta^{t-l} \mathbb E \bigl[ \lVert {\mathbf x}^{(l,k)} - {\mathbf x}^{(l)} \rVert_F^2 \bigr]\\
		&\quad + \frac{3 \beta^2 L^2}{n^2 K} \sum_{l=0}^{t-1} \beta^{t-1-l} \mathbb E \bigl[ \lVert {\mathbf x}^{(l+1)} - {\mathbf x}^{(l)} \rVert_F^2\bigr] + \frac{3(1-\beta)^2 \sigma^2}{n K} \sum_{l=0}^{t-1} \beta^{t-1-l},
	\end{aligned}
\end{equation}
where the first inequality uses $\beta^2(2-\beta)\le\beta$ and $3+\frac{1}{1-\beta}\le\frac{4}{1-\beta}$, and the second follows by unrolling the recurrence to $t=0$. 

Summing over $t=0,\ldots,T-1$ yields
\begin{equation} \label{C.2.11}
	\begin{aligned}
		&\sum_{t=0}^{T-1}\mathbb E\bigl[\lVert\bar m^{(t)} - \overline {\nabla f}(\mathbf x^{(t)})\rVert^2\bigr]\\
		&\le \: \sum_{t=0}^{T-1}\beta^t \mathbb E\bigl[\lVert\bar m^{(0)} - \overline {\nabla f}(\mathbf x^{(0)})\rVert^2\bigr]+ \frac{4 L^2}{n K (1-\beta)} \sum_{t=1}^{T-1}\sum_{l=1}^t \sum_{k=0}^{K-1} \beta^{t-l} \mathbb E \bigl[ \lVert {\mathbf x}^{(l,k)} - {\mathbf x}^{(l)} \rVert_F^2 \bigr]\\
		&\quad + \frac{3 \beta^2 L^2}{n^2 K} \sum_{t=1}^{T-1}\sum_{l=0}^{t-1} \beta^{t-1-l} \mathbb E \bigl[ \lVert {\mathbf x}^{(l+1)} - {\mathbf x}^{(l)} \rVert_F^2\bigr] + \frac{3(1-\beta)^2 \sigma^2}{n K} \sum_{t=1}^{T-1}\sum_{l=0}^{t-1} \beta^{t-1-l}\\
		&\le \: \frac{1}{1-\beta} \mathbb E\bigl[\lVert\bar m^{(0)} - \overline {\nabla f}(\mathbf x^{(0)})\rVert^2\bigr] + \frac{4 L^2}{n K (1-\beta)^2} \sum_{t=1}^{T-1} \sum_{k=0}^{K-1} \mathbb E \bigl[ \lVert {\mathbf x}^{(t,k)} - {\mathbf x}^{(t)} \rVert_F^2 \bigr]\\
		&\quad + \frac{3 \beta^2 L^2}{n^2 K (1-\beta)} \sum_{t=0}^{T-1} \mathbb E \bigl[ \lVert {\mathbf x}^{(t+1)} - {\mathbf x}^{(t)} \rVert_F^2\bigr] + \frac{3T(1-\beta) \sigma^2}{n K},
	\end{aligned}
\end{equation}
where the last inequality is obtained by interchanging the order of summation
and using the geometric-series bound. Applying Lemma~\ref{lemma12} to \eqref{C.2.11} and using $\beta^2\le1$ completes the proof. \qed
\end{proof}

\subsection{Proof of Lemma~\ref{lemma9}} \label{appendixC.3}

\begin{proof}
Applying the argument leading to \eqref{B.5.8} to Algorithm~\ref{alg2}, we obtain
\begin{equation} \label{C.3.1}
	\begin{aligned}
		&\sum_{t=0}^{T-1}\lVert\mathbf x^{(t+1)}-\mathbf x^{(t)}\rVert_F^2 \le{}3nc^2\alpha^2\sum_{t=0}^{T-1}\lVert\bar m^{(t)}\rVert^2+18\alpha^2h^2\sum_{t=0}^{T-1}\lVert\mathbf m^{(t)}-\mathbf m^{(t-1)}\rVert_F^2.
	\end{aligned}
\end{equation}

Taking expectations and using $\mathbf m^{(-1)}=\mathbb{0}$, we obtain
\begin{equation} \label{C.3.2}
	\begin{aligned}
		&\sum_{t=0}^{T-1}\mathbb E\bigl[\lVert\mathbf x^{(t+1)}-\mathbf x^{(t)}\rVert_F^2\bigr]\\
		&\le {}3nc^2\alpha^2\sum_{t=0}^{T-1}\mathbb E\bigl[\lVert\bar m^{(t)}\rVert^2\bigr] +18\alpha^2h^2\sum_{t=1}^{T-1}\mathbb E\bigl[\lVert\mathbf m^{(t)}-\mathbf m^{(t-1)}\rVert_F^2\bigr]\\
		&\quad +36\alpha^2h^2\mathbb E\bigl[\lVert\mathbf m^{(0)}-\nabla f(\mathbf x^{(0)})\rVert_F^2\bigr] +36\alpha^2h^2\lVert\nabla f(\mathbf x^{(0)})\rVert_F^2\\
		&\le {}3nc^2\alpha^2\sum_{t=0}^{T-1}\mathbb E\bigl[\lVert\bar m^{(t)}\rVert^2\bigr] +270\alpha^2h^2L^2\sum_{t=0}^{T-1}\mathbb E\bigl[\lVert\mathbf x^{(t+1)}-\mathbf x^{(t)}\rVert_F^2\bigr]\\
		&\quad +\frac{324\alpha^2h^2L^2}{K}\sum_{t=1}^{T-1}\sum_{k=0}^{K-1}\mathbb E\bigl[\lVert\mathbf x^{(t,k)}-\mathbf x^{(t)}\rVert_F^2\bigr] +\frac{216nT\alpha^2h^2(1-\beta)^2\sigma^2}{K}\\
		&\quad +\frac{90n\alpha^2h^2\sigma^2}{b_0} +72n\alpha^2h^2\Delta_fL,
	\end{aligned}
\end{equation}
where the last inequality follows from Lemmas~\ref{lemma12} and~\ref{lemma16} together with the bound $\lVert\nabla f(\mathbf x^{(0)})\rVert_F^2\le2n\Delta_fL$. The result follows by rearranging \eqref{C.3.2} and using $\alpha\le\frac{1}{25hL}$, which implies $1-270\alpha^2h^2L^2\ge\frac12$. \qed
\end{proof}

\subsection{Proof of Lemma~\ref{lemma10}} \label{appendixC.4}

\begin{proof}
For any $i\in\mathcal N$, $t\ge1$, and $k=0,\ldots,K-1$, the local updates give
\begin{equation} \label{C.4.1}
	\begin{aligned}
		&x_i^{(t,k)} - x_i^{(t)}\\
		&= -\eta k y_i^{(t)} + \eta k (1-\beta) m_i^{(t-1)} - \eta \sum_{r=0}^{k-1} g_i(x_i^{(t,r)},\xi_i^{(t,r)}) + \eta \beta \sum_{r=0}^{k-1} g_i(x_i^{(t-1)},\xi_i^{(t,r)})\\
		&= -\eta k y_i^{(t)} + \eta k (1-\beta) \bigl(m_i^{(t-1)} - \nabla f_i(x_i^{(t-1)})\bigr) - \eta(1-\beta)\sum_{r=0}^{k-1} \delta_{1,i}^{(t,r)}\\
		&\quad - \eta \beta \sum_{r=0}^{k-1} (\delta_{1,i}^{(t,r)} - \delta_{2,i}^{(t,r)}) - \eta \sum_{r=0}^{k-1} \Delta_i^{(t,r)} -\eta k \bigl(\nabla f_i (x_i^{(t)}) - \nabla f_i (x_i^{(t-1)})\bigr),
	\end{aligned}
\end{equation}
where $\delta_{1,i}^{(t,r)} = g_i(x_i^{(t)}, \xi_i^{(t,r)}) - \nabla f_i(x_i^{(t)})$, $\delta_{2,i}^{(t,r)} = g_i(x_i^{(t-1)}, \xi_i^{(t,r)}) - \nabla f_i(x_i^{(t-1)})$, and $\Delta_i^{(t,r)} = g_i(x_i^{(t,r)}, \xi_i^{(t,r)}) - g_i(x_i^{(t)}, \xi_i^{(t,r)})$. From \eqref{C.4.1}, it follows that
\begin{equation} \label{C.4.2}
	\begin{aligned}
		&\mathbb E \bigl[\lVert x_i^{(t,k)} - x_i^{(t)} \rVert^2 \bigm| \mathcal F_{t,0}\bigr] \le \: 6\eta^2 k^2 \lVert y_i^{(t)} \rVert^2 \!+ 6\eta^2 k^2 (1-\beta)^2 \lVert m_i^{(t-1)} \!- \nabla f_i(x_i^{(t-1)}) \rVert^2\\
		&\quad + 6\eta^2 (1-\beta)^2 \mathbb E \Bigl[\Bigl\lVert\sum_{r=0}^{k-1} \delta_{1,i}^{(t,r)} \Bigr\rVert^2 \Bigm| \mathcal F_{t,0}\Bigr] + 6\eta^2 \beta^2 \mathbb E \Bigl[\Bigl\lVert\sum_{r=0}^{k-1} (\delta_{1,i}^{(t,r)} - \delta_{2,i}^{(t,r)}) \Bigr\rVert^2 \Bigm| \mathcal F_{t,0}\Bigr]\\
		&\quad + 6\eta^2 \mathbb E \Bigl[\Bigl\lVert\sum_{r=0}^{k-1} \Delta_i^{(t,r)} \Bigr\rVert^2 \Bigm| \mathcal F_{t,0}\Bigr] + 6\eta^2 k^2 \lVert\nabla f_i (x_i^{(t)}) - \nabla f_i (x_i^{(t-1)}) \rVert^2.
	\end{aligned}
\end{equation}

Using the unbiasedness of the stochastic gradients and the independence of the samples, together with the variance bound and mean-squared smoothness, we obtain
\begin{equation} \label{C.4.3}
	\mathbb E\Bigl[\Bigl\lVert\sum_{r=0}^{k-1}\delta_{1,i}^{(t,r)}
	\Bigr\rVert^2\Bigm|\mathcal F_{t,0}\Bigr]
	\le k\sigma^2,
\end{equation}
and
\begin{equation} \label{C.4.4}
	\mathbb E\Bigl[\Bigl\lVert\sum_{r=0}^{k-1}
	(\delta_{1,i}^{(t,r)}-\delta_{2,i}^{(t,r)})
	\Bigr\rVert^2\Bigm|\mathcal F_{t,0}\Bigr]
	\le kL^2\lVert x_i^{(t)}-x_i^{(t-1)}\rVert^2.
\end{equation}

For the fifth term in \eqref{C.4.2}, it follows from \eqref{C.2.8} that
\begin{equation} \label{C.4.5}
	\mathbb E\Bigl[\Bigl\lVert\sum_{r=0}^{k-1}\Delta_i^{(t,r)}
	\Bigr\rVert^2\Bigm|\mathcal F_{t,0}\Bigr]
	\le kL^2\sum_{r=0}^{k-1}
	\mathbb E\bigl[\lVert x_i^{(t,r)}-x_i^{(t)}\rVert^2\bigm|\mathcal F_{t,0}\bigr].
\end{equation}
Moreover, from the $L$-smoothness of $f_i$, it follows that
\begin{equation} \label{C.4.6}
	\lVert\nabla f_i (x_i^{(t)}) - \nabla f_i (x_i^{(t-1)}) \rVert^2 \le L^2
	\lVert x_i^{(t)} - x_i^{(t-1)} \rVert^2.
\end{equation}
Substituting \eqref{C.4.3}--\eqref{C.4.6} into \eqref{C.4.2}, taking expectations, and using $\beta^2\le1$, we obtain
\begin{equation} \label{C.4.7}
	\begin{aligned}
		&\mathbb E \bigl[\lVert x_i^{(t,k)} - x_i^{(t)} \rVert^2 \bigr] \le \: 6\eta^2 k^2 (1-\beta)^2 \mathbb E \bigl[\lVert m_i^{(t-1)} \!- \nabla f_i(x_i^{(t-1)}) \rVert^2\bigr]\\
		&\quad + 6k\eta^2 L^2\sum_{r=0}^{k-1} \mathbb E\bigl[\lVert x_i^{(t,r)}-x_i^{(t)}\rVert^2\bigr] + 12\eta^2 k^2 L^2 \mathbb E\bigl[\lVert x_i^{(t)} - x_i^{(t-1)} \rVert^2\bigr]\\
		&\quad +6\eta^2 k^2 \mathbb E \bigl[\lVert y_i^{(t)} \rVert^2 \bigr]+ 6 k \eta^2 (1-\beta)^2 \sigma^2 .
	\end{aligned}
\end{equation}
Averaging \eqref{C.4.7} over $k=0,\ldots,K-1$ and simplifying the resulting finite sums yields
\begin{equation} \label{C.4.8}
	\begin{aligned}
		&\frac{1}{K}\sum_{k=0}^{K-1}\mathbb E \bigl[\lVert x_i^{(t,k)} - x_i^{(t)} \rVert^2 \bigr] \le \: 2\eta^2 (1-\beta)^2 K^2 \mathbb E \bigl[\lVert m_i^{(t-1)} \!- \nabla f_i(x_i^{(t-1)}) \rVert^2\bigr]\\
		&\quad + 3\eta^2 L^2 K\sum_{k=0}^{K-1} \mathbb E\bigl[\lVert x_i^{(t,k)}-x_i^{(t)}\rVert^2\bigr] + 4\eta^2 L^2 K^2 \mathbb E\bigl[\lVert x_i^{(t)} - x_i^{(t-1)} \rVert^2\bigr]\\
		&\quad +2\eta^2 K^2 \mathbb E \bigl[\lVert y_i^{(t)} \rVert^2 \bigr]+ 3 \eta^2 (1-\beta)^2 K \sigma^2.
	\end{aligned}
\end{equation}
Summing over $i=1,\ldots,n$ and $t=1,\ldots,T-1$, and using $\mathbf y^{(t)}=\mathbf y^{(t)}-\bar{\mathbf y}^{(t)}+n\pi_B\bar m^{(t-1)}$, we obtain
\begin{equation} \label{C.4.9}
	\begin{aligned}
		&\frac{1}{K} \sum_{t=1}^{T-1} \sum_{k=0}^{K-1}\mathbb E \bigl[\lVert {\mathbf x}^{(t,k)} - {\mathbf x}^{(t)} \rVert_F^2 \bigr] \le \: 2\eta^2 (1-\beta)^2 K^2 \sum_{t=0}^{T-1}\mathbb E \bigl[\lVert {\mathbf m}^{(t)} \!- \nabla f({\mathbf x}^{(t)}) \rVert_F^2\bigr]\\
		&\quad + 3\eta^2 L^2 K\sum_{t=1}^{T-1}\sum_{k=0}^{K-1} \mathbb E\bigl[\lVert {\mathbf x}^{(t,k)}-{\mathbf x}^{(t)}\rVert_F^2\bigr] + 4\eta^2 L^2 K^2 \sum_{t=0}^{T-1}\mathbb E\bigl[\lVert {\mathbf x}^{(t+1)} - {\mathbf x}^{(t)} \rVert_F^2\bigr]\\
		&\quad +4\eta^2 K^2 \sum_{t=0}^{T-1} \mathbb E \bigl[\lVert {\mathbf y}^{(t+1)} - {\bar {\mathbf y}}^{(t+1)} \rVert_F^2 \bigr] + 4 n^2 \eta^2 K^2 \sum_{t=0}^{T-1} \mathbb E \bigl[\lVert\bar m ^{(t)} \rVert^2 \bigr] + 3 nT \eta^2 (1-\beta)^2 K \sigma^2.
	\end{aligned}
\end{equation}

Applying Lemmas~\ref{lemma7} and~\ref{lemma15} to \eqref{C.4.9} and using $s_B\ge1$, we obtain
\begin{equation} \label{C.4.10}
	\begin{aligned}
		&\frac{1}{K}\sum_{t=1}^{T-1}\sum_{k=0}^{K-1}\mathbb E\bigl[\lVert\mathbf x^{(t,k)}-\mathbf x^{(t)}\rVert_F^2\bigr] \le{}83s_B^2\eta^2KL^2\sum_{t=1}^{T-1}\sum_{k=0}^{K-1}\mathbb E\bigl[\lVert\mathbf x^{(t,k)}-\mathbf x^{(t)}\rVert_F^2\bigr]\\
		&\quad +70s_B^2\eta^2K^2L^2\sum_{t=0}^{T-1}\mathbb E\bigl[\lVert\mathbf x^{(t+1)}-\mathbf x^{(t)}\rVert_F^2\bigr]+4n^2\eta^2K^2\sum_{t=0}^{T-1}\mathbb E\bigl[\lVert\bar m^{(t)}\rVert^2\bigr]\\
		&\quad+57nT\eta^2Ks_B^2(1-\beta)^2\sigma^2+\frac{22n\eta^2K^2s_B^2\sigma^2}{b_0}+16n\eta^2K^2s_B^2\Delta_fL.
	\end{aligned}
\end{equation}
Substituting the bound in Lemma~\ref{lemma9} into \eqref{C.4.10} gives
\begin{equation} \label{C.4.11}
	\begin{aligned}
		&\frac{1}{K}\sum_{t=1}^{T-1}\sum_{k=0}^{K-1}\mathbb E\bigl[\lVert\mathbf x^{(t,k)}-\mathbf x^{(t)}\rVert_F^2\bigr] \le{}160\eta^2KL^2s_B^2\sum_{t=1}^{T-1}\sum_{k=0}^{K-1}\mathbb E\bigl[\lVert\mathbf x^{(t,k)}-\mathbf x^{(t)}\rVert_F^2\bigr]\\
		&\quad +5n^2\eta^2K^2\sum_{t=0}^{T-1}\mathbb E\bigl[\lVert\bar m^{(t)}\rVert^2\bigr]+33n\eta^2K^2s_B^2\Delta_fL\\
		&\quad +108nT\eta^2Ks_B^2(1-\beta)^2\sigma^2+\frac{43n\eta^2K^2s_B^2\sigma^2}{b_0},
	\end{aligned}
\end{equation}
where we used $c\le n$, $h^2\ge ns_B^2$, and $\alpha^2h^2L^2\le\frac{1}{625}$. Finally, the condition on $\eta$ gives $160\eta^2K^2L^2s_B^2\le\frac{1}{2}$. Rearranging \eqref{C.4.11} and using this bound completes the proof. \qed
\end{proof}

\subsection{Proof of Theorem~\ref{theorem2}} \label{appendixC.5}

\begin{proof}
The bound in \eqref{B.6.1} also holds for LASPP. Applying Lemmas~\ref{lemma7} and~\ref{lemma8} to this bound yields
\begin{equation} \label{C.5.2}
	\begin{aligned}
		&\sum_{t=0}^{T-1}\mathbb E\bigl[\lVert\nabla f(\hat x^{(t)})\rVert^2\bigr] \le{}\frac{4\Delta_f}{c\alpha}-\frac{3}{4}\sum_{t=0}^{T-1}\mathbb E\bigl[\lVert\bar m^{(t)}\rVert^2\bigr]\\
		&+\Bigl(\frac{90s_B^2L^2}{c^2}+\frac{12L^2}{n^2K(1-\beta)}\Bigr)\sum_{t=0}^{T-1}\mathbb E\bigl[\lVert\mathbf x^{(t+1)}-\mathbf x^{(t)}\rVert_F^2\bigr]\\
		&+\Bigl(\frac{108s_B^2L^2}{c^2}+\frac{16L^2}{n(1-\beta)^2}\Bigr)\frac{1}{K}\sum_{t=1}^{T-1}\sum_{k=0}^{K-1}\mathbb E\bigl[\lVert\mathbf x^{(t,k)}-\mathbf x^{(t)}\rVert_F^2\bigr]\\
		&+\frac{4\sigma^2}{nb_0(1-\beta)}+\frac{12T(1-\beta)\sigma^2}{nK}+\frac{72nTs_B^2(1-\beta)^2\sigma^2}{c^2K} +\frac{30ns_B^2\sigma^2}{c^2b_0}+\frac{24ns_B^2\Delta_fL}{c^2}.
	\end{aligned}
\end{equation}

Applying Lemma~\ref{lemma9} to \eqref{C.5.2} gives
\begin{equation} \label{C.5.3}
	\begin{aligned}
		&\sum_{t=0}^{T-1}\mathbb E\bigl[\lVert\nabla f(\hat x^{(t)})\rVert^2\bigr] \le{}\frac{4\Delta_f}{c\alpha} -\frac{1}{4}\sum_{t=0}^{T-1}\mathbb E\bigl[\lVert\bar m^{(t)}\rVert^2\bigr]\\
		&\quad +\Bigl(\frac{132s_B^2L^2}{c^2}+\frac{20L^2}{n(1-\beta)^2}\Bigr) \frac{1}{K}\sum_{t=1}^{T-1}\sum_{k=0}^{K-1} \mathbb E\bigl[\lVert\mathbf x^{(t,k)}-\mathbf x^{(t)}\rVert_F^2\bigr]\\
		&\quad +\frac{4\sigma^2}{nb_0(1-\beta)} +\frac{12T(1-\beta)\sigma^2}{nK} +\frac{101Th^2(1-\beta)^2\sigma^2}{c^2K} +\frac{42h^2\sigma^2}{c^2b_0} +\frac{34h^2\Delta_fL}{c^2},
	\end{aligned}
\end{equation}
where we used $ns_B^2\le h^2$, $\frac{1-\beta}{nK}\le1$, and the step-size bounds $\alpha^2h^2L^2\le\frac{1}{2500}$ and $\frac{c^2\alpha^2L^2}{nK(1-\beta)}\le\frac{1}{400}$. In particular, the coefficient of $\sum_{t=0}^{T-1}\mathbb E\bigl[\lVert\bar m^{(t)}\rVert^2\bigr]$ satisfies
$
540ns_B^2\alpha^2L^2
+\frac{72c^2\alpha^2L^2}{nK(1-\beta)}
-\frac{3}{4}
\le-\frac{1}{4}.
$

Applying Lemma~\ref{lemma10} to \eqref{C.5.3} yields
\begin{equation} \label{C.5.4}
	\begin{aligned}
		&\sum_{t=0}^{T-1}\mathbb E\bigl[\lVert\nabla f(\hat x^{(t)})\rVert^2\bigr] \le{}\frac{4\Delta_f}{c\alpha} +\frac{4\sigma^2}{nb_0(1-\beta)} +\frac{12T(1-\beta)\sigma^2}{nK}\\
		&\quad +\frac{105Th^2(1-\beta)^2\sigma^2}{c^2K} +\frac{45h^2\sigma^2}{c^2b_0}+\frac{36h^2\Delta_fL}{c^2},
	\end{aligned}
\end{equation}
where we used the bounds on $\eta$, together with $ns_B^2\le h^2$ and $\frac{s_B^2}{n}\le\frac{h^2}{c^2}$. In particular, the coefficient of $\sum_{t=0}^{T-1}\mathbb E\bigl[\lVert\bar m^{(t)}\rVert^2\bigr]$ satisfies
$
\frac{1320n^2\eta^2K^2s_B^2L^2}{c^2}
+\frac{200n\eta^2K^2L^2}{(1-\beta)^2}
-\frac{1}{4}\le0.
$
Dividing \eqref{C.5.4} by $T$ completes the proof. \qed
\end{proof}

\section{Proof of Supporting Lemmas}

\subsection{Proof of Lemma~\ref{lemma12}} \label{appendixD.1}

\begin{proof}
By the unbiasedness of the stochastic gradients and the independence of the initialization samples, we have
\begin{equation} \label{D.1.1}
	\begin{aligned}
		&\mathbb E\bigl[\lVert\bar m^{(0)} - \overline {\nabla f}(\mathbf x^{(0)})\rVert^2\bigr]\\
		&= \mathbb E \Bigl[ \Bigl\lVert\frac{1}{n} \sum_{i=1}^n \frac{1}{b_0} \sum_{r=1}^{b_0}\bigl( g_i(x_{i}^{(0)}, \xi_{i,r}^{(0)}) - \nabla f_i (x_i^{(0)}) \bigr) \Bigr\rVert^2 \Bigr]\\
		&= \frac{1}{n^2 b_0^2} \sum_{i=1}^n \sum_{r=1}^{b_0} \mathbb E\bigl[\bigl\lVert\bigl( g_i(x_{i}^{(0)}, \xi_{i,r}^{(0)}) - \nabla f_i (x_i^{(0)}) \bigr) \bigr\rVert^2 \bigr]\le \frac{\sigma^2}{n b_0},
	\end{aligned}
\end{equation}
and
\begin{equation} \label{D.1.2}
	\begin{aligned}
		&\mathbb E \bigl[\lVert\mathbf{m}^{(0)} - {\nabla f}(\mathbf x^{(0)}) \rVert_F^2 \bigr]\\
		&= \: \sum_{i=1}^n\mathbb E \Bigl[ \Bigl\lVert\frac{1}{b_0} \sum_{r=1}^{b_0} \bigl( g_i(x_i^{(0)}, \xi_{i,r}^{(0)}) -\nabla f_i (x_i^{(0)}) \bigr) \Bigr\rVert^2 \Bigr]\\
		&= \: \frac{1}{b_0^2} \sum_{i=1}^n\sum_{r=1}^{b_0} \mathbb E\bigl[\bigl\lVert\bigl( g_i(x_{i}^{(0)}, \xi_{i,r}^{(0)}) - \nabla f_i (x_i^{(0)}) \bigr) \bigr\rVert^2 \bigr]\le \frac{n\sigma^2}{b_0},
	\end{aligned}
\end{equation}
where the inequalities follow from the variance bound. This completes the proof. \qed
\end{proof}

\subsection{Proof of Lemma~\ref{lemma13}} \label{appendixD.2}

\begin{proof}
Recalling \eqref{B.4.1}, we have, for any $i\in\mathcal N$ and $t\ge1$,
\begin{equation} \label{D.2.1}
	\begin{aligned}
		&m_i^{(t)} - \nabla f_i(x_i^{(t)}) = \beta (m_i^{(t-1)} - \nabla f_i(x_i^{(t-1)})) + (1-\beta) \delta_{1,i}^{(t)} + \beta (\delta_{1,i}^{(t)} - \delta_{2,i}^{(t)}),
	\end{aligned}
\end{equation}
where $\delta_{1,i}^{(t)} = g_i(x_i^{(t)}, \xi_i^{(t)}) - \nabla f_i(x_i^{(t)})$ and $\delta_{2,i}^{(t)} = g_i(x_i^{(t-1)}, \xi_i^{(t)}) - \nabla f_i(x_i^{(t-1)})$. From \eqref{D.2.1}, it follows that
\begin{equation} \label{D.2.2}
	\begin{aligned}
		&\mathbb E\bigl[\lVert m_i^{(t)} - \nabla f_i(x_i^{(t)}) \rVert^2 \bigm| \mathcal F_t\bigr]\\
		&= \:\beta^2 \lVert m_i^{(t-1)} - \nabla f_i(x_i^{(t-1)})\rVert^2 + \mathbb E \bigl[ \lVert (1-\beta) \delta_{1,i}^{(t)} + \beta (\delta_{1,i}^{(t)} - \delta_{2,i}^{(t)}) \rVert^2 \bigm| \mathcal F_t \bigr]\\
		&\le \:\beta^2 \lVert m_i^{(t-1)} - \nabla f_i(x_i^{(t-1)})\rVert^2 + {2(1-\beta)^2}\mathbb E \bigl[ \lVert\delta_{1,i}^{(t)} \rVert^2 \bigm| \mathcal F_t \bigr] + {2 \beta^2}\mathbb E \bigl[ \lVert (\delta_{1,i}^{(t)} - \delta_{2,i}^{(t)}) \rVert^2 \bigm| \mathcal F_t \bigr],
	\end{aligned}
\end{equation}
where the equality follows from the unbiasedness of the stochastic gradients and the $\mathcal F_t$-measurability of $m_i^{(t-1)}-\nabla f_i(x_i^{(t-1)})$.

For the second term in \eqref{D.2.2}, the variance bound gives
\begin{equation} \label{D.2.3}
	\mathbb E \bigl[ \lVert\delta_{1,i}^{(t)} \rVert^2  \bigm| \mathcal F_t \bigr] = \mathbb E \bigl[ \lVert g_i(x_i^{(t)}, \xi_i^{(t)}) - \nabla f_i(x_i^{(t)}) \rVert^2  \bigm| \mathcal F_t \bigr] \le \sigma^2.
\end{equation}
For the third term, the unbiasedness and mean-squared smoothness of $g_i$ give
\begin{equation} \label{D.2.4}
	\begin{aligned}
		&\mathbb E \bigl[ \lVert (\delta_{1,i}^{(t)} - \delta_{2,i}^{(t)}) \rVert^2 \bigm| \mathcal F_t \bigr]\\
		&= \: \mathbb E \bigl[ \lVert g_i(x_i^{(t)}, \xi_i^{(t)}) - g_i(x_i^{(t-1)}, \xi_i^{(t)}) \rVert^2 \bigm|\mathcal F_t\bigr] - \lVert\nabla f_i(x_i^{(t)}) - \nabla f_i(x_i^{(t-1)}) \rVert^2\\
		&\le \: L^2 \lVert x_i^{(t)} - x_i^{(t-1)} \rVert^2.
	\end{aligned}
\end{equation}

Substituting \eqref{D.2.3} and \eqref{D.2.4} into \eqref{D.2.2} and taking the total expectation yields
\begin{equation} \label{D.2.5}
	\begin{aligned}
		&\mathbb E\bigl[\lVert m_i^{(t)} - \nabla f_i(x_i^{(t)}) \rVert^2 \bigr] \le \:\beta^2 \mathbb E\bigl[\lVert m_i^{(t-1)} - \nabla f_i(x_i^{(t-1)})\rVert^2\bigr] + {2(1-\beta)^2} \sigma^2\\
		&\quad + {2 \beta^2}L^2 \mathbb E\bigl[\lVert x_i^{(t)} - x_i^{(t-1)} \rVert^2\bigr].
	\end{aligned}
\end{equation}
Unrolling \eqref{D.2.5} and summing over $i=1,\ldots,n$ and $t=0,\ldots,T-1$, we obtain
\begin{equation} \label{D.2.6}
	\begin{aligned}
		&\sum_{t=0}^{T-1} \mathbb E \bigl[\lVert {\mathbf m}^{(t)} - {\nabla f} (\mathbf x^{(t)}) \rVert_F^2\bigr] \le \: \frac{1}{1 - \beta^2}\mathbb E\bigl[\lVert {\mathbf m}^{(0)} - {\nabla f} (\mathbf x^{(0)}) \rVert^2_F\bigr] + \frac{2nT(1-\beta)^2 \sigma^2}{1-\beta^2}\\
		&\quad + \frac{2 \beta^2 L^2}{1 - \beta^2} \sum_{t=0}^{T-1} \mathbb E \bigl[\lVert {\mathbf x}^{(t+1)} - {\mathbf x}^{(t)} \rVert_F^2\bigr],
	\end{aligned}	
\end{equation}
where we interchanged the order of summation and bounded the resulting geometric sums. Applying Lemma~\ref{lemma12} to \eqref{D.2.6} and using $\beta\in[0,1)$ completes the proof. \qed
\end{proof}

\subsection{Proof of Lemma~\ref{lemma14}} \label{appendixD.3}

\begin{proof}
From the update of $m_i^{(t)}$ in \eqref{eq3}, we have, for $t\ge1$ and $i\in\mathcal N$,
\begin{equation} \label{D.3.1}
	\begin{aligned}
		&m_i^{(t)} - m_i^{(t-1)} = \: g_i(x_i^{(t)}, \xi_i^{(t)}) - g_i(x_i^{(t-1)}, \xi_i^{(t)}) - (1-\beta) \bigl(m_i^{(t-1)} - \nabla f_i(x_i^{(t-1)})\bigr)\\
		&\quad + (1-\beta) \bigl(g_i(x_i^{(t-1)}, \xi_i^{(t)}) - \nabla f_i(x_i^{(t-1)})\bigr).
	\end{aligned}
\end{equation}
From \eqref{D.3.1}, it follows that
\begin{equation} \label{D.3.2}
	\begin{aligned}
		&\mathbb E\bigl[\lVert m_i^{(t)} - m_i^{(t-1)} \rVert^2 \bigm| \mathcal F_t\bigr]\\
		&\le \: 3\mathbb E\bigl[\lVert g_i(x_i^{(t)}, \xi_i^{(t)}) - g_i(x_i^{(t-1)}, \xi_i^{(t)}) \rVert^2 \bigm| \mathcal F_t\bigr] + 3(1-\beta)^2 \lVert m_i^{(t-1)} - \nabla f_i(x_i^{(t-1)}) \rVert^2\\
		&\quad + 3(1-\beta)^2 \mathbb E\bigl[\lVert g_i(x_i^{(t-1)}, \xi_i^{(t)}) - \nabla f_i(x_i^{(t-1)}) \rVert^2 \bigm| \mathcal F_t\bigr]\\
		&\le \: 3L^2 \lVert x_i^{(t)} - x_i^{(t-1)} \rVert^2 + 3(1-\beta)^2 \sigma^2 + 3(1-\beta)^2 \lVert m_i^{(t-1)} - \nabla f_i(x_i^{(t-1)}) \rVert^2,
	\end{aligned}
\end{equation}
where the last inequality follows from the mean-squared smoothness and bounded-variance assumptions on the stochastic gradients.

Taking total expectations and summing over $i=1,\ldots,n$ and $t=1,\ldots,T-1$ yields
\begin{equation} \label{D.3.3}
	\begin{aligned}
		&\sum_{t=1}^{T-1}\mathbb E\bigl[\lVert\mathbf m^{(t)} - \mathbf m^{(t-1)} \rVert_F^2\bigr] \le \: 3L^2 \sum_{t=0}^{T-1}\mathbb E\bigl[\lVert\mathbf x^{(t+1)} - \mathbf x^{(t)} \rVert_F^2\bigr] + 3nT(1-\beta)^2 \sigma^2\\
		&\quad + 3(1-\beta)^2 \sum_{t=0}^{T-1}\mathbb E\bigl[\lVert\mathbf m^{(t)} - \nabla f(\mathbf x^{(t)}) \rVert_F^2\bigr].
	\end{aligned}
\end{equation}
Substituting the bound from Lemma~\ref{lemma13} into \eqref{D.3.3} and using $\beta\in[0,1)$ completes the proof. \qed
\end{proof}

\subsection{Proof of Lemma~\ref{lemma15}} \label{appendixD.4}

\begin{proof}
For $t\ge1$ and $i\in\mathcal N$, it follows from \eqref{C.2.1} that
\begin{equation} \label{D.4.1}
	\begin{aligned}
		&m_i^{(t)} - \nabla f_i(x_i^{(t)}) = \: \beta \bigl(m_i^{(t-1)} - \nabla f_i(x_i^{(t-1)})\bigr) + \frac{1-\beta}{K} \sum_{k=0}^{K-1} \delta_{1,i}^{(t,k)}\\
		&\quad + \frac{\beta}{K} \sum_{k=0}^{K-1}(\delta_{1,i}^{(t,k)} - \delta_{2,i}^{(t,k)}) + \frac{1}{K}\sum_{k=0}^{K-1}\Delta_i^{(t,k)},
	\end{aligned}
\end{equation}
where $\delta_{1,i}^{(t,k)} = g_i(x_i^{(t)}, \xi_i^{(t,k)}) - \nabla f_i(x_i^{(t)})$, $\delta_{2,i}^{(t,k)} = g_i(x_i^{(t-1)}, \xi_i^{(t,k)}) - \nabla f_i(x_i^{(t-1)})$, and $\Delta_i^{(t,k)} = g_i(x_i^{(t,k)}, \xi_i^{(t,k)}) - g_i(x_i^{(t)}, \xi_i^{(t,k)})$. 

Using the unbiasedness of the stochastic gradients and the $\mathcal F_{t,0}$-measurability of $m_i^{(t-1)}-\nabla f_i(x_i^{(t-1)})$, we obtain from \eqref{D.4.1}
\begin{equation} \label{D.4.2}
	\begin{aligned}
		&\mathbb E\bigl[\lVert m_i^{(t)} - {\nabla f}_i(x_i^{(t)})\rVert^2 \bigm| \mathcal F_{t,0}\bigr] \le \: \beta^2 \lVert m_i^{(t-1)} - \nabla f_i(x_i^{(t-1)}) \rVert^2\\
		&\quad + \frac{3(1-\beta)^2}{K^2} \mathbb E \Bigl[ \Bigl\lVert\sum_{k=0}^{K-1} \delta_{1,i}^{(t,k)} \Bigr\rVert^2 \Bigm| \mathcal F_{t,0} \Bigr] + \frac{3 \beta^2}{K^2} \mathbb E \Bigl[ \Bigl\lVert\sum_{k=0}^{K-1}(\delta_{1,i}^{(t,k)} - \delta_{2,i}^{(t,k)}) \Bigr\rVert^2 \Bigm| \mathcal F_{t,0} \Bigr]\\
		&\quad + \frac{3}{K^2}\mathbb E \Bigl[ \Bigl\lVert\sum_{k=0}^{K-1} \Delta_i^{(t,k)} \Bigr\rVert^2 \Bigm| \mathcal F_{t,0} \Bigr] + 2\Bigl\langle \beta \bigl(m_i^{(t-1)} - \nabla f_i(x_i^{(t-1)})\bigr), \frac{1}{K} \mathbb E \Bigl[\sum_{k=0}^{K-1}\! \Delta_i^{(t,k)} \Bigm| \mathcal F_{t,0} \Bigr] \Bigr\rangle.
	\end{aligned}
\end{equation}

For the second and third terms on the right-hand side of \eqref{D.4.2}, the unbiasedness of the stochastic gradients and the independence of the samples give
\begin{equation} \label{D.4.3}
	\begin{aligned}
		&\mathbb E \Bigl[ \Bigl\lVert\sum_{k=0}^{K-1} \delta_{1,i}^{(t,k)} \Bigr\rVert^2 \Bigm| \mathcal F_{t,0} \Bigr] = \sum_{k=0}^{K-1} \mathbb E \bigl[\lVert\delta_{1,i}^{(t,k)} \rVert^2 \bigm| \mathcal F_{t,0}\bigr] \le K \sigma^2,
	\end{aligned}
\end{equation}
and
\begin{equation} \label{D.4.5}
	\begin{aligned}
		&\mathbb E\Bigl[ \Bigl\lVert \sum_{k=0}^{K-1} \bigl(\delta_{1,i}^{(t,k)}-\delta_{2,i}^{(t,k)}\bigr) \Bigr\rVert^2 \Bigm|\mathcal F_{t,0} \Bigr]\\
		&= {}\sum_{k=0}^{K-1} \mathbb E\bigl[ \lVert\delta_{1,i}^{(t,k)}-\delta_{2,i}^{(t,k)}\rVert^2 \bigm|\mathcal F_{t,0} \bigr]\\
		&= {}\sum_{k=0}^{K-1} \mathbb E\bigl[ \lVert g_i(x_i^{(t)},\xi_i^{(t,k)}) -g_i(x_i^{(t-1)},\xi_i^{(t,k)}) \rVert^2 \bigm|\mathcal F_{t,0} \bigr]-K\lVert \nabla f_i(x_i^{(t)})-\nabla f_i(x_i^{(t-1)}) \rVert^2\\
		&\le {}KL^2\lVert x_i^{(t)}-x_i^{(t-1)}\rVert^2.
	\end{aligned}
\end{equation}
The final inequalities in \eqref{D.4.3} and \eqref{D.4.5} follow from the bounded variance and mean-squared smoothness of the stochastic gradients, respectively.

For the fourth term in \eqref{D.4.2}, we have
\begin{equation} \label{D.4.6}
	\begin{aligned}
		&\mathbb E \Bigl[ \Bigl\lVert\sum_{k=0}^{K-1} \Delta_i^{(t,k)} \Bigr\rVert^2 \Bigm| \mathcal F_{t,0} \Bigr]\\
		&\le \: K \sum_{k=0}^{K-1} \mathbb E \bigl[ \bigl\lVert g_i(x_i^{(t,k)}\!, \xi_i^{(t,k)}) \! - g_i(x_i^{(t)}\!, \xi_i^{(t,k)}) \bigr\rVert^2 \bigm| \mathcal F_{t,0} \bigr]\\
		&\le \: K L^2 \sum_{k=0}^{K-1} \mathbb E \bigl[ \lVert x_i^{(t,k)} - x_i^{(t)} \rVert^2 \bigm|\mathcal F_{t,0}\bigr],
	\end{aligned}
\end{equation}
where the last inequality follows from \eqref{C.2.8}. 

For the last term in \eqref{D.4.2}, Young's inequality and Jensen's inequality give
\begin{equation} \label{D.4.7}
	\begin{aligned}
		&2\Bigl\langle \beta \bigl(m_i^{(t-1)} - \nabla f_i(x_i^{(t-1)})\bigr), \frac{1}{K} \mathbb E \Bigl[\sum_{k=0}^{K-1}\! \Delta_i^{(t,k)} \Bigm| \mathcal F_{t,0} \Bigr] \Bigr\rangle\\
		&\le \: (1-\beta) \beta^2 \lVert m_i^{(t-1)} - \nabla f_i(x_i^{(t-1)}) \rVert^2 + \frac{1}{K^2 (1-\beta)} \mathbb E \Bigl[ \Bigl\lVert\sum_{k=0}^{K-1} \Delta_i^{(t,k)} \Bigr\rVert^2 \Bigm| \mathcal F_{t,0} \Bigr]\\
		&\le \: (1-\beta) \beta^2 \lVert m_i^{(t-1)} - \nabla f_i(x_i^{(t-1)}) \rVert^2 + \frac{L^2}{K (1-\beta)} \sum_{k=0}^{K-1} \mathbb E \bigl[ \lVert x_i^{(t,k)} - x_i^{(t)} \rVert^2 \bigm|\mathcal F_{t,0}\bigr],
	\end{aligned}
\end{equation}
where the last inequality is due to \eqref{D.4.6}.

Substituting \eqref{D.4.3}, \eqref{D.4.5}, \eqref{D.4.6}, and \eqref{D.4.7} into \eqref{D.4.2}, taking the total expectation, and using $\beta\in[0,1)$ yields
\begin{equation} \label{D.4.8}
	\begin{aligned}
		&\mathbb E\bigl[\lVert m_i^{(t)} - {\nabla f}_i(x_i^{(t)})\rVert^2\bigr] \le \: \beta \mathbb E\bigl[\lVert m_i^{(t-1)} - \nabla f_i(x_i^{(t-1)}) \rVert^2\bigr]\\
		&+ \frac{4 L^2}{K (1-\beta)} \sum_{k=0}^{K-1} \mathbb E \bigl[ \lVert x_i^{(t,k)} - x_i^{(t)} \rVert^2 \bigr] + \frac{3 \beta^2 L^2}{K} \mathbb E\bigl[\lVert x_i^{(t)} - x_i^{(t-1)} \rVert^2\bigr] + \frac{3(1-\beta)^2 \sigma^2}{K}.
	\end{aligned}
\end{equation}
Unrolling \eqref{D.4.8} and summing over $i=1,\ldots,n$ and $t=0,\ldots,T-1$, we obtain
\begin{equation} \label{D.4.9}
	\begin{aligned}
		&\sum_{t=0}^{T-1} \mathbb E \bigl[\lVert {\mathbf m}^{(t)} - {\nabla f} (\mathbf x^{(t)}) \rVert_F^2\bigr] \le \: \frac{1}{1 - \beta}\mathbb E\bigl[\lVert {\mathbf m}^{(0)} - {\nabla f} (\mathbf x^{(0)}) \rVert^2_F\bigr] + \frac{3nT(1-\beta) \sigma^2}{K}\\
		&+ \frac{4 L^2}{K (1-\beta)^2} \sum_{t=1}^{T-1} \sum_{k=0}^{K-1} \mathbb E \bigl[ \lVert {\mathbf x}^{(t,k)} - {\mathbf x}^{(t)} \rVert_F^2 \bigr] + \frac{3 \beta^2 L^2}{K(1-\beta)} \sum_{t=0}^{T-1} \mathbb E \bigl[\lVert {\mathbf x}^{(t+1)} - {\mathbf x}^{(t)} \rVert_F^2\bigr],
	\end{aligned}	
\end{equation}
where we interchanged the order of summation and bounded the resulting geometric sums. Applying Lemma~\ref{lemma12} to \eqref{D.4.9} and using $\beta^2\le1$ completes the proof. \qed
\end{proof}

\subsection{Proof of Lemma~\ref{lemma16}} \label{appendixD.5}

\begin{proof}
From \eqref{eq5}, we have, for $t\ge1$ and $i\in\mathcal N$,
\begin{equation} \label{D.5.1}
	\begin{aligned}
		&m_i^{(t)}-m_i^{(t-1)} ={}\frac{1}{K}\sum_{k=0}^{K-1}\bigl(g_i(x_i^{(t,k)},\xi_i^{(t,k)})-g_i(x_i^{(t-1)},\xi_i^{(t,k)})\bigr)\\
		&-(1-\beta)\bigl(m_i^{(t-1)}-\nabla f_i(x_i^{(t-1)})\bigr) +\frac{1-\beta}{K}\sum_{k=0}^{K-1}\bigl(g_i(x_i^{(t-1)},\xi_i^{(t,k)})-\nabla f_i(x_i^{(t-1)})\bigr).
	\end{aligned}
\end{equation}
For $k=0,\ldots,K-1$, let $u_i^{(t,k)}:=g_i(x_i^{(t,k)},\xi_i^{(t,k)})-g_i(x_i^{(t-1)},\xi_i^{(t,k)})$ and $v_i^{(t,k)}:=g_i(x_i^{(t-1)},\xi_i^{(t,k)})-\nabla f_i(x_i^{(t-1)})$. From \eqref{D.5.1}, it follows that
\begin{equation} \label{D.5.2}
	\begin{aligned}
		&\mathbb E\bigl[\lVert m_i^{(t)}-m_i^{(t-1)}\rVert^2\bigm|\mathcal F_{t,0}\bigr] \le{}3(1-\beta)^2\lVert m_i^{(t-1)}-\nabla f_i(x_i^{(t-1)})\rVert^2\\
		&\quad +\frac{3}{K^2}\mathbb E\Bigl[\Bigl\lVert\sum_{k=0}^{K-1}u_i^{(t,k)}\Bigr\rVert^2\Bigm|\mathcal F_{t,0}\Bigr] +\frac{3(1-\beta)^2}{K^2}\mathbb E\Bigl[\Bigl\lVert\sum_{k=0}^{K-1}v_i^{(t,k)}\Bigr\rVert^2\Bigm|\mathcal F_{t,0}\Bigr].
	\end{aligned}
\end{equation}

For the second term in \eqref{D.5.2}, we have
\begin{equation} \label{D.5.3}
	\begin{aligned}
		\mathbb E\Bigl[\Bigl\lVert\sum_{k=0}^{K-1}u_i^{(t,k)}\Bigr\rVert^2\Bigm|\mathcal F_{t,0}\Bigr]
		&\le {}K\sum_{k=0}^{K-1}\mathbb E\bigl[\lVert u_i^{(t,k)}\rVert^2\bigm|\mathcal F_{t,0}\bigr]\\
		&= {}K\sum_{k=0}^{K-1}\mathbb E\Bigl[\mathbb E\bigl[\lVert u_i^{(t,k)}\rVert^2\bigm|\mathcal F_{t,k}\bigr]\Bigm|\mathcal F_{t,0}\Bigr]\\
		&\le {}KL^2\sum_{k=0}^{K-1}\mathbb E\bigl[\lVert x_i^{(t,k)}-x_i^{(t-1)}\rVert^2\bigm|\mathcal F_{t,0}\bigr],
	\end{aligned}
\end{equation}
where the equality follows from the tower property of conditional expectation and the last inequality follows from mean-squared smoothness.

For the third term in \eqref{D.5.2}, we have
\begin{equation} \label{D.5.5}
	\begin{aligned}
		&\mathbb E\Bigl[\Bigl\lVert\sum_{k=0}^{K-1}v_i^{(t,k)}\Bigr\rVert^2\Bigm|\mathcal F_{t,0}\Bigr] ={}\sum_{k=0}^{K-1}\mathbb E\bigl[\lVert v_i^{(t,k)}\rVert^2\bigm|\mathcal F_{t,0}\bigr]\le K\sigma^2,
	\end{aligned}
\end{equation}
where the equality follows from the unbiasedness of the stochastic gradients and the independence of the samples, and the inequality follows from the variance bound.

Substituting \eqref{D.5.3} and \eqref{D.5.5} into \eqref{D.5.2} and taking the total expectation yields
\begin{equation} \label{D.5.7}
	\begin{aligned}
		&\mathbb E\bigl[\lVert m_i^{(t)}-m_i^{(t-1)}\rVert^2\bigr]\\
		&\le {}3(1-\beta)^2\mathbb E\bigl[\lVert m_i^{(t-1)}-\nabla f_i(x_i^{(t-1)})\rVert^2\bigr] +\frac{3L^2}{K}\sum_{k=0}^{K-1}\mathbb E\bigl[\lVert x_i^{(t,k)}-x_i^{(t-1)}\rVert^2\bigr] +\frac{3(1-\beta)^2\sigma^2}{K}\\
		&\le {}3(1-\beta)^2\mathbb E\bigl[\lVert m_i^{(t-1)}-\nabla f_i(x_i^{(t-1)})\rVert^2\bigr] +\frac{6L^2}{K}\sum_{k=0}^{K-1}\mathbb E\bigl[\lVert x_i^{(t,k)}-x_i^{(t)}\rVert^2\bigr]\\
		&\quad +6L^2\mathbb E\bigl[\lVert x_i^{(t)}-x_i^{(t-1)}\rVert^2\bigr] +\frac{3(1-\beta)^2\sigma^2}{K}.
	\end{aligned}
\end{equation}
Summing \eqref{D.5.7} over $i=1,\ldots,n$ and $t=1,\ldots,T-1$, applying Lemma~\ref{lemma15}, and using $K\ge1$ and $\beta\in[0,1)$ completes the proof. \qed
\end{proof}

\bibliographystyle{spmpsci}
\bibliography{ASPP}

@article{arjevani2023lower,
  title={Lower bounds for non-convex stochastic optimization},
  author={Arjevani, Yossi and Carmon, Yair and Duchi, John C and Foster, Dylan J and Srebro, Nathan and Woodworth, Blake},
  journal={Mathematical Programming},
  volume={199},
  number={1--2},
  pages={165--214},
  year={2023},
  publisher={Springer},
}

@book{nesterov2018lectures,
  title={Lectures on {C}onvex {O}ptimization},
  author={Nesterov, Yurii},
  volume={137},
  year={2018},
  publisher={Springer},
  series={Springer Optimization and Its Applications},
  edition={2},
}

@article{liang2025linear,
  title={On the Linear Speedup of the {Push-Pull} Method for Decentralized Optimization over Digraphs},
  author={Liang, Liyuan and Luo, Gan and Yuan, Kun},
  journal={arXiv preprint arXiv:2506.18075},
  year={2025},
  url={https://arxiv.org/abs/2506.18075}
}

@article{xin2018linear,
  title={A linear algorithm for optimization over directed graphs with geometric convergence},
  author={Xin, Ran and Khan, Usman A},
  journal={IEEE Control Systems Letters},
  volume={2},
  number={3},
  pages={315--320},
  year={2018},
  publisher={IEEE},
}

@article{pu2020push,
  title={{Push--Pull} gradient methods for distributed optimization in networks},
  author={Pu, Shi and Shi, Wei and Xu, Jinming and Nedi{\'c}, Angelia},
  journal={IEEE Transactions on Automatic Control},
  volume={66},
  number={1},
  pages={1--16},
  year={2021},
  publisher={IEEE},
}

@article{mai2019distributed,
  title={Distributed optimization over directed graphs with row stochasticity and constraint regularity},
  author={Mai, Van Sy and Abed, Eyad H},
  journal={Automatica},
  volume={102},
  pages={94--104},
  year={2019},
  publisher={Elsevier},
}

@article{nedic2014distributed,
  title={Distributed optimization over time-varying directed graphs},
  author={Nedi{\'c}, Angelia and Olshevsky, Alex},
  journal={IEEE Transactions on Automatic Control},
  volume={60},
  number={3},
  pages={601--615},
  year={2015},
  publisher={IEEE},
}

@article{nedic2017achieving,
  title={Achieving geometric convergence for distributed optimization over time-varying graphs},
  author={Nedi{\'c}, Angelia and Olshevsky, Alex and Shi, Wei},
  journal={SIAM Journal on Optimization},
  volume={27},
  number={4},
  pages={2597--2633},
  year={2017},
  publisher={SIAM},
}

@article{xi2018linear,
  title={Linear convergence in optimization over directed graphs with row-stochastic matrices},
  author={Xi, Chenguang and Mai, Van Sy and Xin, Ran and Abed, Eyad H and Khan, Usman A},
  journal={IEEE Transactions on Automatic Control},
  volume={63},
  number={10},
  pages={3558--3565},
  year={2018},
  publisher={IEEE},
}

@article{you2025stochastic,
  author={You, Runze and Pu, Shi},
  journal={IEEE Transactions on Signal Processing}, 
  title={Stochastic {Push--Pull} for Decentralized Nonconvex Optimization},
  year={2026},
  volume={74},
  pages={1383--1398},
 }

@article{huang2025accelerated,
  title={An accelerated distributed stochastic gradient method with momentum},
  author={Huang, Kun and Pu, Shi and Nedi{\'c}, Angelia},
  journal={Mathematical Programming},
  volume={215},
  number={1--2},
  pages={193--236},
  year={2026},
  publisher={Springer},
}

@inproceedings{cutkosky2019momentum,
  title={Momentum-based variance reduction in non-convex {SGD}},
  author={Cutkosky, Ashok and Orabona, Francesco},
  booktitle={Advances in Neural Information Processing Systems ({NeurIPS})},
  volume={32},
  year={2019},
}

@article{tran2022hybrid,
  title={A hybrid stochastic optimization framework for composite nonconvex optimization},
  author={Tran-Dinh, Quoc and Pham, Nhan H and Phan, Dzung T and Nguyen, Lam M},
  journal={Mathematical Programming},
  volume={191},
  number={2},
  pages={1005--1071},
  year={2022},
  publisher={Springer},
}

@inproceedings{xin2021hybrid,
  title={A hybrid variance-reduced method for decentralized stochastic non-convex optimization},
  author={Xin, Ran and Khan, Usman and Kar, Soummya},
  booktitle={International Conference on Machine Learning ({ICML})},
  pages={11459--11469},
  year={2021},
  volume={139},
}

@InProceedings{liang25c,
  title = 	 {Achieving Linear Speedup and Near-Optimal Complexity for Decentralized Optimization over Row-stochastic Networks},
  author =       {Liang, Liyuan and Chen, Xinyi and Luo, Gan and Yuan, Kun},
  booktitle={International Conference on Machine Learning ({ICML})},
  pages = 	 {37100--37130},
  year = 	 {2025},
  volume={267},
}

@InProceedings{SCAFFOLD,
  title = 	 {{SCAFFOLD}: Stochastic Controlled Averaging for Federated Learning},
  author =       {Karimireddy, Sai Praneeth and Kale, Satyen and Mohri, Mehryar and Reddi, Sashank and Stich, Sebastian and Suresh, Ananda Theertha},
  booktitle={International Conference on Machine Learning ({ICML})},
  pages = 	 {5132--5143},
  year = 	 {2020},
  volume={119},
}

@InProceedings{fedavg,
  title = 	 {{Communication-Efficient Learning of Deep Networks from Decentralized Data}},
  author = 	 {McMahan, Brendan and Moore, Eider and Ramage, Daniel and Hampson, Seth and Arcas, Blaise Aguera y},
  booktitle={International Conference on Artificial Intelligence and Statistics ({AISTATS})},
  pages = 	 {1273--1282},
  year = 	 {2017},
  volume={54},
}

@inproceedings{you2024b,
  title={{B}-ary {Tree Push-Pull} method is provably efficient for distributed learning on heterogeneous data},
  author={You, Runze and Pu, Shi},
  booktitle={Advances in Neural Information Processing Systems ({NeurIPS})},
  volume={37},
  pages={97523--97561},
  year={2024},
}

@inproceedings{song2022communication,
  title={Communication-efficient topologies for decentralized learning with {$O(1)$} consensus rate},
  author={Song, Zhuoqing and Li, Weijian and Jin, Kexin and Shi, Lei and Yan, Ming and Yin, Wotao and Yuan, Kun},
  booktitle={Advances in Neural Information Processing Systems ({NeurIPS})},
  volume={35},
  pages={1073--1085},
  year={2022},
}

@article{Nedich03092025,
author={Nedi{\'c}, Angelia and Nguyen, Duong Thuy Anh and Nguyen, Duong Tung},
title={{AB}/{Push-Pull} method for distributed optimization in time-varying directed networks},
journal = {Optimization Methods and Software},
volume = {40},
number = {5},
pages = {1044--1071},
year = {2025},
}

@inproceedings{lian2017can,
  title     = {Can Decentralized Algorithms Outperform Centralized Algorithms?
               A Case Study for Decentralized Parallel Stochastic Gradient Descent},
  author    = {Lian, Xiangru and Zhang, Ce and Zhang, Huan and
               Hsieh, Cho-Jui and Zhang, Wei and Liu, Ji},
  booktitle={Advances in Neural Information Processing Systems ({NeurIPS})},
  volume    = {30},
  pages     = {5330--5340},
  year      = {2017}
}

@inproceedings{assran2019stochastic,
  title     = {Stochastic Gradient Push for Distributed Deep Learning},
  author    = {Assran, Mahmoud and Loizou, Nicolas and
               Ballas, Nicolas and Rabbat, Mike},
  booktitle={International Conference on Machine Learning ({ICML})},
  volume={97},
  pages     = {344--353},
  year      = {2019},
}

@article{liu2025decentralized,
  title={Decentralized gradient tracking with local steps},
  author={Liu, Yue and Lin, Tao and Koloskova, Anastasia and Stich, Sebastian U},
  journal={Optimization Methods and Software},
  volume={40},
  number={5},
  pages={1153--1180},
  year={2025},
  publisher={Taylor \& Francis},
}

@article{alghunaim2024local,
  title={Local exact-diffusion for decentralized optimization and learning},
  author={Alghunaim, Sulaiman A},
  journal={IEEE Transactions on Automatic Control},
  volume={69},
  number={11},
  pages={7371--7386},
  year={2024},
  publisher={IEEE},
}

@article{ren2026communication,
  title={Communication-efficient stochastic distributed learning},
  author={Ren, Xiaoxing and Bastianello, Nicola and Johansson, Karl H and Parisini, Thomas},
  journal={IEEE Transactions on Automatic Control},
  year={2026},
  publisher={IEEE},
  volume={71},
  number={9},
  pages={5741--5756}
}

@article{zhang2026non,
  title={Non-convex composite federated learning with heterogeneous data},
  author={Zhang, Jiaojiao and Hu, Jiang and Johansson, Mikael},
  journal={Automatica},
  volume={183},
  pages={112695},
  year={2026},
  publisher={Elsevier},
}

@inproceedings{cheng2024momentum,
  title={Momentum benefits non-{IID} federated learning simply and provably},
  author={Cheng, Ziheng and Huang, Xinmeng and Wu, Pengfei and Yuan, Kun},
  booktitle={International Conference on Learning Representations ({ICLR})},
  volume={2024},
  pages={9815--9848},
  year={2024},
}

@article{huang2025distributed,
  title={Distributed stochastic momentum tracking with local updates: Achieving optimal communication complexities},
  author={Huang, Kun and Pu, Shi},
  journal={arXiv preprint arXiv:2510.24155},
  year={2025},
  url={https://arxiv.org/abs/2510.24155}
}

@article{huang2025beyond,
  title={Beyond {Scaffold}: A Unified Spatio-Temporal Gradient Tracking Method},
  author={Huang, Yan and Xu, Jinming and Chen, Jiming and Johansson, Karl Henrik},
  journal={arXiv preprint arXiv:2512.01732},
  year={2025},
  url={https://arxiv.org/abs/2512.01732}
}

@article{xin2020general,
  title={A general framework for decentralized optimization with first-order methods},
  author={Xin, Ran and Pu, Shi and Nedi{\'c}, Angelia and Khan, Usman A},
  journal={Proceedings of the IEEE},
  volume={108},
  number={11},
  pages={1869--1889},
  year={2020},
  publisher={IEEE},
}

@article{liang2025understanding,
  title={Understanding the influence of digraphs on decentralized optimization: Effective metrics, lower bound, and optimal algorithm},
  author={Liang, Liyuan and Huang, Xinmeng and Xin, Ran and Yuan, Kun},
  journal={SIAM Journal on Optimization},
  volume={35},
  number={3},
  pages={1570--1600},
  year={2025},
  publisher={SIAM},
}

@ARTICLE{li2025asymmetrically,
  author={Li, Qinglun and Zhang, Miao and Yin, Nan and Yin, Quanjun and Shen, Li and Cao, Xiaochun},
  journal={IEEE Transactions on Computers}, 
  title={Asymmetrically Decentralized Federated Learning}, 
  year={2025},
  volume={74},
  number={8},
  pages={2745--2756},
}

@article{chen2024accelerated,
  title={Accelerated distributed stochastic nonconvex optimization over time-varying directed networks},
  author={Chen, Yiyue and Hashemi, Abolfazl and Vikalo, Haris},
  journal={IEEE Transactions on Automatic Control},
  volume={70},
  number={4},
  pages={2196--2211},
  year={2025},
  publisher={IEEE},
}

@inproceedings{Dong2026,
  title={Accelerated push-pull methods for distributed stochastic nonconvex optimization},
  author={Dong, Sheng and Tao, Huijie and Xia, Weiguo},
  booktitle={IFAC World Congress},
  pages={},
  year={2026}
}

@article{you2026distributed,
  title={Distributed learning over arbitrary topology: Linear speed-up with polynomial transient time},
  author={You, Runze and Pu, Shi},
  journal={IEEE Transactions on Automatic Control},
  year={2026},
  publisher={IEEE}
}

@article{zhang2021fedpd,
  title={{FedPD}: A federated learning framework with adaptivity to non-{IID} data},
  author={Zhang, Xinwei and Hong, Mingyi and Dhople, Sairaj and Yin, Wotao and Liu, Yang},
  journal={IEEE Transactions on Signal Processing},
  volume={69},
  pages={6055--6070},
  year={2021},
  publisher={IEEE},
}

@inproceedings{mishchenko2022proxskip,
  title={{ProxSkip}: Yes! local gradient steps provably lead to communication acceleration! finally!},
  author={Mishchenko, Konstantin and Malinovsky, Grigory and Stich, Sebastian and Richt{\'a}rik, Peter},
  booktitle={International Conference on Machine Learning ({ICML})},
  pages={15750--15769},
  year={2022},
  volume={162},
}

@inproceedings{wang2025promise,
  title={From promise to practice: realizing high-performance decentralized training},
  author={Wang, Zesen and Zhang, Jiaojiao and Wu, Xuyang and Johansson, Mikael},
  booktitle={International Conference on Learning Representations ({ICLR})},
  volume={2025},
  pages={26204--26249},
  year={2025},
}

\end{document}